\documentclass[submission,copyright,creativecommons]{eptcs}
\providecommand{\event}{Event Name} 

\usepackage{iftex}

\ifpdf
  \usepackage{underscore}         
  \usepackage[T1]{fontenc}        
  \usepackage[utf8]{inputenc}
\else
  \usepackage{breakurl}           
\fi

\usepackage{amsmath,amsfonts,amssymb,mathtools}
\usepackage{graphicx}
\usepackage{multirow}
\usepackage{subcaption}

\usepackage[linesnumbered,ruled,vlined]{algorithm2e}
\usepackage[english]{babel}

\usepackage[noend]{algorithmic}
\usepackage{tabularx}
\usepackage[table]{xcolor}
\usepackage{booktabs}

\usepackage{microtype}

\newcommand{\complClFont}[1]{\mathrm{#1}}
\newcommand{\problemFont}[1]{\textsc{#1}}
\newcommand{\mathCommandFont}[1]{\mathrm{#1}}

\newif\iflong
\newif\ifshort

\iflong
\else
\shorttrue
\fi

\RequirePackage{amsthm}
\usepackage{url}

\newtheorem{definition}{Definition}[section]
\newtheorem{example}[definition]{Example}

\let\phi\varphi

\DeclareMathOperator\var{var}

\newcommand{\PMABD}{\protect\ensuremath\problemFont{ABD}}

\newcommand{\wsat}{\protect\ensuremath\problemFont{WSAT}(2\mathrm{-CNF}^-)}
\newcommand{\ABD}{\protect\ensuremath\problemFont{ABD}}
\newcommand{\pABD}{\protect\ensuremath\problemFont{p-ABD}}
\newcommand{\RepABD}{\protect\ensuremath\problemFont{RepABD}}
\newcommand{\RepABDsub}{\protect\ensuremath\problemFont{RepABD}_\subseteq}
\newcommand{\pRepABD}{\protect\ensuremath\problemFont{p-RepABD}}

\newcommand{\SAT}{\protect\ensuremath\problemFont{SAT}}

\newcommand{\KB}{\protect\ensuremath\text{KB}}

\newcommand{\cclone}[1]{\ensuremath{\langle #1 \rangle}}

\newcommand{\str}[1]{\ensuremath{\mathrm{Str}}}

\newcommand{\clos}[1]{\left\langle #1 \right\rangle}
\newcommand{\closneq}[1]{\left\langle #1 \right\rangle_{\neq}}

\newcommand{\cloneFont}[1]{\mathsf{#1}}
\newcommand{\BR}{\protect\ensuremath{\cloneFont{BR}}}
\newcommand{\II}{\protect\ensuremath{\cloneFont{II}}}
\newcommand{\IL}{\protect\ensuremath{\cloneFont{IL}}}
\newcommand{\IS}[2]{\protect\ensuremath{\cloneFont{IS}^{#1}_{#2}}}
\newcommand{\ID}{\protect\ensuremath{\cloneFont{ID}}}
\newcommand{\IN}{\protect\ensuremath{\cloneFont{IN}}}
\newcommand{\IV}{\protect\ensuremath{\cloneFont{IV}}}
\newcommand{\IE}{\protect\ensuremath{\cloneFont{IE}}}
\newcommand{\IM}{\protect\ensuremath{\cloneFont{IM}}}
\newcommand{\IR}{\protect\ensuremath{\cloneFont{IR}}}
\newcommand{\IBF}{\protect\ensuremath{\cloneFont{IBF}}}

\newcommand{\epos}{\protect\ensuremath{\cloneFont{EP}}}
\newcommand{\eneg}{\protect\ensuremath{\cloneFont{EN}}}
\newcommand{\pos}{\protect\ensuremath{\cloneFont{P}}}
\newcommand{\negg}{\protect\ensuremath{\cloneFont{N}}}
\newcommand{\Card}[1]{\ensuremath{|\,#1\,|}}

\usepackage{todonotes}
\usepackage{tcolorbox} 
  
 \newcommand{\problemDef}[3]
{%
    \begin{tcolorbox}[arc=0.1mm,boxsep=-0.6mm,left=1.9mm,right=1.9mm,bottom=1.4mm,top=1.4mm,adjusted title={\strut \sc#1},colback=white!5]
    \noindent\textbf{Given:} #2.
    
    \noindent\textbf{Task:} #3
    \end{tcolorbox}
}

\usepackage{xspace}

\newcommand{\para}{\protect\ensuremath{\complClFont{para}}} 
\newcommand{\FPT}{\protect\ensuremath{\complClFont{FPT}}} 
\newcommand{\NP}{\protect\ensuremath{\complClFont{NP}}\xspace}

\newcommand{\DP}{\protect\ensuremath{\complClFont{DP}}} 
 
\newcommand{\Ptime}{{\protect\ensuremath{\complClFont{P}}}\xspace}

\newcommand{\co}{\protect\ensuremath{\complClFont{co}}}

\newcommand{\AllExpl}{\ensuremath{\mathcal{E}}}
\newcommand{\MinExpl}{\ensuremath{\mathcal{E}_M}}

\renewcommand{\mid}{\;|\;}
\newcommand{\eqdef}{\coloneqq}

\let\oldpar\paragraph
\renewcommand{\paragraph}[1]{\oldpar{#1{}.}}

\SetKwInput{KwData}{In}
\SetKwInput{KwResult}{Out}
\SetAlFnt{\small}
\SetAlCapFnt{\small}
\SetAlCapNameFnt{\small}
\SetAlCapHSkip{0pt}
\SetEndCharOfAlgoLine{}

\makeatletter
\newcommand{\algorithmfootnote}[2][\footnotesize]{
  \let\old@algocf@finish\@algocf@finish
  \def\@algocf@finish{\old@algocf@finish
    \leavevmode\rlap{\begin{minipage}{\linewidth}
    #1#2
    \end{minipage}}
  }
}
\makeatother

\usepackage{thmtools}
\usepackage{thm-restate}
\usepackage{apptools}

\declaretheorem[name=Lemma,sibling=definition]{lemma}
\declaretheorem[name=Theorem,sibling=definition]{theorem}

\title{Representative Sets in Propositional Abduction\thanks{Author names are stated in reverse alphabetical order.}}

\author{Johannes Schmidt
\institute{Jönköping University}
\and
Mohamed Maizia
\institute{Jönköping University \\ Linköping University}
\and
Victor Lagerkvist
\institute{Linköping University}
\and
Johannes K. Fichte
\institute{Linköping University}
}

\def\titlerunning{Representative Sets in Propositional Abduction}
\def\authorrunning{J. Schmidt, M. Maizia, V. Lagerkvist \& J.K. Fichte}

\begin{document}
\maketitle

\begin{abstract}
The propositional abduction problem is a well-known form of non-monotonic reasoning where we are asked to find an explanation of a given manifestation. Recently, there has been an influx of results asking more refined questions about the solution space rather than only individual solutions. For example, we might be interested in finding two solutions that are sufficiently far from each other (\emph{diverse} solutions) in the solution space. In this paper we consider a related \emph{representation} question where we ask if a given set of explanations $S$ can represent any other explanation (that is, whether their symmetric difference is smaller than a given $k$). We first study this problem from a classical complexity perspective and obtain a complete classification. While only a handful of cases are tractable, the increase in complexity compared to classical abduction is often smaller than expected. We then study the parameterized complexity for several parameters and obtain new tractable and hard cases. Interestingly, a full parameterized complexity classification would require resolving the parameterized complexity of the \emph{covering radius} problem from coding theory. To the best of our knowledge, no useful relationship between coding theory and non-monotonic reasoning has previously been established, but such connections seemingly become important when asking more complex questions about solution spaces.

\vspace{1em}
\noindent\textbf{Keywords:} Propositional Abduction, Computational Complexity, Post's Framework, Fine-grained Reasoning
\end{abstract}

\section{Introduction} \label{sec:intro}
\iflong
The \emph{propositional abduction} problem is a well-known form of non-monotonic reasoning with many applications in e.g.\ AI and knowledge representation~\cite{Dellsen_2024,Wang-ZhouEtAl19,IgnatievNarodytskaMarques19,YuEtAl2023,HuEtAl2025,Kakas92,Minsky74}.
\fi
\ifshort
The \emph{propositional abduction} problem is a well-known form of non-monotonic reasoning with many applications in e.g,\ AI and knowledge representation~\cite{Dellsen_2024,Wang-ZhouEtAl19,IgnatievNarodytskaMarques19,YuEtAl2023,HuEtAl2025}.
\fi
Here, we are asked to explain a given \emph{manifestation}. The explanation thus needs to logically entail the manifestation, and, to avoid trivial explanations, be consistent with the given knowledge base. For example, consider a medical diagnosis setting where a patient may develop a certain symptom depending on underlying conditions.
We could then have propositional variables $a$ (the patient has a weakened immune system), $b$ (the patient has a bacterial infection), $c$ (the patient has a viral infection), $d$ (the patient is exposed to severe environmental stress), and $m$ (the patient develops a high fever).
The knowledge base could then encode the following medical rules:

\begin{itemize}
    \item If the patient has a weakened immune system and a bacterial infection, then they develop a high fever.
    \item If the patient has a weakened immune system and a viral infection, then they develop a high fever.
    \item If the patient has a weakened immune system and is exposed to severe environmental stress, then they develop a high fever.
    \item The patient cannot simultaneously have a bacterial and a viral infection.
\end{itemize}

Formally, we could represent this as $\KB = \{ %
        a \land b \rightarrow m,\;
        a \land c \rightarrow m,\; %
        a \land d \rightarrow m,\;
        \neg (b \land c)\}%
        $.
The manifestation is the observation that the patient has developed a high fever, i.e., $M = \{m\}$, and the set of hypotheses is $H = \{a,b,c,d\}$. Then, for example, $\{a,b\}$ (weakened immune system and bacterial infection) and $\{a,b,d\}$ (additionally, stress) are both possible explanations, but, unless there is further evidence, one may argue that $\{a,b\}$ is preferable to $\{a,b,d\}$ since it makes fewer assumptions. In this scenario, it seems desirable to consider \emph{all} minimal explanations, revealing that the fever can be explained by a weakened immune system combined with exactly one of the mutually exclusive infections, or with environmental stress.

As might be expected, computing/counting all (minimal) explanations is computationally expensive~\cite{HermannPichler10,CreignouEtAl19}, and even  deciding existence of just a single explanation is $\Sigma^P_2$-complete~\cite{EiterGotlob95}. 
Nevertheless, there has been many attempts to reason about the set of solutions, e.g.\ by identifying \emph{facets}~\cite{alrabbaa-et-al-rulemlpr2018,lagerkvist2025a,RusovacEtAl24}, and finding \emph{diverse} solutions (induced by a given distance metric between solutions). For example, diverse solutions have been considered for \emph{answer set programming}~\cite{eiter2013finding}, abduction~\cite{schmidt2025}, \emph{constraint satisfaction problems}~\cite{Hebrard2005}, satisfiability problems~\cite{MisraM024}, and a wealth of graph problems~\cite{Baste2020,Fomin2020,Fomin2021}. 

Inspired by the success of this approach and recent work on answer set programming~\cite{BohlGagglRusovac23}  we in this paper consider a related question: does there exist a set of explanations that \emph{represent} all (minimal) explanations? 
We formulate this as a decision problem and then qualify our problem with a parameter $k \geq 0$, a set of explanations $S$, and want to know if every explanation is within distance $k$ from at least one explanation in $S$.
In this case $S$ is called \emph{(k-)representative}.
Thus, unless $k$ is large, we expect the set $S$ to correlate with diverse solutions. 

We denote this problem by $\RepABD$, and the corresponding problem for representing subset minimal explanations, by $\RepABDsub$. If no assumptions are imposed on the knowledge base $\KB$ it is easy to show that both these problems are $\Pi^P_2$-complete, and we therefore attempt a more fine-grained picture of the complexity with restricted knowledge bases (e.g., whether it is in Horn, or in $2\text{-}CNF$). We write $\RepABD(\Gamma)$ ($\RepABDsub(\Gamma)$) for this problem where $\Gamma$ is a set of relations, and then require that the knowledge base is given by a conjunctive $\Gamma$-formula, i.e., each atom is of the form $R(x_1, \ldots, x_k)$ for $R \in \Gamma$ and variables $x_1, \ldots, x_k$. We formally introduce this problem in Section~\ref{sec:repr} together with a few useful definability notions. Then, for our main technical contributions, we (in Section~\ref{sec:classical}) classify the classical complexity of $\RepABD(\Gamma)$ and $\RepABDsub(\Gamma)$ for all possible choices of $\Gamma$. 

Our classification reveals that the two problems have few tractable cases. For example, if $\Gamma$ can express the ``trivial'' unary relation $\mathsf{t} = \{(0), (1)\}$ then $\RepABD(\Gamma)$ is at least coNP-hard. Surprisingly, $\RepABDsub(\Gamma)$ fares marginally better in comparison and we prove that it is in P if each relation is \emph{strictly essentially positive}, or the dual case of being \emph{strictly essentially negative}. However, it should be noted that tractability in this case stems from rather trivial reasons, and to extend the tractable fragments we (in Section~\ref{sec:para}) turn to \emph{parameterized complexity}. Here, we relax polynomial time to additionally allow a factor $f(p)$ where $p \in \mathbb{N}$ is a parameter (e.g., $|H|$, $|M|$, $k$, or a graph parameter of $\KB$) and $f \colon \mathbb{N} \to \mathbb{N}$ a computable function. A problem admitting such a running time is said to be \emph{fixed-parameter tractable} (FPT). The parameterized complexity of abduction is well understood for many natural parameters~\cite{MahmoodEtAl21,FellowsEtAl12} and admits non-trivial FPT cases, so there is reason for a certain optimism. We consider many different parameters ($k$, $|H|$, $|M|$, and $|S|$) and  establish FPT for $|H|$ if $\Gamma$ is \emph{Schaefer} and for $|S|$ if $\Gamma$ is strictly essentially positive. Importantly, we complement this with many lower bounds that rule out FPT under widely believed conjectures in parameterized complexity.

Interestingly, by considering such ``fine-grained'' questions about the set of explanations, we are able to make connections to problems previously unconnected to non-monotonic reasoning. For example, one of our main sources of hardness stems from the \emph{covering radius} problem~\cite{frances1997covering}, and one of our main FPT results are based on a reduction to the \emph{closest string problem}~\cite{GrammNR03}. 
As we show, a complete parameterized complexity classification of $\RepABD(\Gamma)$ would simultaneously need to resolve the parameterized complexity of the covering radius problem (with parameter $r$). We discuss this and other questions in Section~\ref{sec:conclusions}.

\ifshort
Due to space constraints the proof of statements marked with $\star$ have been omitted. 
\fi

\section{Preliminaries} \label{sec:preliminaries}
%

\newcommand{\BigO}[1]{\ensuremath{\mathcal{O}(#1)}}
\newcommand{\CCard}[1]{\ensuremath{||#1||}}
\newcommand{\preduction}{\leq^{\mathCommandFont{\Ptime}}_m}

We follow standard notions in 
computational complexity theory~\cite{downey2013fundamentals}, 
%
and propositional logic.
Below, we briefly state the most important notions.

\subsection{Computational Complexity}
%
%
\iflong
Let $\Sigma$ and $\Sigma'$ be some finite alphabets. We call $I
\in \Sigma^*$ an \emph{instance} and $\CCard{I}$ denotes the size of~$I$.  
A \emph{decision problem} is some subset~$L\subseteq \Sigma^*$. %
Recall that \Ptime{} and \NP are the complexity classes of all
deterministically and non-deterministically polynomial-time solvable
decision problems.
%
A polynomial-time many-to-one reduction~($\preduction$) 
from $L$ to $L'$ is a function $r : \Sigma^* \rightarrow {\Sigma'}^*$ such that for all $I \in \Sigma^*$ we have $I \in L$
if and only if $r(I) \in L'$ and $r$ is computable in time $\BigO{\CCard{I}\cdot c}$ for some constant~$c$. 
In other words, a polynomial-time many-to-one reduction transforms instances of the decision problem $L$ into instances of decision problem $L'$ in polynomial time.
We also need the Polynomial Hierarchy
(PH)~\cite{StockmeyerMeyer73,Stockmeyer76,Wrathall76}.
In particular, $\Delta^\Ptime_0 \eqdef \Pi^\Ptime_0 \eqdef
\Sigma^\Ptime_0 \eqdef \Ptime$ and $\Delta^\Ptime_{i+1} \eqdef
P^{\Sigma^p_{i}}$, $\Sigma^\Ptime_{i+1} \eqdef
\NP^{\Sigma^\Ptime_{i}}$, and $\Pi^\Ptime_{i+1} \eqdef
\text{co}\NP^{\Sigma^\Ptime_i}$ for $i>0$ where $C^{D}$ is the class~$C$ of
decision problems augmented by an oracle for some complete problem in
class $D$. For a decision problem $X \subseteq \Sigma^*$ we write $\overline{X}$ for the complement of $X$ (i.e., flipping yes- and no- answers).

We also introduce \emph{parameterized complexity} (following standard notation~\cite{downey2013fundamentals}).
For some alphabet $\Sigma$ a {\em parameterized problem} $L$ is then just a subset of $\Sigma^* \times \mathbb{N}$. 
The most desirable running time in parameterized complexity is a complete decoupling of the parameter, and we say that $L$ is \emph{fixed-parameter tractable} ($\mathrm{FPT}$)
if there is an algorithm deciding if $(I, k) \in \Sigma^* \times \mathbb{N}$ is in $L$
in time $f(k) \cdot \CCard{I}^c$, where
$f \colon \mathbb{N} \to \mathbb{N}$ is a computable function and
$c$ is a constant (independent of $I$ and $k$).

Parameterized complexity also contains a hardness theory, and to state it we first introduce a suitable class of reductions. For two parameterized problems $L, L' \subseteq \Sigma^* \times \mathbb{N}$,
a mapping $P \colon \Sigma^* \times \mathbb{N} \to \Sigma^* \times \mathbb{N}$ is a 
\emph{fixed-parameter ($\mathrm{FPT}$) reduction from $L$ to $L'$}
if there exist computable functions $f,p : \mathbb{N} \to \mathbb{N}$
and a constant $c$ such that the following conditions hold (for every
$(I,k) \in \Sigma^* \times \mathbb{N}$:
\begin{itemize}
  \item $(I,k) \in L$ if and only if $P(I,k) = (I',k') \in L'$,
  \item $k' \leq p(k)$, and
  \item $P(I,k)$ can be computed in $f(k) \cdot \CCard{I}^c$ time.
\end{itemize}

There are then several hard classes, the most notable being the \emph{weft hierarchy} $\mathrm{W}[1]\subseteq \mathrm{W}[2] \subseteq \ldots $. We only need a fragment of this hierarchy and say that a problem is {\em $\mathrm{W}[1]$-hard} if it admits an $\mathrm{FPT}$-reduction from \textsc{Independent Set}
(parameterized by the number of vertices in the independent set). Another important W[1]-hard problem not believed to be FPT is the \emph{weighted SAT} problem, i.e., we are given a formula $\varphi$ over $n$ variables and the question is whether there is a model of Hamming weight $\geq k$. We let $\wsat$ be the (still W[1]-hard) restriction of this problem to formulas in negative $2\text{-}CNF$.

Beyond the $W$-hierarchy, we consider parameterized versions of classical complexity classes. For a classical complexity class $C$,
we let $\para \text{-} C$ be the class of all parameterized problems $P\subseteq\Sigma^*\times\mathbb N$ such that there is a computable function $f \colon\mathbb N\to\Delta^*$ and a language $L\in C$ with $L\subseteq\Sigma^*\times\Delta^*$ such that for all $(x,k)\in\Sigma^*\times\mathbb N$ we have that $(x,k)\in P \Leftrightarrow (x,f(k))\in L$. This allows us to, for example, speak of $\para \text{-}\NP$ $\para \text{-}\co\NP$, $\para \text{-}\DP$, and $\para \text{-}\co\DP$.
\fi

\ifshort
Let $\Sigma$ and $\Sigma'$ be finite alphabets. An \emph{instance} is a string $I \in \Sigma^*$ and $\CCard{I}$ denotes its size. A \emph{decision problem} is a language $L \subseteq \Sigma^*$. Recall that $\Ptime{}$ and $\NP$ are the classes of deterministically and non-deterministically polynomial-time solvable decision problems~\cite{Cook71}. A polynomial-time many-to-one reduction~($\preduction$) from $L$ to $L'$ is a function $r : \Sigma^* \rightarrow {\Sigma'}^*$ such that $I \in L$ if and only if $r(I) \in L'$ and $r$ is computable in time $\BigO{\CCard{I}^c}$ for some constant~$c$. We also use the Polynomial Hierarchy (PH)
where $\Delta^\Ptime_0 \eqdef \Pi^\Ptime_0 \eqdef \Sigma^\Ptime_0 \eqdef \Ptime$, $\Delta^\Ptime_{i+1} \eqdef P^{\Sigma^\Ptime_i}$, $\Sigma^\Ptime_{i+1} \eqdef \NP^{\Sigma^\Ptime_i}$, and $\Pi^\Ptime_{i+1} \eqdef \text{co}\NP^{\Sigma^\Ptime_i}$ for $i>0$. For a decision problem $X$, we write $\overline{X}$ for its complement.

A \emph{parameterized problem} is a set $L \subseteq \Sigma^* \times \mathbb{N}$. It is \emph{fixed-parameter tractable} ($\mathrm{FPT}$) if membership of $(I,k)$ can be decided in time $f(k)\cdot \CCard{I}^c$ for a computable function $f$ and a constant~$c$. A \emph{fixed-parameter ($\mathrm{FPT}$) reduction} from $L$ to $L'$ is a mapping $P$ (with respect to computable $f,p : \mathbb{N} \to \mathbb{N}$
and a constant $c$) 
where we for any
$(I,k) \in \Sigma^* \times \mathbb{N}$) have
(1) $(I,k) \in L$ if and only if $P(I,k) = (I',k') \in L'$, (2) $k' \leq p(k)$, and (3) $P(I,k)$ can be computed in $f(k) \cdot \CCard{I}^c$ time.
A problem is \emph{$\mathrm{W}[1]$-hard} if it admits an $\mathrm{FPT}$-reduction from \textsc{Independent Set}. Another important $\mathrm{W}[1]$-hard problem is \emph{weighted SAT} where we ask if a given propositional formula $\phi$ has a model of weight $\geq k$. We let $\wsat$ denote its restriction to negative $2\text{-}CNF$. Finally, for a classical complexity class $C$, we define $\para\text{-}C$ as the class of all parameterized problems reducible to a language in $C$ via a computable parameter transformation, allowing us to speak of classes such as $\para\text{-}\NP$, $\para\text{-}\co\NP$, $\para\text{-}\DP$, and $\para\text{-}\co\DP$.
\fi

\paragraph{Propositional Logic}
A \emph{literal} is a variable $x$ or its negation $\neg x$. 
A \emph{clause} is a disjunction of literals, often represented as a set. A clause of arity 1, i.e., either $(x)$ or $(\neg x)$, is a {\em unit clause}.
We work in a general setting where atoms can be expressions of the form $R(x_1,  \ldots, x_r)$ for variables $x_1, \ldots, x_r$ and an $r$-ary relation $R \subseteq \{0,1\}^r$. A function $f \colon \{x_1, \ldots, x_r\} \to \{0,1\}$ is then said to satisfy an atom $R(x_1, \ldots, x_r)$ if $(f(x_1), \ldots, f(x_r)) \in R$. A (conjunctive) {\em propositional formula} $\varphi$ is a conjunction of atoms and we
write $\var(\varphi)$ for its set of variables.
A mapping $\sigma\colon \var(\varphi) \mapsto \{0,1\}$ is called an \emph{assignment} to the variables of~$\varphi$ and
a \emph{model} of a formula $\varphi$ is an assignment to $\var(\varphi)$ that satisfies $\varphi$.
For two formulas $\psi$ and $\varphi$, we write $\psi  \models \varphi$ if every model of $\psi$ also satisfies $\varphi$.

\subsection{Restrictions of Constraint Languages}

We work in a generalized setting where atoms can be formed by combining relations and variables. Then, a {\em constraint language} $\Gamma$ is a set of Boolean relations, and a {\em $\Gamma$-formula} over some variables $V$ is a propositional formula $\varphi$ where $R \in \Gamma$ and $x_1, \ldots, x_r \in V$ for each atom in the formula $R(x_1, \ldots, x_r)$. For a constraint language $\Gamma$, we write $\SAT(\Gamma)$ for the problem of deciding if a given $\Gamma$-formula admits at least one model.
%
%
%
%
Usually, we do not 
distinguish between the relation or a clause defining the relation. For example, we simply write $(x)$ for the unary relation $\{(1)\}$, $(\neg x)$ for $\{(0)\}$, $(x_1 \rightarrow x_2)$ or $(\neg x_1 \lor x_2)$ for $\{(0,0), (0,1), (1,1)\}$, and so on. The empty set $\emptyset$ is the (nullary) relation that is always false, we write $R_=$ for the equality relation $\{(0,0), (1,1)\}$ (but often written in infix form as $(x = y)$ instead of $R_=(x, y)$), and $\mathsf{t}$ for the trivial unary relation $\{(0), (1)\}$ that is always true.

\begin{table*}
{\scriptsize
  \centering
  \rowcolors{2}{gray!25}{white}
  \resizebox{\linewidth}{!}{%
    \begin{tabular}{lll}\toprule
      co-clone        & clauses/equation                                                                                            & name/indication                             \\\midrule
      $\BR$ ($\II_2$) & all clauses                                                                                            & all Boolean relations                       \\
      $\II_0$         & at least one negative literal per clause                                                               & 0-valid                                     \\
      %
      %
      $\IN_2$         & NAE = $\{0,1\}^3 \setminus \{000,111\}$                                                                                    & complementive                               \\ 
      $\IN$           & DUP = $\{0,1\}^3 \setminus \{101, 010\}$                                                                                    & complementive and 1- and 0-valid            \\
      $\IE_2$         & clauses with at most one positive literal                                                              & Horn                                        \\
      %
      %
      $\IV_2$         & clauses with at most one negative literal                                                              & dualHorn                                    \\
      %
      %
$\IL_2$               & all affine clauses (all linear equations)                                                              & affine                                      \\
$\IL$                 & $(x_1 \oplus \dots \oplus x_n = 0)$, $n$ even                                                          & affine and 1- and 0-valid                   \\
$\ID_2$               & clauses of size 1 or 2                                                                                 & Krom, bijunctive, $2\text{-}CNF$                      \\
$\ID$                 & affine clauses of size 2                                                                               & strict 2-affine                             \\
%
%
$\IM$                 & $(x_1 \rightarrow x_2)$                                                                                & implicative and 1- and 0-valid              \\
%
%
$\IS{}{12}$           & $(x_1), (\neg x_1 \lor \dots \lor \neg x_n), n \geq 0, (x_1 = x_2)$                                    & essentially negative ($\eneg$)              \\
$\IS{k}{11}$          & $(x_1 \rightarrow x_2), (\neg x_1 \lor \dots \lor \neg x_n), k \geq n \geq 0$                          & -                                           \\
$\IS{k}{1}$           & $(\neg x_1 \lor \dots \lor \neg x_n), k \geq n \geq 0, (x_1 = x_2)$                                    & negative ($\negg$) of width $k$             \\

$\IS{}{02}$           & $(\neg x_1), (x_1 \lor \dots \lor x_n), n \geq 0, (x_1 = x_2)$                                         & essentially positive ($\epos$)              \\

$\IR_1$               & $(x_1), (x_1 = x_2)$                                                                                   & -                                           \\
$\IR_0$               & $(\neg x_1), (x_1 = x_2)$                                                                              & -                                           \\
$\IR$ ($\IBF$)        & $(x_1 = x_2)$                                                                                          & -                                           \\\bottomrule
\end{tabular}}	
\caption{Overview of some co-clones and clause descriptions \cite{NordhZ08}.}%
\label{tab:small_bases}
}
\end{table*}

For a constraint language $\Gamma$ and $k \geq 1$, we often use the notation $k$-$\Gamma$ for the set of 
relations/clauses of arity at most $k$ (e.g., $2\text{-}CNF$ contains all clauses of arity 1 and 2). 
Additionally, for a language $\Gamma$ we, let (1) 
$\Gamma^- = \Gamma \setminus \{(x), (\neg x)\}$ be $\Gamma$ without the two unit clauses, and (2) 
$\Gamma^+ = \Gamma \cup \{(x), (\neg x)\}$ be $\Gamma$ expanded with the two unit clauses. 
A language $\Gamma$ 
is {\em $b$-valid} for $b \in \{0,1\}$, if $(b, \ldots, b)\in R$ 
for 
%
each $R \in \Gamma$.
%
We introduce the most important constraint languages for 
the purpose of this paper in Table~\ref{tab:small_bases}. We also write $\text{EP}^-$ for the set of \emph{strictly essentially positive clauses} $\text{EP} \setminus \{(x = y)\}$ and $\text{EN}^- = \text{EN} \setminus \{(x = y)\}$ for the set of \emph{strictly essentially negative clauses}. 
To avoid doing an
exhaustive case analysis of {\em all} possible constraint languages we introduce a useful closure property on relations. 
Say that an $r$-ary 
relation $R$ has a \emph{primitive positive definition} (pp-definition) over  
$\Gamma$ if \\[-1.0em]
$$R(x_1, \dots, x_r) := \exists y_1, \dots, y_n\, .\, \varphi(x_1, \ldots, x_r, y_1, \ldots, y_n)$$ 
\noindent
where $\varphi$ is a $(\Gamma \cup \{\emptyset, \mathsf{t}, (x = y)\})$-formula. Thus, put otherwise, $R$ can be defined as the set of models of $\exists y_1, \ldots, y_n\, .\, \varphi(x_1, \ldots, x_r, y_1, \ldots, y_n)$ with respect to the free variables $x_1, \ldots, x_r$. 


\begin{definition}
For a constraint language $\Gamma$ we let $\cclone{\Gamma}$ be the smallest set of relations containing $\Gamma$ and where $R \in \Gamma$ for any pp-definable relation $R$ over $\Gamma$.
\end{definition}  
The set $\Gamma$ is in this context said to be a \emph{base}, and $\cclone{\Gamma}$ is sometimes called a \emph{relational clone}, or a \emph{co-clone}. 
For details, we refer to the work by~\cite{BohlerRSV05}. 
We note that the complexity of $\SAT(\Gamma)$ is completely determined due to Schaefer's famous dichotomy result: it is polynomial
if $\Gamma$ is 0- or 1-valid or \emph{Schaefer} (that is, $\Gamma$ is Horn, or dualHorn, or affine, or $2\text{-}CNF$) and $\NP$-complete otherwise \cite{Schaefer78}.

\subsection{Propositional Abduction}
%
%
%
%
%
Let $\Gamma$ be a constraint language, for example, a set of clauses.
An instance~$I$ of the \emph{positive propositional} abduction
problem over $\Gamma$, $\PMABD(\Gamma)$ for short, 
is a tuple~$I=(\KB, H, M)$ 
with $\KB$ being a $\Gamma$-formula over a finite set of Boolean variables called the \emph{knowledge base} (or \emph{theory}), $H \subseteq \var(\KB)$ called
\emph{hypotheses}, $M\subseteq \var(\KB)$ called
\emph{manifestations}. 
%
%
%
Since we have defined a $\Gamma$-formula as a conjunctive formula with atoms from $\Gamma$ we sometimes take the liberty of viewing the knowledge base as a set rather than as a formula.
%
%
%
%
%
%
A \emph{positive explanation}~$E$, \emph{explanation} for short, is a subset~$E\subseteq H$ such that (i)~$\KB \land E$ is satisfiable and (ii)~$\KB \land E \models M$.
%
%
%
%
%
An explanation~$E$ is \emph{(subset-)minimal} if no other set~$E' \subsetneq E$ is an explanation of~$I$.

The problem~$\PMABD(\Gamma)$ 
asks whether there is an explanation, which in the decision context is the same as asking whether there is a minimal explanation.
If $\Gamma$ is arbitrary, we omit $\Gamma$ from the problem and
write $\PMABD$. Note that the complexity of $\PMABD(\Gamma)$ is completely determined \cite{NordhZ08}.
\iflong
See Figure~\ref{fig:pabd} for a visualization.

\begin{figure*}[htp]
    \centering
       \includegraphics[width=\textwidth]{schaefers-lattice-ABD.eps}
    \caption{Complexity of propositional abduction from \cite{NordhZ08}, illustrated on Post's lattice.}
    \label{fig:pabd}
\end{figure*}
\fi
%

%
We write $\AllExpl(I)$ to refer to the set of all explanations and $\MinExpl(I)$ for the set of all subset-minimal explanations.
%
%
%
%
%
%
%




    \newcommand{\lwed}{\textbf{\underline{w}ednesday}}
    \newcommand{\swed}{w}
    \newcommand{\lcalm}{\textbf{\underline{c}alm}}
    \newcommand{\scalm}{c}
    \newcommand{\lrace}{\textbf{\underline{n}o-race}}
    \newcommand{\srace}{n}
    \newcommand{\lstorm}{\textbf{\underline{s}torm}}
    \newcommand{\sstorm}{s}
    \newcommand{\lrain}{\textbf{\underline{r}aining}}
    \newcommand{\srain}{r}

     \begin{example}\label{ex:running}
        Consider the abduction example from Section~\ref{sec:intro}
        where  $I=(\KB, H,M)$
        with  
        %
        $\KB = \{ %
        a \land b \rightarrow m,\;
        a \land c \rightarrow m,\; %
        a \land d \rightarrow m,\;
        \neg (b \land c)\}%
        $, 
         manifestation~$m$, and the hypotheses~$\{a,b,c,d\}$. 
        Then 
        $\MinExpl(I) = \{\{a,b\}, \{a,c\},\{a,d\}\}$ and 
        $\AllExpl(I) = \{\{a,b\}, \{a,c\},\{a,d\}, \{a,b,d\}, \{a,c,d\}\}$.
\end{example}

\section{The Representative Explanation Problem} \label{sec:repr}

We begin the technical part of the paper by formally introducing our problem as well as the simplifying algebraic machinery.
\iflong
Let $I=(\KB, H, M)$ be an $\PMABD$ instance. For a set $E \subseteq H$, define $S_k(E)$ as the set of all explanations within Hamming distance $k$, i.e.,
\begin{align*}
S_k(E) &= \{E_1 \in \AllExpl(I) \mid d(E,E_1) \leq k\},
\end{align*}
where $d(\cdot,\cdot)$ denotes the symmetric difference, i.e., 
\begin{align*}
d(E_1, E_2) = &\, \Card{E_1 \triangle E_2} = \Card{(E_1 \cup E_2) \setminus (E_1 \cap E_2)} \\
=&\, \Card{\{x \in H \mid x \in E_1 \text{ and } x \notin E_2 \text,{ or } x \notin E_1 \text{ and } x \in E_2\}}.
\end{align*}
\fi
\ifshort
Let $I=(\KB, H, M)$ be an $\PMABD$ instance. For a set $E \subseteq H$, define 
$S_k(E) = \{E_1 \in \AllExpl(I) \mid d(E,E_1) \leq k\}$
where $d(\cdot,\cdot)$ denotes the symmetric difference, i.e., 
$d(E_1, E_2) = \Card{E_1 \triangle E_2} = \Card{(E_1 \cup E_2) \setminus (E_1 \cap E_2)} 
= \Card{\{x \in H \mid x \in E_1 \text{ and } x \notin E_2 \text{ or } x \notin E_1 \text{ and } x \in E_2\}}$, i.e., $S_k(E)$ is the set of explanations within Hamming distance $k$ of $E$.
\fi
For a fixed $k$, an explanation $E \in \AllExpl(I)$ is called \emph{k-representative}, if
$|S_k(E)|$ is maximal. That is, such an $E$ represents maximally many explanations.
We extend this notion to sets of explanations as follows:
for a fixed $k$, a set of explanations $S \subseteq \AllExpl(I)$ is called \emph{$k$-representative}, if
\iflong
$$\AllExpl(I) = \bigcup_{E \in S} S_k(E).$$
\fi
\ifshort
$\AllExpl(I) = \bigcup_{E \in S} S_k(E).$
\fi
%

We then consider the following problem where the task is to verify if a set of explanations is $k$-representative or not.

\problemDef{$\RepABD(\Gamma)$}{An $\PMABD(\Gamma)$ instance $I = (\KB, H, M)$, a set $S \subseteq \AllExpl(I)$, and $k \geq 1$}{Is $S$ $k$-representative for $I$?}

We also consider $\RepABD(\Gamma)$ restricted to $\subseteq$-minimal explanations (given now $S \subseteq \MinExpl(I)$ check if $\MinExpl(I) = \bigcup_{E \in S} S_k^M(E)$, where $S_k^M(E) = \{E_1 \in \MinExpl(I) \mid d(E,E_1) \leq k\}$) and write $\RepABDsub(\Gamma)$ for this variant. 

\begin{example}
We continue with Example~\ref{ex:running} where we had    $\KB = \{ %
        a \land b \rightarrow m,\;
        a \land c \rightarrow m,\; %
        a \land d \rightarrow m,\;
        \neg (b \land c)\}%
        $,
        $M = \{m\}$, $H = \{a,b,c,d\}$, and      $\AllExpl(I) = \{\{a,b\}, \{a,c\},\{a,d\}, \{a,b,d\}, \{a,c,d\}\}$.
        Then $S_1 = \{\{a,b\}\}$ is not 2-representative, since $d(\{a,b\}, \{a,c,d\}) = 3$. But $S_2 = \{\{a,d\}\}$ is 2-representative, since $d(\{a,d\}, E) \leq 2$ for all $E \in \AllExpl(I)$, as is easily verified.
\end{example}
\iflong
\begin{example}
        We also consider an example that illustrates the difference between minimal explanations. Here, $\KB = \{ %
        a \land b \rightarrow m,\;
        a \land c \rightarrow m,\; %
        b \land d \rightarrow m\}%
        $,
        with manifestation $m$ and hypothesis $\{a,b,c,d\}$.\\
        Then $\MinExpl(I) = \{ \{a,b\}, \{a,c\}, \{b,d\}\}$,
        $S_1 = \{\{a,c\}\}$ is not 2-representative, since $d(\{a,c\}, \{b,d\}) = 4$,
        but $S_2=\{\{a,b\}\}$ is 2-representative, since $d(\{a,b\}, \{a,c\}) = 2$
        and $d(\{a,b\}, \{b,d\}) = 2$.
\end{example}
\fi


Before turning to the complexity of $\RepABD(\Gamma)$ and $\RepABDsub(\Gamma)$ we show how to apply the algebraic approach --- with a minor modification. First, we say that a pp-definition $R(x_1, \dots, x_r) := \exists y_1, \dots, y_n\, .\, \varphi(x_1, \ldots, x_r, y_1, \ldots, y_n)$ is \emph{equality-free} (efpp) if each atom in $\varphi$ is from $\Gamma \cup \{\emptyset\}$, i.e., we do not allow (1) the equality relation, or (2) the full relation $\mathsf{t}$. We let $\closneq{\Gamma}$ be the smallest set of relations containing $\Gamma$ closed under such definitions. We have the following basic characterization (where a relation $R$ is said to be constant if $|R| = 1$).

Before stating and proving the lemma we need a few additional preliminaries. For a tuple $t = (x_1, \ldots, x_i, \ldots, x_n) \in \{0,1\}^n$ and $i \in [n] = \{1, \ldots, n\}$ we write $t[i] = x_i$ for the $i$th component. For $R \subseteq \{0,1\}^n$ of arity $n$  we say that $i \in [n]$ is \emph{fictitious} if $(x_1, \ldots, x_{i-1}, x_i, x_{i+1}, \ldots, x_n) \in R$ if and only if we have $(x_1, \ldots, x_{i-1}, \bar{x}_i, x_{i+1}, \ldots, x_n) \in R$, and it is \emph{redundant} if there exists $j \in [n]$, $j \neq i$, such that  $t[i] = t[j]$ for any $t \in R$. Furthermore, say that $R$ is \emph{irredundant} if it has no redundant coordinates, and that it is \emph{prime} if it is irredundant and has no fictive coordinates. We can now relate these notions to efpp-definability as follows.

\begin{lemma}($\star$) \label{lemma:can_define}
    Let $\Gamma$ be a set of Boolean relations and let $R \in \clos{\Gamma}$. If
    \begin{enumerate}
        \item 
        $R$ is prime then $R \in \closneq{\Gamma}$,
        \item 
        $R$ is irredundant then $R \in \closneq{\Gamma \cup \{\mathsf{t}\}}$, and
        \item
        otherwise
        $R \in \closneq{\Gamma \cup \{R_=\}}$.
    \end{enumerate}
\end{lemma}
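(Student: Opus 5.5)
Start from a pp-definition $R(x_1,\dots,x_r) \equiv \exists y_1,\dots,y_n\,.\,\varphi$, where $\varphi$ is a $(\Gamma\cup\{\emptyset,\mathsf{t},R_=\})$-formula; such a definition exists because $R\in\clos{\Gamma}$. The plan is to rewrite $\varphi$ so that each disallowed atom is removed, using only what the relevant case permits. Case 3 is immediate, since every atom is then in $\Gamma\cup\{R_=,\emptyset\}$ except for $\mathsf{t}$-atoms. A $\mathsf{t}$-atom on a variable that occurs elsewhere in $\varphi$ can be deleted. A $\mathsf{t}$-atom on a variable that occurs nowhere else means that coordinate is fictitious. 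In that case we can write it with an existential $y$ and an atom $R_=(x,y)$: this forces $x=y$ and leaves $x$ free, so it defines the same relation. Hence $R\in\closneq{\Gamma\cup\{R_=\}}$.

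The main step is eliminating equality atoms. First, any equality atom $y_i=y_j$ or $x_i=y_j$ that involves an existential variable is removed by identifying the two variables, i.e.\ substituting one for the other throughout $\varphi$. This is harmless: the existential variable is replaced by a variable already in scope, and substitution inside $\Gamma$-atoms is allowed because we may repeat variables in atoms. Next, consider an equality atom $x_i=x_j$ between distinct free variables. It forces $t[i]=t[j]$ for all $t\in R$, so coordinate $i$ is redundant. Under the irredundance hypothesis of Cases 1 and 2 this cannot occur. A subtle point is transitivity: an implied equality $x_i=x_j$ arising through a chain of existential variables is also excluded by irredundance. So after collapsing equivalence classes, each class contains at most one free variable.

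One wrinkle remains. A free variable $x_i$ may, after the substitutions, occur in no atom at all, either because its only atoms were equalities or $\mathsf{t}$. Then coordinate $i$ is unconstrained, hence fictitious. In Case 2 we add the atom $\mathsf{t}(x_i)$, which is allowed since we work in $\closneq{\Gamma\cup\{\mathsf{t}\}}$, and delete all other $\mathsf{t}$-atoms as before. In Case 1, primeness excludes fictitious coordinates, so every free variable occurs in some $\Gamma$-atom and all $\mathsf{t}$-atoms can be deleted.

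I expect the subtle step to be verifying that "occurs in no atom" really is the only way a fictitious coordinate can force us to use $\mathsf{t}$. Conversely, a variable occurring in some $\Gamma$-atom needs no $\mathsf{t}$ even if the coordinate happens to be fictitious, so no problem arises there. One must also treat the degenerate case $R=\emptyset$, which is handled by the $\emptyset$ atom. Finally, one must check that the identification step preserves the defined relation exactly, by projecting out the existentials and comparing tuples.
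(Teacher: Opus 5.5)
Your proposal is correct and follows essentially the same route as the paper's proof. Both start from a pp-definition and eliminate the $\mathsf{t}$- and $R_=$-atoms case by case: equalities involving existential variables are collapsed, irredundance rules out equalities between free variables, primeness rules out free variables left in no atom, and Case 3 is immediate. Your version is in fact more careful than the paper's sketch, since it also handles mixed atoms $x_i = y_j$ and equalities implied through chains of existential variables, and it spells out Case 2 by adding $\mathsf{t}(x_i)$ exactly for the coordinates that end up unconstrained.
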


\iflong
\begin{proof}
    We consider each case in turn. First, if $R$ is prime, then any pp-definition $R(x_1, \ldots, x_n) := \exists y_1, \ldots, y_m . \phi(x_1, \ldots, x_n, y_1, \ldots, y_m)$ can be rewritten into an efpp-definition by considering the following cases for an atom in $\phi$ over $\mathsf{t}$ or $R_=$, where $i \neq j$ are distinct indices over $[n]$ or $[m]$. 
    \begin{enumerate}
        \item 
        $\mathsf{t}(x_i)$ can be removed (and $x_i$ must then occur elsewhere, since otherwise it would witness that $R$ has a fictitious argument),
        \item 
        $\mathsf{t}(y_i)$ can be removed, and if $y_i$ does not occur anywhere else the existential quantifier $\exists y_i$ can be dropped,
        \item 
        $R_{=}(x_i, x_j)$ cannot occur (this would witness that $R$ is not prime),
        \item 
        $R_{=}(y_i, y_j)$ can be removed by taking all such existentially quantified variables and order them into equivalence classes, picking one representative for each set and replace all other occurrences with this representative, and, last, by removing all such equality constraints. For a more formal treatment, cf.~\cite{lagerkvist2020a}.
    \end{enumerate}
    The second case can be proved with a similar case analysis, and the third case follows trivially. 
\end{proof}
\fi

This in turn leads to the following classification of efpp-closed sets.
\begin{lemma} \label{lemma:weak_def}
Let $\Gamma$ be a set of Boolean relations. If
\begin{enumerate}
\item if $R_= \in \closneq{\Gamma}$ then $\mathsf{t} \in \closneq{\Gamma}$,
    \item if $R_= \in \closneq{\Gamma}$ then $\closneq{\Gamma} = \cclone{\Gamma}$, 
    \item 
    if $\Gamma$ contains a non-constant relation then 
    $\closneq{\Gamma} = \closneq{\Gamma \cup \{\mathsf{t}\}}$, and
     \item 
     if $\sf{t} \notin \closneq{\Gamma}$ then 
     $\closneq{\Gamma} \subsetneq \closneq{\Gamma \cup \{\mathsf{t}\}} \subsetneq \closneq{\Gamma \cup \{R_=\}} = \clos{\Gamma}$, \\and 
    $\closneq{\Delta} \in \{\closneq{\Gamma}, \closneq{\Gamma \cup \mathsf{t}}, \closneq{\Gamma \cup \{R_=\}}\}$ for any $\Delta$ such that $\clos{\Gamma} = \clos{\Delta}$.
\end{enumerate}
\end{lemma}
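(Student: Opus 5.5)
The plan is to prove the four items in order, each by writing an explicit efpp-definition. Item 3 is what drives the case split in item 4.

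For item 1, assume $R_= \in \closneq{\Gamma}$. Then $R_=$ may be used as an atom in an efpp-definition, and $\mathsf{t}(x) := \exists y\,.\,R_=(x,y)$ defines $\mathsf{t}$. For item 2, items 1 and the hypothesis give $\mathsf{t}, R_= \in \closneq{\Gamma}$. So every atom of an arbitrary pp-definition over $\Gamma$, which is a $(\Gamma \cup \{\emptyset,\mathsf{t},R_=\})$-formula, is available in $\closneq{\Gamma}$. Hence $\cclone{\Gamma} \subseteq \closneq{\Gamma}$, and the converse inclusion is immediate. For item 3, let $R \in \Gamma$ be non-constant of arity $n$. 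Two of its tuples differ in some coordinate $i$. Quantifying existentially over fresh variables in all other positions, $\exists y_1,\dots,y_{i-1},y_{i+1},\dots,y_n\,.\,R(y_1,\dots,y_{i-1},x,y_{i+1},\dots,y_n)$ projects $R$ onto coordinate $i$, which is $\{(0),(1)\} = \mathsf{t}$. Thus $\mathsf{t} \in \closneq{\Gamma}$, and $\closneq{\Gamma} = \closneq{\Gamma \cup \{\mathsf{t}\}}$. (A non-empty nullary relation is the only non-constant-free edge case; I would note that it adds nothing.)

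For item 4, assume $\mathsf{t} \notin \closneq{\Gamma}$. By item 3, every relation in $\Gamma$ is constant or empty. The first strict inclusion is witnessed by $\mathsf{t}$ itself. For the second, I would prove by induction on efpp-definitions that every relation in $\closneq{\Gamma \cup \{\mathsf{t}\}}$ is empty or a \emph{product} relation, i.e., each coordinate is either fixed to a constant or ranges freely and independently. A conjunction of constant atoms (possibly with repeated variables) and $\mathsf{t}$-atoms has as models either nothing or the assignments that fix some variables and leave the rest free. Projection preserves this form. Since $R_=$ is not a product relation, $R_= \notin \closneq{\Gamma \cup \{\mathsf{t}\}}$. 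The final equality $\closneq{\Gamma \cup \{R_=\}} = \clos{\Gamma}$ follows from item 2 applied to $\Gamma \cup \{R_=\}$, since $\clos{\Gamma \cup \{R_=\}} = \clos{\Gamma}$.

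The last claim, about an arbitrary $\Delta$ with $\clos{\Delta} = \clos{\Gamma}$, is where I expect the main work. I would split into three cases.
\begin{enumerate}
\item If $R_= \in \closneq{\Delta}$, then item 2 gives $\closneq{\Delta} = \clos{\Gamma} = \closneq{\Gamma\cup\{R_=\}}$.
\item If $\Delta$ contains a non-constant relation but $R_= \notin \closneq{\Delta}$, then item 3 gives $\mathsf{t} \in \closneq{\Delta}$. I would show $\closneq{\Delta} = \closneq{\Gamma \cup \{\mathsf{t}\}}$ by efpp-defining each side's generators from the other.
\item If all relations of $\Delta$ are constant, I would show $\closneq{\Delta} = \closneq{\Gamma}$ in the same way.
\end{enumerate}
The inter-definability in cases 2 and 3 rests on one observation: since $\clos{\Gamma}$ is generated by constant relations, the unary constants $\{(0)\}$ and $\{(1)\}$ lying in $\clos{\Gamma}$ are exactly those obtainable as projections of relations in $\Gamma$, equivalently of relations in $\Delta$. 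A constant relation, or a product relation once $\mathsf{t}$ is present, is then efpp-definable as a conjunction of such unary constants (and $\mathsf{t}$) on distinct variables. Care is needed in case 2 to check that the relations of $\Delta$ really are product relations, and not relations involving equalities. This holds because $R_= \notin \closneq{\Delta}$: a relation with two coordinates forced equal and non-constant would project onto $R_=$.
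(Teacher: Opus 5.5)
Your proof is correct. Items 1--3 match the paper's argument; the paper defines $\mathsf{t}$ by $(x=x)$ rather than by projecting $R_=$, which is immaterial. Item 4, however, takes a genuinely different route.

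\textbf{Second strict inclusion.} For $R_=\notin\closneq{\Gamma\cup\{\mathsf{t}\}}$ the paper argues by contradiction through Lemma~\ref{lemma:can_define}. Since $R_=$ has no fictitious coordinate, $\mathsf{t}$-atoms can be stripped from an efpp-definition, giving $R_=\in\closneq{\Gamma}$ and then $\mathsf{t}\in\closneq{\Gamma}$ by item 1. You instead exhibit an efpp-closed invariant (empty or product relations) that excludes $R_=$.

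\textbf{The claim about $\Delta$.} The paper sorts the relations of $\Delta$ into prime, irredundant or other, and again cites Lemma~\ref{lemma:can_define}. You use the explicit block structure of relations in a co-clone generated by constant relations, and prove mutual definability through the unary constants.

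\textbf{Comparison.} The paper's route is shorter and reuses general machinery, but it is loose exactly where yours is careful. First, $R_=$ is not prime (its two coordinates are redundant), so only the \emph{drop the $\mathsf{t}$-atoms} part of the proof of Lemma~\ref{lemma:can_define} actually applies. Second, the prime/irredundant split misclassifies relations such as $\{(1,1)\}$. That relation is redundant in the paper's sense, yet $\closneq{\{\{(1,1)\}\}}=\closneq{\{\{(1)\}\}}$ does not contain $R_=$. Your criterion (two coordinates forced equal \emph{and} non-constant) is the right one. Your argument is self-contained, at the price of needing the structural description of $\clos{\Gamma}$.

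\textbf{Two small repairs.}

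Your parenthetical on item 3 is off. A nonempty nullary relation has exactly one tuple and is therefore constant. The genuine edge case is an empty relation of positive arity: it is non-constant under the definition $|R|\neq 1$ but has no two tuples to separate. So item 3, in your proof and in the paper's, must be read with \emph{non-constant} meaning $|R|\geq 2$.

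The step \emph{the unary constants in $\clos{\Gamma}$ are projections of relations in $\Delta$} also deserves its one-line justification. Suppose no relation of $\Delta$ has a coordinate fixed to $0$. By the block structure, every nonempty relation of $\Delta$ then contains the all-ones tuple. Hence every nonempty relation in $\clos{\Delta}$ is $1$-valid, and $\{(0)\}\notin\clos{\Delta}=\clos{\Gamma}$. The case of $\{(1)\}$ is symmetric.
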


\begin{proof}
    The first claim follows immediately since we can define $\mathsf{t}(x)$ via the definition $(x = x)$. For the second claim: if we can efpp-define $R_=$ then we (by the first claim) can also efpp-define $\mathsf{t}$, and any pp-definition (possibly using $R_=$- or $\mathsf{t}$-constraints) can be converted into a suitable efpp-definition. For the third claim, let $R \in \Gamma$ be a non-constant relation, of arity, say $r$, and let $1 \leq i \leq r$ be an argument such that $|\{x_i \mid (x_1, \ldots, x_i, \ldots, x_r) \in R\}| = 2$. Then $\mathsf{t}(x) := \exists x_1, \ldots, x_{i-1}, x_{i+1}, \ldots, x_r . R(x_1, \ldots, x_{i_1}, x, x_{i+1}, \ldots, x_r)$.

For the fourth and last claim, $\closneq{\Gamma} \subsetneq \closneq{\Gamma \cup \{\mathsf{t}\}}$ follows from the assumption that $\mathsf{t} \notin \closneq{\Gamma}$. For the second inclusion $\closneq{\Gamma \cup \{\mathsf{t}\}} \subsetneq \closneq{\Gamma \cup \{R_=\}}\}$, observe that $R_=$ has no fictitious argument, and, hence, if $R_= \in \closneq{\Gamma \cup \{\mathsf{t}\}}$ then $R_= \in \closneq{\Gamma}$ (by Lemma~\ref{lemma:can_define}). But then (as established in the first item of this lemma) $\closneq{\Gamma} = \cclone{\Gamma}$ which contradicts the assumption that $\mathsf{t} \notin \cclone{\Gamma}$. Hence, $R_= \notin \closneq{\Gamma \cup \{\mathsf{t}\}}$ and the inclusion $\closneq{\Gamma \cup \{\mathsf{t}\}} \subsetneq \closneq{\Gamma \cup \{R_=\}}$ must be proper. Now, consider a $\Delta$ such that $\clos{\Delta} = \clos{\Gamma}$. The claim  that $\closneq{\Delta} \in \{\closneq{\Gamma}, \closneq{\Gamma \cup \mathsf{t}}, \closneq{\Gamma \cup \{R_=\}}\}$ then follows through a similar case analysis:  if every relation is prime, then $\closneq{\Delta} = \closneq{\Gamma}$, if every relation is irredundant then $\closneq{\Delta} \in \{\closneq{\Gamma}, \closneq{\Gamma \cup \{\mathsf{t}\}}\}$, and if these two cases do not apply then $\closneq{\Delta} = \clos{\Gamma}$.
\end{proof}

We remark that disallowing equality is a fairly standard assumption for certain problems~\cite{CreignouE014,MahmoodEtAl21, MaMeSc23, CrOlSc23} but $\mathsf{t}$ is normally such a harmless relation that it is not explicitly acknowledged in definitions. However, we will later see that their presence \emph{do} make a difference for the $\RepABD(\Gamma)$ problem in the sense that there are languages $\Gamma$ such that $\RepABD(\Gamma)$ is in P but $\RepABD(\Gamma \cup \{\mathsf{t}\})$ is intractable. With this in mind we obtain the following basic reducibility result.

\begin{lemma} ($\star$) \label{lem:baseIndneq}
    Let $\Gamma$ and $\Gamma'$ be two constraint languages. If $\Gamma' \subseteq \closneq{\Gamma}$, then
     $\RepABD(\Gamma') \preduction
     \RepABD(\Gamma)$ and $\RepABDsub(\Gamma') \preduction
     \RepABDsub(\Gamma)$.
\end{lemma}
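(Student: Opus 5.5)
The plan is the standard local-replacement reduction: rewrite every $\Gamma'$-atom of the knowledge base by its efpp-definition over $\Gamma$, introduce fresh variables for the existential quantifiers, and show that the set of explanations, and hence all Hamming distances between explanations, is unchanged. Since $\Gamma'$ is a fixed finite language, we fix for each $R \in \Gamma'$ an efpp-definition
\[R(x_1,\dots,x_r) := \exists y_1,\dots,y_n\,.\,\varphi_R(x_1,\dots,x_r,y_1,\dots,y_n),\]
where $\varphi_R$ is a conjunction of atoms from $\Gamma \cup \{\emptyset\}$. Given an instance $(\KB,H,M,S,k)$ of $\RepABD(\Gamma')$, we replace each atom $R(z_1,\dots,z_r)$ of $\KB$ by $\varphi_R(z_1,\dots,z_r,y^c_1,\dots,y^c_n)$. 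Here the $y^c_j$ are fresh variables, distinct for each atom occurrence $c$. This gives a formula $\KB'$. We keep $H$, $M$, $S$ and $k$ unchanged. The construction takes polynomial time because each definition has constant size. Because the definitions are equality-free, no variable identification is required, and every occurrence of $(x=y)$ or $\mathsf{t}$ is avoided. This is exactly why $\closneq{\cdot}$ is used instead of $\cclone{\cdot}$.

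The key step is to show that $\KB'$ is a conservative extension of $\KB$. Every model of $\KB'$ restricted to $\var(\KB)$ is a model of $\KB$. Conversely, every model of $\KB$ extends to a model of $\KB'$ by choosing witnesses for the fresh variables independently for each atom, which is possible because those variables are disjoint across atoms. From this, for every $E \subseteq H$, the formula $\KB' \land E$ is satisfiable iff $\KB \land E$ is. Also, $\KB' \land E \models M$ iff $\KB \land E \models M$, because $M \subseteq \var(\KB)$ and the projections of the model sets onto $\var(\KB)$ coincide. Hence $\AllExpl(I') = \AllExpl(I)$, and consequently $\MinExpl(I') = \MinExpl(I)$. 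The distance $d$ is computed on subsets of $H$, and $H$ is unchanged, so $S_k(E)$ and $S_k^M(E)$ are identical in both instances. In particular, $S \subseteq \AllExpl(I')$ (respectively $S \subseteq \MinExpl(I')$) holds iff it held originally, and $S$ is $k$-representative for $I$ iff it is for $I'$. The same map therefore reduces both $\RepABD$ and $\RepABDsub$.

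The one delicate point is the requirement $H, M \subseteq \var(\KB')$. A variable of $H$ or $M$ could in principle disappear if the definition $\varphi_R$ does not mention some free variable $x_i$. In an efpp-definition this can only happen when coordinate $i$ of $R$ is fictitious, and we cannot fall back on $\mathsf{t}(x_i)$ unless $\mathsf{t}$ is available.

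I would handle this in one of two ways. The first option is to treat $\var(\KB)$ as the declared variable set of the instance, so the variable is simply unconstrained and the semantics above still hold. The second option is to argue via Lemma~\ref{lemma:weak_def}(3): if $\Gamma$ contains a non-constant relation then $\mathsf{t} \in \closneq{\Gamma}$, so an occurrence can be added freely. If every relation in $\Gamma$ is constant, then every relation in $\closneq{\Gamma}$ has no fictitious coordinates unless it is empty, and the issue does not arise.

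I expect this bookkeeping around fictitious coordinates and empty relations to be the only real obstacle; the rest is routine.
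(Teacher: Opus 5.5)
Your proposal is correct and is essentially the paper's argument. The paper, deferring to Nordh and Zanuttini's Lemma~22, likewise replaces each atom by its efpp-definition with fresh existential variables kept outside $H$, and observes that, since no equality constraints force variables to be identified, the sets of (minimal) explanations, and hence all distances, are unchanged.

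Your fictitious-coordinate worry does not arise under the paper's implicit convention that every free variable occurs in the defining formula; this convention is used in the proof of Lemma~\ref{lemma:can_define} and is needed for $\mathsf{t}\notin\closneq{\Gamma}$ to be possible in Lemma~\ref{lemma:weak_def}. The one loose end, which you share with the paper, is an $\emptyset$-atom in some $\varphi_R$. That atom forces $R=\emptyset$, so any instance using $R$ has no explanations, must have $S=\emptyset$, and is a yes-instance; it can therefore be sent to a fixed yes-instance over $\Gamma$.
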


\iflong
\begin{proof}[Proof (Idea).]
We omit details, since the construction is exactly the same as in previous work~\cite[Lemma~22]{NordhZ08}, but the basic idea is simply to replace each relation by the set of constraints prescribed by the efpp-definition, and introducing fresh variables (kept outside the hypothesis) for any existentially quantified variables.
This exactly preserves the set of (minimal) explanations since we, importantly, do not need to identify variables occurring in equality constraints. Therefore, the answer to the question whether a given set $S$ is $k$-representative does not change.
\end{proof}
\fi

We further need the following expressiveness result.
\begin{lemma}($\star$)\label{lem:nessp-nessn-equality}
    Let $\Gamma$ be a constraint language. If $\Gamma \not \subseteq \closneq{\mathrm{EP}^-}$ and $\Gamma \not \subseteq \closneq{\mathrm{EN}^-}$
    then $(x=y) \in \closneq{\Gamma}$ and $\clos{\Gamma} = \closneq{\Gamma}$.
\end{lemma}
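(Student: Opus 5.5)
The second claim follows from the first by Lemma~\ref{lemma:weak_def}(2): once $(x=y) \in \closneq{\Gamma}$ we get $\closneq{\Gamma} = \clos{\Gamma}$. So the whole task is to efpp-define equality from $\Gamma$ under the two non-containment hypotheses.

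\textbf{Step 1: extract two witness relations.} Since $\Gamma \not\subseteq \closneq{\mathrm{EP}^-}$, some $R_1 \in \Gamma$ lies outside $\closneq{\mathrm{EP}^-}$. Likewise some $R_2 \in \Gamma$ lies outside $\closneq{\mathrm{EN}^-}$. Here $\closneq{\mathrm{EP}^-}$ is the set of relations efpp-definable from positive clauses $(x_1\lor\dots\lor x_n)$ and $(\neg x)$. These relations are exactly the projections of conjunctions of such clauses in which no variable is forced equal to another.

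\textbf{Step 2: case split on the co-clone of $R_1$.} There are two cases.
\begin{itemize}
\item If $R_1 \notin \clos{\mathrm{EP}} = \IS{}{02}$, then by Post's lattice $\clos{R_1}$ contains $\IM$ or is not contained in $\IS{}{02}$ in another way. In either case $\clos{R_1}$ contains the implication $(x \rightarrow y)$ or a relation from which it is pp-definable.
\item If $R_1 \in \IS{}{02}$ but $R_1 \notin \closneq{\mathrm{EP}^-}$, then by Lemma~\ref{lemma:can_define} and Lemma~\ref{lemma:weak_def}(4), $R_1$ must be redundant. It therefore has two coordinates that agree on all tuples.
\end{itemize}
In the second case, projecting $R_1$ existentially onto two such coordinates $i,j$ gives the relation $\{(a,a) : a \in \mathrm{pr}_i R_1\}$. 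When both values occur, this is exactly $R_=$, and no equality is used. When only one value occurs, the coordinates are constant, and I would argue that this contradicts membership outside $\closneq{\mathrm{EP}^-}$, since constants are expressible by $(\neg x)$ or a unary positive clause $(x)$.

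The symmetric analysis applies to $R_2$ with $\mathrm{EN}$.

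\textbf{Step 3: combine in the remaining case.} Suppose both $R_1$ and $R_2$ are irredundant. Then $R_1 \notin \IS{}{02}$ and $R_2 \notin \IS{}{12}$, so the co-clone $\clos{\Gamma}$ is contained in neither. Walking through Post's lattice, $\clos{\Gamma}$ then contains $\IM$, that is $(x\rightarrow y)$, or $\ID$, that is $(x \neq y)$. From $x\rightarrow y$ and $y \rightarrow x$ we get $x=y$; from $x\neq z$ and $z \neq y$ with $z$ existentially quantified we also get $x=y$. Both relations are prime, so by Lemma~\ref{lemma:can_define}(1) they lie in $\closneq{\Gamma}$. Both definitions above are equality-free, hence $R_= \in \closneq{\Gamma}$.

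The main obstacle is the lattice walk in Step 3. One must confirm that every co-clone outside both $\IS{}{02}$ and $\IS{}{12}$ pp-defines $(x\rightarrow y)$ or $(x\neq y)$, and also account for co-clones such as $\IR_2$ or $\IS{}{00}$-type pieces with constants. I would verify this directly from the Hasse diagram of Post's lattice, checking the minimal co-clones above $\IR_2$ that escape both $\IS{}{02}$ and $\IS{}{12}$, namely $\IM$, $\ID$, and the joins of $\IS{}{0k}$ with $\IS{}{1k}$. For the joins, I would exhibit equality directly, for instance from $(x \lor y)$ and $(\neg x \lor \neg y)$, which give $x \neq y$.
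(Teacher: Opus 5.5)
\textbf{Gap in Step 3.} The lattice fact you rely on is false. The minimal co-clones above $\IBF$ are $\IR_0$, $\IR_1$, $\IM$, $\ID$ \emph{and} $\IL$. The co-clones $\IL=\clos{\{\mathrm{EVEN}^4\}}$, $\IL_0$, $\IL_1$ and $\IN$ are contained in neither $\IS{}{02}$ nor $\IS{}{12}$. Yet none of them contains $(x\rightarrow y)$ or $(x\neq y)$:
\begin{itemize}
\item in each of them all relations are affine, or all are complementive, which rules out $(x\rightarrow y)$;
\item in each of them all relations are $0$-valid, or all are $1$-valid, which rules out $(x\neq y)$.
\end{itemize}
So for $\Gamma=\{\mathrm{EVEN}^4\}$ or $\Gamma=\{\mathrm{DUP}\}$ your argument defines no equality. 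Both relations are irredundant and satisfy both hypotheses, so they land exactly in your Step~3.

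The check you planned would not catch this. $\IM$ and $\ID$ do not lie above $\IR_2$. The joins of positive-type and negative-type $\IS{}{}$ co-clones already contain $\ID$, as you note yourself. And $\IL$ is missing from your list. The correct statement is that $\clos{\Gamma}$ contains one of $\IM$, $\ID$, $\IL$. The repair is short: $\mathrm{EVEN}^4$ is prime, so Lemma~\ref{lemma:can_define}(1) puts it in $\closneq{\Gamma}$. Then $x=y$ is efpp-defined by $\exists a,b,c.\ \mathrm{EVEN}^4(x,a,b,c)\wedge\mathrm{EVEN}^4(y,a,b,c)$.

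\textbf{Smaller points.}
\begin{itemize}
\item The first bullet of Step~2 is false as stated. For example, $\clos{\{(\neg x\lor\neg y)\}}$ and $\clos{\{x\oplus y\}}$ escape $\IS{}{02}$ without containing implication. You never use that bullet, so drop it.
\item The final case should be ``$R_1\notin\IS{}{02}$ and $R_2\notin\IS{}{12}$'', not ``both irredundant''. Otherwise a redundant $R_1\notin\IS{}{02}$ falls through your case split.
\item In the second bullet of Step~2 you must pick a \emph{non-constant} pair of agreeing coordinates. A constant agreeing pair can coexist with a non-constant one, so an arbitrary choice does not work. A non-constant pair exists: otherwise $R_1$ is the conjunction of unit clauses on its constant coordinates with its projection onto the remaining coordinates. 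That projection is irredundant and lies in $\IS{}{02}$, hence in $\closneq{\mathrm{EP}^-}$, a contradiction.
\item The fact $\mathsf{t}\in\closneq{\mathrm{EP}^-}$ comes from Lemma~\ref{lemma:weak_def}(3), not (4).
\end{itemize}

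For comparison, the paper does not argue this lemma directly; it imports it from Propositions~3.1, 3.2 and Lemma~3.3 of \cite{MaMeSc23}. Your self-contained lattice route is a legitimate alternative once the $\IL$ case is added.
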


\iflong
\begin{proof}
    This follows immediately from the literature, namely from Proposition 3.1, Proposition 3.2, and Lemma 3.3 in \cite{MaMeSc23}.
\end{proof}
\fi


\section{Classical Complexity} \label{sec:classical}

We begin by analyzing the ``classical'' complexity of $\RepABD(\Gamma)$, i.e., whether it is in P, or in an intractable class, and thus study the complexity of the problem up to polynomial-time reductions. We first observe a straightforward upper bound.

\begin{lemma}\label{lem:PiP2-membership}
$\RepABD(\Gamma)$ is in $\Pi^P_2$.
\end{lemma}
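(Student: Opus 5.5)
We show that the complement $\overline{\RepABD(\Gamma)}$ is in $\Sigma^P_2$ by giving a nondeterministic polynomial-time algorithm with an $\NP$ oracle. The certificate of non-representativeness is a single set $E' \subseteq H$ that is an explanation of $I$ but lies at distance more than $k$ from every $E \in S$. Since $\Gamma$ is arbitrary we cannot use Schaefer's dichotomy, and we rely on a $\SAT$ (hence $\NP$) oracle for all satisfiability and entailment tests.

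On input $(I,S,k)$ with $I=(\KB,H,M)$, the algorithm proceeds as follows.
\begin{enumerate}
\item Guess $E' \subseteq H$. This takes polynomially many bits, at most $|H|$.
\item Check deterministically that $d(E,E') > k$ for every $E \in S$. This takes polynomial time, since $S$ is part of the input.
\item Verify that $E'$ is an explanation. Satisfiability of $\KB \land E'$ is one $\NP$ query. The entailment $\KB \land E' \models M$ fails iff some $m \in M$ makes $\KB \land E' \land \neg m$ satisfiable, so it is the complement of $|M|$ further $\NP$ queries. Both conditions are answerable in polynomial time with the oracle.
\item Accept iff all checks succeed.
\end{enumerate}

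For correctness, recall that $S \subseteq \AllExpl(I)$ by the problem definition, so $\bigcup_{E\in S} S_k(E) \subseteq \AllExpl(I)$ always holds. Hence $S$ fails to be $k$-representative iff some $E' \in \AllExpl(I)$ lies outside every $S_k(E)$, which is exactly what the algorithm guesses and verifies. If one wants to drop the promise that $S \subseteq \AllExpl(I)$, the algorithm can also nondeterministically accept whenever some $E \in S$ is not an explanation. This is again checkable with the oracle and keeps the problem in $\Sigma^P_2$.

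The algorithm places the complement in $\NP^{\NP} = \Sigma^P_2$, and therefore $\RepABD(\Gamma) \in \Pi^P_2$. The same argument with an additional minimality check works for $\RepABDsub(\Gamma)$. That check confirms, for each $x \in E'$, that $E' \setminus \{x\}$ is not an explanation, which is a polynomial number of further oracle calls. Checking only single-element removals suffices because being an explanation is upward-closed on supersets that stay consistent. Two easy points need care here. Entailment is monotone in $E$, and if a smaller set is an explanation then some one-element removal already gives one. No step looks like a real obstacle; the only subtlety is the promise on $S$.
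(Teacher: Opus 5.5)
Your proof is correct and matches the paper's argument: guess $E' \subseteq H$, use the oracle to verify that $E'$ is an explanation, check that $E'$ is at distance greater than $k$ from every member of $S$, and conclude that the complement lies in $\Sigma^P_2$. Your deterministic distance check is simpler than the paper's, which unnecessarily uses a coNP oracle for that step; also, $\bigcup_{E\in S} S_k(E) \subseteq \AllExpl(I)$ holds by the definition of $S_k$ alone, whether or not the promise on $S$ is assumed.
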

\begin{proof}


The following non-deterministic algorithm  shows that $\overline{\RepABD(\Gamma)}$ (recall that this is the complement of $\RepABD(\Gamma)$) is in $\Sigma^P_2$. We are given an instance $(\KB, H, M)$, a set $S \subseteq \AllExpl(I)$, and $k \geq 1$. 
We then guess an $E' \subseteq H$, and verify that \\ 
(1) $E'\in \AllExpl(I)$ --- feasible with an NP- and a coNP-oracle (first check if $\KB \land E'$ is satisfiable, and then if $\KB \land E' \models M$)\\
(2) $E' \notin \bigcup_{E \in S} S_k(E)$ --- feasible with a coNP-oracle (since the complement question is in NP: guess an $E\in S$ and verify that $d(E,E') \leq k$).
\end{proof}

Recall that for a language $\Gamma$ we let $\Gamma^+$ be $\Gamma$ expanded with the two constant Boolean relations. We have the following useful conditional upper bound.


\begin{lemma}($\star$) \label{lemma:sat_to_rep}
If $\SAT(\Gamma^+) \in \Ptime $ then $\RepABD(\Gamma) \in \co\NP$, for any constraint language $\Gamma$.
 \end{lemma}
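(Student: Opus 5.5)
We follow the $\Sigma^P_2$ algorithm for $\overline{\RepABD(\Gamma)}$ in Lemma~\ref{lem:PiP2-membership} and show that, under the assumption $\SAT(\Gamma^+) \in \Ptime$, every oracle call can be replaced by a polynomial-time computation. This puts the complement in $\NP$, so $\RepABD(\Gamma) \in \co\NP$.

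Concretely, the $\NP$ certificate for a no-instance is a set $E' \subseteq H$. The verifier will check three things in polynomial time. First, it checks that $\KB \land E'$ is satisfiable. Since $E'$ is a set of positive unit clauses, $\KB \land E'$ is a $\Gamma^+$-formula, so this is a single call to the polynomial-time algorithm for $\SAT(\Gamma^+)$. Second, it checks that $\KB \land E' \models M$. This holds if and only if, for every $m \in M$, the formula $\KB \land E' \land (\neg m)$ is unsatisfiable. Each of these is again a $\Gamma^+$-formula, so the check takes $|M|$ polynomial-time $\SAT(\Gamma^+)$ calls. Third, it checks that $d(E,E') > k$ for every $E \in S$. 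Here we use that $S$ is given explicitly as part of the input: we just compute $|E \triangle E'|$ for each of the $|S|$ members, with no existential guess over $S$ needed. If all three checks pass, $E'$ witnesses that $S$ is not $k$-representative. Conversely, any explanation $E'$ not covered by $\bigcup_{E \in S} S_k(E)$ is accepted, so the verifier is correct.

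The only point requiring care is that we never need to decide satisfiability of anything beyond $\Gamma$-formulas augmented with unit clauses. The entailment test is what forces us to include the negative unit clause $(\neg m)$, and this is exactly why the hypothesis is about $\Gamma^+$ rather than $\Gamma$ or $\Gamma$ with only positive units. The rest is routine, and I expect no real obstacle beyond stating this reduction of entailment to unsatisfiability cleanly.
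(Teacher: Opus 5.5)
Your proposal is correct and follows the paper's proof: guess a candidate $E' \subseteq H$, verify in polynomial time via the $\SAT(\Gamma^+)$ algorithm that it is an explanation, and check $d(E,E') > k$ for every $E \in S$. You additionally spell out the step the paper leaves implicit, namely reducing $\KB \land E' \models M$ to unsatisfiability of the $|M|$ formulas $\KB \land E' \land (\neg m)$.
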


\iflong
\begin{proof}
Let $I = (\KB, H, M, S, k)$ be an instance of $\RepABD(\Gamma)$.
    We guess a candidate explanation $E \subseteq H$ and can verify that it is an explanation in $\Ptime$. We can then, still in polynomial time (with respect to $I$)  verify that $d(E, E') > k$ for each $E' \in S$.
\end{proof}
\fi

The following lemma lets us inherit numerous hardness results from $\ABD$.

\begin{lemma}($\star$)\label{lem:ABD-to-RepABD}
Let $\Gamma$ be a constraint language. Then $\overline{\ABD(\Gamma)} \preduction \RepABD(\Gamma)$.
\end{lemma}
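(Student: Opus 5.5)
The plan is to use the trivial reduction that keeps the abduction instance unchanged and asks whether the \emph{empty} set of explanations is representative. Given an instance $I = (\KB, H, M)$ of $\ABD(\Gamma)$, we map it to the $\RepABD(\Gamma)$ instance $(I, S, k)$ with $S = \emptyset$ and $k = 1$. The knowledge base is not touched, so it is still a $\Gamma$-formula and no expressivity assumptions on $\Gamma$ (efpp-definitions, $\mathsf{t}$, $R_=$) are needed. The map is clearly computable in polynomial time, indeed in linear time.

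For correctness, first note that the produced instance is well formed. We have $S = \emptyset \subseteq \AllExpl(I)$ regardless of whether explanations exist, and $k = 1 \geq 1$. By definition, $S$ is $k$-representative iff $\AllExpl(I) = \bigcup_{E \in S} S_k(E)$. Since the union over the empty set is $\emptyset$, this holds iff $\AllExpl(I) = \emptyset$, that is, iff $I$ has no explanation. Hence $I \in \overline{\ABD(\Gamma)}$ iff $(I, \emptyset, 1) \in \RepABD(\Gamma)$.

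The only step I expect could cause trouble is a convention issue: whether $S$ is allowed to be empty. The problem definition only requires $S \subseteq \AllExpl(I)$, so the empty set is a legal input and the argument goes through. If one insisted on a nonempty $S$, we could not simply pick an explanation to put into $S$, since deciding whether one exists is exactly the hard question. I would avoid that situation by relying on the definition as stated.

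The same construction also gives $\overline{\ABD(\Gamma)} \preduction \RepABDsub(\Gamma)$. A minimal explanation exists iff some explanation exists, because $H$ is finite and any explanation contains a minimal one. Therefore $\MinExpl(I) = \emptyset$ iff $\AllExpl(I) = \emptyset$, and the instance $(I, \emptyset, 1)$ of $\RepABDsub(\Gamma)$ is a yes-instance exactly when $I$ has no explanation.
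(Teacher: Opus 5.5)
Your proposal is correct and is the paper's proof: it also maps $(\KB,H,M)$ to $(\KB,H,M,\emptyset,1)$ and uses that the union over the empty set is empty, so $S$ is $k$-representative iff $\AllExpl(I)=\emptyset$. Your closing remark on the subset-minimal variant matches the paper's Lemma~\ref{lem:ABD-to-RepABDsub}.
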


\iflong
\begin{proof}
    Given an instance $(\KB, H, M)$ of $\ABD(\Gamma)$, 
    we map it to $(\KB, H, M, \emptyset, 1)$ of $\RepABD(\Gamma)$.

One easily verifies that with $S =  \emptyset$ it holds that
$$\bigcup_{E \in S} S_k(E) = \bigcup_{E \in \emptyset} S_k(E) = \emptyset.$$
%
%
Now, let $(\KB, H, M)$ be a positive  instance of $\ABD(\Gamma)$, and  $E \subseteq H$ an explanation. This $E$ is not represented by $S = \emptyset$, as shown above. Thus, $(\KB, H, M, \emptyset, 1)$ is a negative instance of $\RepABD$.
Conversely, let $(\KB, H, M)$ be a negative instance of $\ABD(\Gamma)$. That is, $\AllExpl(I) = \emptyset$.
Since $\bigcup_{E \in S} S_k(E) =  \emptyset$, it holds that $\bigcup_{E \in S} S_k(E) = \AllExpl(I)$. Thus, 
$(\KB, H, M, \emptyset, 1)$ is a positive instance of $\RepABD(\Gamma)$.
\end{proof}
\fi

However, the following lemma proves that $\RepABD$ is generally harder than $\ABD$ since it establishes $\co\NP$-hardness for many fragments where $\ABD$ is tractable. Recall that $\mathsf{t} = \{(0),(1)\}$ and that $T = \{(1)\}$.
 In the reduction we use the NP-complete \emph{covering radius problem} \cite{frances1997covering}, where an instance is given by a binary code $C \subseteq \{0,1\}^n$ and an integer $r$, and the question is whether there is a vector $v \in \{0,1\}^n$ that has Hamming distance greater than $r$ from every element in $C$ (thus the \emph{covering radius} of $C$ is greater than $r$).

\begin{lemma}\label{lem:P1ABDcoNPhard}
    $\RepABD(\Gamma)$ is coNP-hard for any $\Gamma$ such that $\mathsf{t} \in \closneq{\Gamma}$ or $T \in \closneq{\Gamma}$.
\end{lemma}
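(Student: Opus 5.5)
We reduce from the covering radius problem. Given a code $C = \{c_1, \ldots, c_m\} \subseteq \{0,1\}^n$ and an integer $r$, the question whether every $v \in \{0,1\}^n$ lies within distance $r$ of some $c_j$ is coNP-complete, being the complement of covering radius. The idea is to build an abduction instance whose explanations are exactly all subsets of a hypothesis set $H = \{x_1, \ldots, x_n\}$, identifying each vector $v$ with $E_v = \{x_i \mid v_i = 1\}$. We then take $S = \{E_{c_1}, \ldots, E_{c_m}\}$ and $k = r$. Since $d(E_v, E_{c_j})$ is exactly the Hamming distance between $v$ and $c_j$, $S$ is $k$-representative iff the covering radius of $C$ is at most $r$. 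If $r = 0$, the problem is trivially decidable (check $|C| = 2^n$), so we may assume $k = r \geq 1$ as the problem requires.

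By Lemma~\ref{lem:baseIndneq} it suffices to give the reduction for $\Gamma' = \{\mathsf{t}\}$ and for $\Gamma' = \{T\}$. If $\mathsf{t} \in \closneq{\Gamma}$, we set $\KB = \mathsf{t}(x_1) \land \cdots \land \mathsf{t}(x_n) \land \mathsf{t}(m)$ with $M = \{m\}$. Then no subset of $H$ entails $m$, so we must choose the manifestation differently. We let $m$ be a hypothesis as well, but require that it appear in every explanation. Concretely, put $H = \{x_1, \ldots, x_n, m\}$ and $M = \{m\}$. Then the explanations are exactly the sets $E_v \cup \{m\}$; each is consistent, because $\KB$ is trivially satisfiable, and each entails $m$. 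Choosing $S = \{E_{c_j} \cup \{m\}\}$, the common element $m$ does not affect distances, so the correspondence above is preserved. If instead $T \in \closneq{\Gamma}$, we use the knowledge base $T(m)$ and let $x_1, \ldots, x_n$ occur only through atoms $T(x_i)$. But then the $x_i$ are forced true, and every explanation, whatever subset of $H$ it is, still entails $M$ and is consistent. So with $H = \{x_1, \ldots, x_n\}$, $M = \{m\}$, and $\KB = T(m) \land \bigwedge_i T(x_i)$, every $E \subseteq H$ is an explanation, because $\KB$ alone is satisfiable and entails $m$. The same $S$ and $k$ then work.

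Finally we check that the construction is valid. Each $E_{c_j}$ must be an explanation, so that $S \subseteq \AllExpl(I)$, and in both cases this holds because all subsets are explanations. The construction is clearly polynomial in $n$ and $m$. The only subtle point, and the main obstacle, is making sure that variables occur in $\KB$ (since $H, M \subseteq \var(\KB)$) without constraining the explanation space. This is exactly where we need $\mathsf{t}$ or $T$ to be available via an efpp-definition rather than using equality or fresh auxiliary variables, which are handled by Lemma~\ref{lem:baseIndneq}.
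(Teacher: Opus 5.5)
Your proof is correct and takes the same route as the paper. Both reduce the covering radius problem to the complement of $\RepABD$ using a knowledge base of atoms $D(x_i)$ with $D \in \{\mathsf{t}, T\}$, so that all candidate sets are explanations, with $S$ encoding the codewords and $k = r$. The only difference is cosmetic: the paper takes $M = \emptyset$, which makes every $E \subseteq H$ an explanation for both choices of $D$ at once, so your auxiliary manifestation $m$ and the case split are unnecessary. Your explicit handling of $r = 0$ is a small point the paper glosses over.
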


\ifshort
\begin{proof}[Sketch]
We reduce from the covering radius problem. Intuitively, we encode an instance $(C,r)$ of this problem by introducing a fresh variable for each coordinate of the codewords and constructing $\KB$ so that all subsets of these variables form candidate explanations. The set $S$ is defined to correspond precisely to the codewords in $C$, and the distance parameter $k$ is set to $r$. 
Under this construction, any explanation that is \emph{not} $k$-represented by $S$ corresponds exactly to a vector at Hamming distance greater than $r$ from all codewords in $C$. Conversely, if every vector lies within distance $r$ of some codeword, then all explanations are $k$-represented, yielding a positive instance. 
\end{proof}
\fi

\iflong
\begin{proof}   
We give a reduction from the covering radius problem to the complement of $\RepABD(\Gamma)$ (i.e., proving coNP-hardness).
Let an instance of the covering radius problem be given by $(C, r)$. Let $x_1, \dots, x_n$ denote fresh variables, and for a $v\in \{0,1\}^n$ let $v[i]$ denote the $i^{th}$ coordinate of $v$. For an $E \subseteq \{x_1, \dots, x_n\}$ let $E_{vec}$ denote the characteristic vector of $E$, that is, $E_{vec}[i] = 1$ if $x_i \in E$, and $E_{vec}[i] = 0$ if $x_i \notin E$.
For a $v\in \{0,1\}^n$ let $v_{set}$ denote the set representation of $v$, that is, $v_{set} = \{x_i \mid v[i] = 1\}$.
In the following, the constraint $D$ denotes either $\mathsf{t}$ or $T$ (the construction is valid for either relation). We map $(C, r)$ to $(\KB, H, M, S, k)$ as follows.

\begin{align*}
\KB &= \bigwedge_{i=1}^n D(x_i)\\
H &= \{x_1, \dots, x_n\} \\
M &= \emptyset \\
S &= \{E\subseteq H \mid E_{vec} \in C \} \\
k &= r \\
\end{align*}

We first observe that the constraints $D(x_i)$ in $\KB$ together with $M = \emptyset$ imply that $\AllExpl((\KB, H, M)) = 2^H$.
Now, let $(C,r)$ be a positive instance. That is, there is a vector $v \in \{0,1\}^n$ that has hamming distance greater than $r$ from every element in $C$. One easily verifies that $\KB \land v_{set}$ is consistent and that $\KB \land v_{set} \models M$. Therefore, $v_{set}$ is an explanation for $(\KB, H, M)$. However, $v_{set}$ is of greater distance than $k = r$ from every element in $C$, that is, $v_{set}$ is not $k$-represented by $S$. Thus
$(\KB, H, M, S, k)$ is a negative instance of $\RepABD$.

Conversely, let $(C,r)$ be a negative instance. That is, every vector $v \in \{0,1\}^n$ has hamming distance equal or less than $r$ from some element in $C$. Consequently, any $E\subseteq H$ (which by construction is a valid explanation), is $k$-represented by $S$. Thus $(\KB, H, M, S, k)$ is a positive instance of $\RepABD$.

\end{proof}
\fi
We obtain the following complexity classification of $\RepABD(\Gamma)$.

\begin{theorem} 
Let $\Gamma$ be a constraint language. Then $\RepABD(\Gamma)$ is
\begin{enumerate}
    \item $\Pi^P_2$-complete if $\IN_2 \subseteq \clos{\Gamma} \subseteq \II_2$ or $\II_0 \subseteq \clos{\Gamma} \subseteq \II_2$,
    \item $\co\NP$-hard and $\NP$-hard if $\IN \subseteq \clos{\Gamma}$,
    \item $\co\NP$-complete if $C \subseteq \clos{\Gamma} \subseteq D$ for $C \in \{\IS{2}{1}, \IM, \IR_1, \ID, \IL\}$ and $D \in \{\IE_2, \IL_2, \IV_2, \ID_2\}$,
    \item $\co\NP$-complete if $\IBF \subseteq \clos{\Gamma} \subseteq \IR_0$ and $\mathsf{t} \in \closneq{\Gamma}$, and
    \item $\in \Ptime$ otherwise.
\end{enumerate}
\end{theorem}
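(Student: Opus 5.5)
The plan is to prove the classification one item at a time. Each item combines an upper bound from Lemma~\ref{lem:PiP2-membership} or Lemma~\ref{lemma:sat_to_rep} with a lower bound from Lemma~\ref{lem:ABD-to-RepABD} (inheriting the known classification of $\ABD(\Gamma)$ from \cite{NordhZ08}) or from Lemma~\ref{lem:P1ABDcoNPhard}. All lower bounds are transported along co-clone inclusions via Lemma~\ref{lem:baseIndneq}. That lemma only works for efpp-definability, so Lemma~\ref{lem:nessp-nessn-equality} and Lemma~\ref{lemma:weak_def} are needed to pass from $\clos{\Gamma}$ to $\closneq{\Gamma}$.

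\textbf{Upper bounds.} Membership in $\Pi^P_2$ in item~1 is Lemma~\ref{lem:PiP2-membership}. For items~3 and~4, every $\Gamma$ with $\clos{\Gamma}\subseteq D$ for $D\in\{\IE_2,\IL_2,\IV_2,\ID_2\}$ is Schaefer. Adding the unit clauses keeps it Schaefer, since $\IE_2,\IV_2,\IL_2,\ID_2$ all contain both constants. So $\SAT(\Gamma^+)\in\Ptime$, and Lemma~\ref{lemma:sat_to_rep} gives membership in coNP. The same argument covers $\clos{\Gamma}\subseteq\IR_0$ in item~4.

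\textbf{Lower bounds.} For item~1, $\ABD(\Gamma)$ is $\Sigma^P_2$-complete on these co-clones by \cite{NordhZ08}, so Lemma~\ref{lem:ABD-to-RepABD} makes $\RepABD(\Gamma)$ $\Pi^P_2$-hard. For item~2, $\ABD$ is NP-complete on $\IN$ and hence on every co-clone above it, so coNP-hardness follows by Lemma~\ref{lem:ABD-to-RepABD}. NP-hardness needs a separate argument. I would take an NP-hard abduction instance over an $\IN$ language and choose $S$ and $k$ so that, by $0$/$1$-validity, all candidates are automatically represented. The representation question then collapses to an existence check of a specific explanation, which yields an NP-hardness reduction to $\RepABD$ itself. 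This is the step I expect to be the main obstacle, and I would first try $S=\{H'\}$ for a designated explanation $H'$ constructed by adding fresh hypotheses.

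For item~3, if $\IS{2}{1}$, $\IM$, $\IR_1$, $\ID$ or $\IL$ is contained in $\clos{\Gamma}$, I first check via Lemma~\ref{lem:nessp-nessn-equality} or Lemma~\ref{lemma:weak_def} that $\mathsf{t}$ or $T$ lies in $\closneq{\Gamma}$. Then Lemma~\ref{lem:P1ABDcoNPhard} gives coNP-hardness. Two cases need some care. If $\clos{\Gamma}\supseteq\IR_1$, then either $T$ is efpp-definable directly, or $\Gamma$ contains a non-constant relation, in which case item~3 of Lemma~\ref{lemma:weak_def} yields $\mathsf{t}$. For the 2-negative and implicative cases, any non-constant relation again yields $\mathsf{t}$. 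Item~4 is exactly the hypothesis of Lemma~\ref{lem:P1ABDcoNPhard}.

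\textbf{Remaining languages.} The leftover cases are those with $\clos{\Gamma}\subseteq\IR_0$ and $\mathsf{t}\notin\closneq{\Gamma}$. By item~3 of Lemma~\ref{lemma:weak_def}, every relation in such a $\Gamma$ is constant or empty, so $\Gamma$ contains only constant relations such as $(\neg x)$ and $\emptyset$. Then $\KB$ forces every variable it mentions to a fixed value, and $H,M\subseteq\var(\KB)$. An explanation exists only if $M=\emptyset$ and no variable is forced to $0$, in which case only $\emptyset$ is an explanation. Since $\KB$ forces every variable in $H$ to $0$, $\AllExpl(I)\subseteq\{\emptyset\}$. The $\Ptime$ algorithm computes $\AllExpl(I)$ explicitly and compares it against $S$. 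Finally, I would check that the case list is exhaustive over Post's lattice by noting that every co-clone is either above $\IN$ or $\II_0$, or below some Schaefer co-clone, and in the latter case contains one of the listed minimal co-clones or lies below $\IR_0$ (together with its dual).
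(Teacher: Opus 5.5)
Your item~2 has the two hardness directions swapped, and the coNP half rests on a false claim. $\ABD(\Gamma)$ is \emph{not} NP-complete on $\IN$. Every $\IN$-language is $1$-valid, so $\KB\wedge E$ is always satisfiable by the all-ones assignment. By monotonicity an explanation then exists iff $\KB\wedge H\models M$. Hence $\ABD(\Gamma)$ lies in coNP for $\IN\subseteq\clos{\Gamma}\subseteq\II_1$; it is coNP-complete there and $\Sigma^P_2$-complete above $\IN_2$ or $\II_0$. So it cannot be NP-hard unless NP $=$ coNP. Lemma~\ref{lem:ABD-to-RepABD} transfers hardness of $\overline{\ABD(\Gamma)}$, so what it gives you is \emph{NP}-hardness of $\RepABD(\Gamma)$, not coNP-hardness. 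The step you flag as the main obstacle, NP-hardness via a tailored $S$ and $k$, is therefore already done by Lemma~\ref{lem:ABD-to-RepABD} with $S=\emptyset$. Your coNP-hardness for item~2 is left unsupported.

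The missing coNP-hardness comes from the tool you already use in item~3. If $\IN\subseteq\clos{\Gamma}$, then $\Gamma$ contains a relation with at least two tuples, because constant and empty relations only generate co-clones inside $\IR_2$. Item~3 of Lemma~\ref{lemma:weak_def} then gives $\mathsf{t}\in\closneq{\Gamma}$, and the covering-radius reduction of Lemma~\ref{lem:P1ABDcoNPhard} yields coNP-hardness. This is exactly how the paper argues.

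Two smaller points. In the residual case, the claim that an explanation exists only if $M=\emptyset$ and no variable is forced to $0$ is wrong. For $\KB=(\neg x)$, $H=\{x\}$, $M=\emptyset$, the set $\emptyset$ is an explanation; the correct condition is that $\KB$ contains no $\emptyset$-atom. Your conclusion $\AllExpl(I)\subseteq\{\emptyset\}$ and the explicit comparison with $S$ are still fine.

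Also, do not run the exhaustiveness check up to duality, because the classification is not self-dual. Every $\Gamma$ with $\IR_1\subseteq\clos{\Gamma}$ is coNP-hard: $T$ is prime, so $T\in\closneq{\Gamma}$ by Lemma~\ref{lemma:can_define}, which also settles your $\IR_1$ case split directly. The only residual polynomial cases are $\IBF$ and $\IR_0$.
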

\begin{proof}
    First, assume that $\IN_2 \subseteq \clos{\Gamma} \subseteq \II_2$ or that $\II_0 \subseteq \clos{\Gamma} \subseteq \II_2$. From Lemma~\ref{lem:PiP2-membership} we know that $\RepABD(\Gamma) \in \Pi^P_2$, and $\ABD(\Gamma)$ is $\Sigma^P_2$-complete in this case, which in combination with Lemma~\ref{lem:ABD-to-RepABD} gives the desired $\Pi^P_2$-hardness. Second, assume that $\IN \subseteq \clos{\Gamma}$. Then $\ABD(\Gamma)$ is $\co\NP$-hard, and Lemma~\ref{lem:ABD-to-RepABD} therefore gives $\NP$-hardness for $\RepABD(\Gamma)$. For the $\co\NP$-hardness claim we first observe that $\Gamma$ must contain a non-constant relation, which from Lemma~\ref{lemma:weak_def} implies that $\mathsf{t} \in \closneq{\Gamma}$, and we finally get $\co\NP$-hardness from Lemma~\ref{lem:P1ABDcoNPhard}. Third, assume that $C \subseteq \clos{\Gamma} \subseteq D$ for $C \in \{\IS{2}{1}, \IM, \IR_1, \ID, \IL\}$ and $D \in \{\IE_2, \IL_2, \IV_2, \ID_2\}$. Then $\SAT(\Gamma^+) \in \Ptime$~\cite{Schaefer78} 
    and we therefore (via Lemma~\ref{lemma:sat_to_rep}) conclude that $\RepABD(\Gamma)$ is in $\co\NP$. For hardness, first assume that $\clos{\Gamma} \neq \IR_1$. Then we similarly to the above case observe that $\Gamma$ must contain a non-constant relation, and we apply Lemma~\ref{lem:P1ABDcoNPhard} for the desired result. For $\clos{\Gamma} = \IR_1$ we observe that any pp-definition of $T$ (possibly using $R_=$ and $\mathsf{t}$) can be simplified into an equivalent pp-definition still defining $T$, and we then apply Lemma~\ref{lem:P1ABDcoNPhard}. Fourth, the case when $\IBF \subseteq \clos{\Gamma} \subseteq \IR_0$ and $\mathsf{t} \in \closneq{\Gamma}$ also follows from Lemma~\ref{lem:P1ABDcoNPhard}.

    It can be verified that the only remaining case is when $\mathsf{t} \notin \closneq{\Gamma}$ and if $\clos{\Gamma} = \IBF$ or $\clos{\Gamma} = \IR_0$. We assume that $\clos{\Gamma} = \IR_0$ since it subsumes the other case.  We now apply Lemma~\ref{lemma:weak_def} and conclude that $\closneq{\Gamma} = \closneq{\{\emptyset, F\}}$. In this case $\RepABD(\Gamma)$ can be solved in polynomial time. For an instance $(\KB, H, M, S, k)$ we first check if $\KB = \emptyset$. If so, $H = M = \emptyset$, and the only possible explanation is $E = \emptyset$, and we answer yes if $S = \{\emptyset\}$ and no if $S = \emptyset$. If, on the other hand, $\KB \neq \emptyset$ then we first check if there exists a constraint $\emptyset(x)$ in $\KB$. If so, then $\AllExpl((\KB, H, M)) = \emptyset$ since $\KB$ is not satisfiable, and we can answer yes or no via a simple case analysis. Otherwise, we must have $F(x)$ for each atom in $\KB$, but then we either have no explanation, or $E = \emptyset = M$  as the only possible explanation, and we can easily answer yes or no.
    \end{proof}

\iflong

See Figure~\ref{fig:post-latticeRepABD} for a visualization of this classification.

\begin{figure*}[htp]
    \centering
       \includegraphics[width=\textwidth]{iclp26/schaefers-lattice-RepABDd.eps}
    \caption{Illustration of complexity via Post's lattice.}
    \label{fig:post-latticeRepABD}
\end{figure*}
\fi

We continue by considering the subset-minimal variant $\RepABDsub$  of $\RepABD$, which, surprisingly, turns out to be easier for certain fragments.
We obtain tractability for essentially negative and essentially positive fragments, as long as the equality constraint cannot be expressed. Recall that $\text{EP}^-$ and $\text{EN}^-$ denote the sets of strictly essentially positive, respectively negative, clauses. 



\begin{lemma}\label{lem:RepABDsub-inP}
    $\RepABDsub(\Gamma)\in\Ptime$ if $\Gamma \subseteq \closneq{\mathrm{EP}^-}$ or $\Gamma \subseteq \closneq{\mathrm{EN}^-}$.
\end{lemma}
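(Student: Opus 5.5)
The plan is to show that in both fragments the instance has \emph{at most one} subset-minimal explanation, and that this explanation (or the fact that none exists) can be computed in polynomial time. Once this holds, deciding whether $S$ is $k$-representative is trivial. Since $S \subseteq \MinExpl(I)$ is part of the input promise, we have $S \in \{\emptyset, \{E_0\}\}$. The answer is yes iff $\MinExpl(I) = \emptyset$ or $S \neq \emptyset$; the distance bound $k$ plays no role.

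First, I reduce to clause languages. Since $\Gamma$ is finite and $\Gamma \subseteq \closneq{\mathrm{EP}^-}$, each relation in $\Gamma$ has an efpp-definition using finitely many clauses from $\mathrm{EP}^-$. So $\Gamma \subseteq \closneq{\Delta}$ for some finite $\Delta \subseteq \mathrm{EP}^-$. Lemma~\ref{lem:baseIndneq} then gives $\RepABDsub(\Gamma) \preduction \RepABDsub(\Delta)$; the symmetric argument handles $\mathrm{EN}^-$. It therefore suffices to treat knowledge bases that are conjunctions of clauses from $\mathrm{EP}^-$ or from $\mathrm{EN}^-$. The reduction matters because equality constraints are absent, so variables are never identified and the explanation structure is preserved.

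\emph{Essentially positive case.} Here $\KB$ consists of negative unit clauses $(\neg x)$ and positive clauses $(x_1 \lor \dots \lor x_n)$, including positive units $(x)$. Let $F$ be the set of variables with $(\neg x) \in \KB$ and $U$ the set with $(x) \in \KB$.
\begin{itemize}
\item $\KB \land E$ is satisfiable iff $E \cap F = \emptyset$ and every positive clause contains a variable outside $F$. The witness is the assignment setting everything outside $F$ to true.
\item For a consistent $\KB \land E$ and $m \in M$, we have $\KB \land E \models m$ iff $m \in E \cup U$. If $m \notin E \cup U$ and $m \notin F$, set $m$ to false and all other non-$F$ variables to true. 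Every clause other than a unit $(m)$ contains another variable, and that variable is true unless it lies in $F$. This is exactly the step where I must check carefully that a clause whose only non-$F$ variable is $m$ cannot occur when $m \notin U$; if it can, I would close $U$ under unit propagation first.
\end{itemize}
Consequently the explanations are exactly the consistent sets $E \subseteq H$ with $E \supseteq M \setminus U$. The unique minimal candidate is $E_0 = M \setminus U$, after replacing $U$ by its unit-propagation closure. It is an explanation iff $E_0 \subseteq H$ and $\KB \land E_0$ is consistent.

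\emph{Essentially negative case.} This is the dual argument. $\KB$ has positive units, giving a set $U$, and negative clauses. An assignment that sets everything outside $E \cup U$ to false is the canonical model. $\KB \land E$ is consistent iff $E \cup U$ contains no negative clause entirely. Moreover $\KB \land E \models m$ iff $m \in E \cup U$, since otherwise the canonical model falsifies $m$. Consistency is anti-monotone in $E$, so every explanation contains $M \setminus U$. Hence $E_0 = M \setminus U$ is the unique minimal explanation if it lies in $H$ and is consistent, and otherwise none exists.

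The main obstacle is justifying the entailment characterization $\KB \land E \models m \iff m \in E \cup U$ rigorously, including propagation through forced literals in the positive case. Both the characterization and the resulting algorithm are straightforward polynomial-time checks.
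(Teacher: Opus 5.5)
Your proof is correct and follows the paper's route: reduce to knowledge bases over $\mathrm{EP}^-$ or $\mathrm{EN}^-$ via Lemma~\ref{lem:baseIndneq}, show there is at most one subset-minimal explanation and that it is computable in polynomial time, and observe that the check on $S$ is then trivial (indeed independent of $k$, as you note). The propagation step you hedge on is genuinely needed in the $\mathrm{EP}^-$ case. For example, $(\neg a)\land(a\lor m)$ entails $m$ although $(m)\notin\KB$. Replacing $U$ by $U\cup\{x\notin F \mid \text{some positive clause } C \text{ has } C\setminus F=\{x\}\}$ (one round suffices) makes your countermodel argument go through. This is slightly more careful than the paper's explicit formula $E=(M\cap H)\setminus\{m\mid (m)\in\KB\}$, which misses this case.
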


\begin{proof}
Let $(\KB, H, M, S, k)$ be an instance of $\RepABDsub(\Gamma)$. We assume that each $R \in \Gamma$ is represented by a conjunctive formula over $\text{EP}^-$  (or $\text{EN}^-$). We can then without loss of generality assume that each atom in $\KB$ is from $\text{EP}^-$ (respectively, from $\text{EN}^-$).
    
    In EN$^-$, variables in $M$ can either appear as positive singletons, or as negative literals in a negative clause. If the abduction problem has an explanation, then there exists a unique subset-minimal explanation 
    $E = \{M \cap H\}\setminus \{m \mid (m) \in \KB\}$. This is because every $m \in M$ is either in $H$ and entails itself, or $m$ is already true by being a positive singleton, or there is no solution.

 Similarly, in EP$^-$, any $m \in M$ is either a singleton that is already always true, or $m \in H$ entails itself, or the problem has no solution. Again, there can only be a single subset-minimal explanation $E = \{M \cap H\}\setminus \{m \mid (m) \in \KB\}$.

It suffices to check if $\forall c_i \in S: d(E,c_i) > k$, which can be done in linear time. This concludes the proof.
\end{proof}

If equality can be expressed, however, we obtain $\co\NP$-hardness.

\begin{lemma} ($\star$) \label{lem:RepABDsub-coNP-hard}
    $\RepABDsub(\Gamma)$ is coNP-hard for any language $\Gamma$ such that $R_= \in \closneq{\Gamma}$.
\end{lemma}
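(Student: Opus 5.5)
We reduce the covering radius problem to the complement of $\RepABDsub(\{R_=\})$, and then transfer hardness to $\Gamma$ via Lemma~\ref{lem:baseIndneq}. Since $R_= \in \closneq{\Gamma}$ by assumption, that lemma gives $\RepABDsub(\{R_=\}) \preduction \RepABDsub(\Gamma)$. It therefore suffices to prove coNP-hardness for knowledge bases built only from equality constraints.

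The construction of Lemma~\ref{lem:P1ABDcoNPhard} cannot be reused directly. There every subset of $H$ is an explanation, but the subset-minimal explanations form an antichain, so they cannot range over all of $2^H$. I would use a dual-rail encoding instead. Given a covering radius instance $(C,r)$ with $C \subseteq \{0,1\}^n$, introduce variables $x_i, x_i', m_i$ for $i \in [n]$ and set
\[
\KB = \bigwedge_{i=1}^n \bigl((x_i = m_i) \land (x_i' = m_i)\bigr), \qquad H = \{x_1,x_1',\dots,x_n,x_n'\}, \qquad M = \{m_1,\dots,m_n\}.
\]
For $v \in \{0,1\}^n$ let $v^\ast$ contain $x_i$ if $v[i]=1$ and $x_i'$ if $v[i]=0$. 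Set $S = \{c^\ast \mid c \in C\}$ and $k = 2r$. The construction is clearly polynomial.

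The key step is to characterize $\MinExpl$. Every $E \subseteq H$ is consistent with $\KB$, since the all-ones assignment satisfies every equality together with $E$. Moreover, $\KB \land E \models m_i$ holds exactly when $x_i \in E$ or $x_i' \in E$: if neither is in $E$, the assignment setting $x_i = x_i' = m_i = 0$ and everything else to $1$ is a model of $\KB \land E$ falsifying $m_i$. Hence $E$ is an explanation iff it contains at least one of $x_i, x_i'$ for each $i$. It is minimal iff it contains exactly one of them for each $i$, that is, iff $E = v^\ast$ for some $v \in \{0,1\}^n$. This verifies $S \subseteq \MinExpl(I)$. Furthermore, $d(u^\ast, v^\ast) = 2\cdot d_H(u,v)$, because each coordinate where $u$ and $v$ differ contributes both $x_i$ and $x_i'$ to the symmetric difference.

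Correctness follows immediately. Suppose some $v$ has Hamming distance greater than $r$ from every codeword. Then $v^\ast \in \MinExpl(I)$ has distance greater than $2r = k$ from every element of $S$, so the instance is negative. Conversely, if every $v$ lies within distance $r$ of some $c \in C$, then every minimal explanation $v^\ast$ is within distance $k$ of $c^\ast \in S$, so the instance is positive.

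The only delicate point is the minimality characterization, in particular that no minimal explanation can contain both $x_i$ and $x_i'$. This is where equality is essential: it lets two distinct hypotheses each entail $m_i$ independently, which the strictly essentially positive and negative fragments of Lemma~\ref{lem:RepABDsub-inP} cannot express.
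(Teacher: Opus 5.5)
Your proposal is correct and is essentially the paper's proof. The paper uses the same reduction from the covering radius problem: each coordinate gets two hypotheses tied by equality constraints to a manifestation variable, codewords are encoded by selecting one of the two copies, and $k = 2r$. Your explicit transfer to $\Gamma$ via Lemma~\ref{lem:baseIndneq} and your precise characterization of $\MinExpl(I)$ just spell out what the paper leaves implicit.
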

\iflong
\begin{proof}
    We reduce from the covering radius problem. 
    Let $(C,r)$ be an  arbitrary instance of the covering radius problem, where $C \subseteq \{0,1\}^n$ is a set of codewords and $r \leq n$ the radius. 
    
    We construct a $\RepABDsub(\Gamma)$ instance $(\KB, H, M, S, k)$ as follows. 
    We have a set of code words $C$ that can be interpreted as truth assignments on $n$ variables. For every $x \in \{1, \ldots, n\}$ we introduce two fresh variables $x_1, x_2$ and add the following rules to $\KB$ : $(x_1= x_2)$ and $(x_2 = M_x)$, and we add $x_1$ and $x_2$ to  $H$ and $M_x$ to $M$. We reduce the set $C$ to a new set $C'$ as such: for every code word $c \in C$ we make a code word $c' \in C' \subseteq \{0,1\}^{2n}$ : for every coordinate $1 \leq i \leq n$ where $c[i] = 1$ we set two coordinates $c'[i] = 1$ and $c'[i+n] = 0$ in $c'$, and the reverse if $c[i] = 0$ (note that these are coordinates of $c'$ being set to values not to be confused with the rules in $\KB$, we do not need constants in $\Gamma$).
    We put $S = \{(c_i)_{set} \mid c_i \in C'\}$
    ($(c_i)_{\mathrm{set}}$ is the set representation of the vector $c_i$ as in Lemma~\ref{lem:P1ABDcoNPhard}).
    Finally, we put $k = 2r$. This concludes the reduction.
    For correctness: we have simply duplicated all coordinates in all the codewords, in a manner that forces one and only one of the two copies to be part of a subset minimal explanation. If neither is part of the explanation, the variable $M_x \in M$ cannot be entailed, if both are chosen, then it is not a subset-minimal explanation. These subset-minimal explanations are in a one-to-one correspondence with the codewords from $\{0,1\}^n$.  A difference in a variable in between codewords in $\{0,1\}^n$ will result in a difference of weight 2 in our reduction, thus $k = 2r$.
\end{proof}
\fi

Furthermore, we also obtain all hardness results from $\ABD$ again, analogously to Lemma~\ref{lem:ABD-to-RepABD}.

\begin{lemma} ($\star$) \label{lem:ABD-to-RepABDsub}
Let $\Gamma$ be a constraint language. Then $\overline{\ABD(\Gamma)} \preduction \RepABDsub(\Gamma)$.
\end{lemma}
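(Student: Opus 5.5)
We mimic the reduction of Lemma~\ref{lem:ABD-to-RepABD} directly. Given an instance $I=(\KB, H, M)$ of $\ABD(\Gamma)$, we output the $\RepABDsub(\Gamma)$ instance $(\KB, H, M, \emptyset, 1)$. This is clearly computable in polynomial time, and it does not alter the knowledge base, so no expressibility assumption on $\Gamma$ is needed. The condition $S \subseteq \MinExpl(I)$ holds trivially for $S=\emptyset$.

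For correctness we use two observations. First, with $S=\emptyset$ the union $\bigcup_{E \in S} S^M_k(E)$ is empty. Hence the constructed instance is positive if and only if $\MinExpl(I)=\emptyset$. Second, $\MinExpl(I)=\emptyset$ if and only if $\AllExpl(I)=\emptyset$. One direction is immediate, since every minimal explanation is an explanation. For the other, if $E$ is an explanation then, since $H$ is finite, $\AllExpl(I)$ is a finite nonempty family. Any $\subseteq$-minimal member of this family, for instance one of minimum cardinality contained in $E$, is a subset-minimal explanation.

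Putting these together, $(\KB,H,M)$ is a negative instance of $\ABD(\Gamma)$ exactly when $\AllExpl(I)=\emptyset$. By the second observation this is exactly when $\MinExpl(I)=\emptyset$, and by the first it is exactly when $(\KB,H,M,\emptyset,1)$ is a positive instance of $\RepABDsub(\Gamma)$.

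The only step with any content is the equivalence between the existence of an explanation and the existence of a minimal one, which relies on the finiteness of $H$. The rest is routine bookkeeping, so we do not expect a genuine obstacle.
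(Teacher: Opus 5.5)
Your proposal is correct and matches the paper's proof: the same map to $(\KB,H,M,\emptyset,1)$, and the same observation that the empty union makes the instance positive exactly when $\MinExpl(I)=\emptyset$. Your finiteness argument, that an explanation exists iff a subset-minimal one does, spells out a step the paper only asserts in passing.
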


\iflong
\begin{proof}
    Given an instance $(\KB, H, M)$ of $\ABD(\Gamma)$, 
    we map it to $(\KB, H, M, \emptyset, 1)$ of $\RepABDsub(\Gamma)$.

One easily verifies that with $S =  \emptyset$ it holds that
$$\bigcup_{E \in S} S^M_k(E) = \bigcup_{E \in \emptyset} S^M_k(E) = \emptyset.$$
%
%
Now, let $(\KB, H, M)$ be a positive  instance of $\ABD(\Gamma)$, and  $E \subseteq H$ a (subset-minimal) explanation. This $E$ is not represented by $S = \emptyset$, as shown above. Thus, $(\KB, H, M, \emptyset, 1)$ is a negative instance of $\RepABDsub$.
Conversely, let $(\KB, H, M)$ be a negative instance of $\ABD(\Gamma)$. That is, $\AllExpl(I) = \MinExpl(I) = \emptyset$. 
Since $\bigcup_{E \in S} S^M_k(E) =  \emptyset$, it holds that $\bigcup_{E \in S} S^M_k(E) = \MinExpl(I)$. Thus, 
$(\KB, H, M, \emptyset, 1)$ is a positive instance of $\RepABDsub(\Gamma)$.
\end{proof}
\fi

The following Theorem summarizes the results on $\RepABDsub$.

\begin{theorem}($\star$)
Let $\Gamma$ be a constraint language. Then $\RepABDsub(\Gamma)$ is
\begin{enumerate}
    \item $\Pi_2^P$-complete if $\IN_2 \subseteq \clos{\Gamma} \subseteq \II_2$ or $\II_0 \subseteq \clos{\Gamma} \subseteq \II_2$.
    \item $\co\NP$-hard and $\NP$-hard if $\IN \subseteq \clos{\Gamma}$.
    \item $\co\NP$-complete if $C \subseteq \clos{\Gamma} \subseteq D$ for $C \in \{\IM, \ID, \IL\}$ and $D \in \{\IE_2, \IL_2, \IV_2, \ID_2\}$.
    \item $\co\NP$-complete if $\IBF \subseteq \clos{\Gamma} \subseteq D$ for $D \in \{\IS{}{12}, \IS{}{02}\}$, and $\Gamma\not\subseteq \closneq{\mathrm{EP}^-}$ and $\Gamma\not\subseteq \closneq{\mathrm{EN}^-}$.
    \item $\in \Ptime$ otherwise ($\Gamma\subseteq \closneq{\mathrm{EP}^-}$ or $\Gamma\subseteq \closneq{\mathrm{EN}^-}$).
\end{enumerate}
\end{theorem}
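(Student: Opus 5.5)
The plan is to mirror the proof of the classification theorem for $\RepABD(\Gamma)$, going case by case and using the $\subseteq$-minimal analogues of the earlier lemmas.

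\emph{Upper bounds.} First I would check that Lemma~\ref{lem:PiP2-membership} carries over to $\RepABDsub$. The only change is that the guessed $E'$ must also be verified to be subset-minimal. This is a coNP check: for each $x\in E'$, test that $E'\setminus\{x\}$ is not an explanation, which takes polynomially many NP/coNP oracle calls. Hence membership in $\Pi^P_2$ still holds.

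For the coNP upper bound in case~3, I would adapt Lemma~\ref{lemma:sat_to_rep}. If $\SAT(\Gamma^+)\in\Ptime$, then for a guessed $E'$ we can test in polynomial time both that it is an explanation and that every $E'\setminus\{x\}$ fails to be one. Each such test is one satisfiability check plus $|M|$ entailment checks. Since $\clos{\Gamma}\subseteq D$ is Schaefer, $\SAT(\Gamma^+)\in\Ptime$. For case~4, $\IS{}{12}$ and $\IS{}{02}$ are contained in $\IE_2$ and $\IV_2$ respectively, so the same argument applies.

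\emph{Lower bounds.} Cases~1 and~2 follow from Lemma~\ref{lem:ABD-to-RepABDsub} together with the known complexity of $\ABD(\Gamma)$ from \cite{NordhZ08}: it is $\Sigma^P_2$-complete in case~1 and coNP-hard when $\IN\subseteq\clos{\Gamma}$. This gives $\Pi^P_2$-hardness and NP-hardness, respectively.

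For all remaining coNP-hardness claims I would use Lemma~\ref{lem:RepABDsub-coNP-hard}, which only requires $R_=\in\closneq{\Gamma}$. So the key step is to verify that $R_=$ is efpp-definable in every hard case.
\begin{itemize}
\item In case~4 this is exactly Lemma~\ref{lem:nessp-nessn-equality}.
\item In cases~2 and~3, $\clos{\Gamma}$ contains $\IM$, $\ID$, $\IL$ or $\IN$. None of these is contained in $\IS{}{02}$ or $\IS{}{12}$, so $\Gamma\not\subseteq\closneq{\mathrm{EP}^-}$ and $\Gamma\not\subseteq\closneq{\mathrm{EN}^-}$. Lemma~\ref{lem:nessp-nessn-equality} then gives $R_=\in\closneq{\Gamma}$. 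The same reasoning covers case~1, although that case is already settled by the $\Pi^P_2$ bound.
\end{itemize}

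\emph{Tractable case.} Case~5 is Lemma~\ref{lem:RepABDsub-inP}.

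\emph{Exhaustiveness.} Finally I must check that the five cases exhaust Post's lattice. Every co-clone not covered by cases~1--3 is contained in $\IS{}{12}$ or $\IS{}{02}$; this includes $\IS{2}{1}$ and $\IR_1$, which in contrast to the $\RepABD$ classification now fall under case~4 or~5. Within these co-clones, the dichotomy between case~4 and case~5 is exactly the condition in Lemma~\ref{lem:nessp-nessn-equality}.

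I expect the main obstacle to be this lattice bookkeeping. It requires confirming that no co-clone below $\IS{}{12}$ or $\IS{}{02}$ escapes both $\closneq{\mathrm{EP}^-}$ and $\closneq{\mathrm{EN}^-}$ without also efpp-defining equality; this is where I would rely directly on the cited results of \cite{MaMeSc23}.
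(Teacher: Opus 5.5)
Your proposal is correct and is essentially the paper's own proof. It uses the same ingredients in the same places: the minimality-augmented versions of Lemmas~\ref{lem:PiP2-membership} and~\ref{lemma:sat_to_rep} for membership, Lemma~\ref{lem:ABD-to-RepABDsub} for $\Pi^P_2$- and NP-hardness, Lemma~\ref{lem:nessp-nessn-equality} combined with Lemma~\ref{lem:RepABDsub-coNP-hard} for coNP-hardness, and Lemma~\ref{lem:RepABDsub-inP} for case~5. Your extra checks (that none of $\IM,\ID,\IL,\IN$ lies below $\IS{}{02}$ or $\IS{}{12}$, and that the leftover co-clones lie below one of these two) only make explicit what the paper leaves implicit. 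One small nit: once $E'$ is known to be an explanation, testing that $E'\setminus\{x\}$ is not one is an NP (non-entailment) check rather than a coNP one, which makes no difference inside $\Sigma^P_2$.
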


\iflong
\begin{proof}
We address each case separately.

\begin{enumerate}
    \item $\Pi^P_2$-membership: analogously to Lemma~\ref{lem:PiP2-membership}, replacing $\AllExpl(I)$ by $\MinExpl(I)$ and adding to step (1) the check of subset minimality of $E'$: for all $x \in E'$ verify that $E' \setminus \{x\}$ is not an explanation (using another NP-oracle). For
    $\Pi^P_2$-hardness we use Lemma~\ref{lem:ABD-to-RepABDsub}.
    \item $\co\NP$-hardness: via Lemma~\ref{lem:RepABDsub-coNP-hard}  and Lemma~\ref{lem:nessp-nessn-equality}. 
     We get $\NP$-hardness via Lemma~\ref{lem:ABD-to-RepABDsub}.
    \item $\co\NP$-membership: here $\SAT(\Gamma^+) \in \Ptime$ \cite{Schaefer78} and the analogue of Lemma~\ref{lemma:sat_to_rep} holds: we only need to add a check for subset minimality which is in $\Ptime$ this time. For
    $\co\NP$-hardness we use Lemma~\ref{lem:RepABDsub-coNP-hard}  and Lemma~\ref{lem:nessp-nessn-equality}.
    \item Membership and hardness follow exactly as in the previous case.
    \item Confer Lemma~\ref{lem:RepABDsub-inP}.
\end{enumerate}
\end{proof}
\fi

\iflong
See Figure~\ref{fig:post-latticeRepABDsub} for a visualization of this classification.

\begin{figure*}[htp]
    \centering
       \includegraphics[width=\textwidth]{iclp26/schaefers-lattice-RepABD-subc.eps}
    \caption{Illustration of complexity via Post's lattice.}
    \label{fig:post-latticeRepABDsub}
\end{figure*}
\fi

\section{Parameterized complexity} \label{sec:para}

Having exhausted all possible sources of polynomial-time solvability we now turn our attention to parameterized complexity. We write $\pRepABD(\Gamma, \ell)$ where $\ell$ is the type of parameter in question. We consider several different parameters beginning with $k$ itself (i.e., the $k$ in the given $\RepABD(\Gamma)$ instance), and then continuing with $|H|$, $|M|$, and $|S|$. 




\subsection{Parameter $k$}

Given a $\RepABD(\Gamma)$ instance $(\KB, H, M, S, k)$ we first consider the $k$ itself as parameter. While this may feel like an obvious parameter choice we will soon prove that the parameter is not likely to help much (coW[1]-hard) for most choices of $\Gamma$.

\begin{lemma} 
$\pRepABD(\Gamma,k)$ is coW[1]-hard for any $\Gamma$ such that $\IS{2}{1} \subseteq \clos{\Gamma}$.
\end{lemma}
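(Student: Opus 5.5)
We show W[1]-hardness of the complement $\overline{\pRepABD(\Gamma,k)}$ by an FPT-reduction from $\wsat$, parameterized by the weight. The parameter $k$ of the constructed instance will depend only on the weight bound. The plan has three steps.

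\emph{Step 1: expressing negative 2-clauses.} Since $\IS{2}{1} \subseteq \clos{\Gamma}$, the relation $(\neg x \lor \neg y)$ lies in $\clos{\Gamma}$. This relation is prime: it has no fictitious coordinate, since $(1,1)\notin R$ but $(0,1)\in R$, and no redundant coordinate, since $(0,1)\in R$. Lemma~\ref{lemma:can_define} therefore gives $(\neg x\lor\neg y)\in\closneq{\Gamma}$. By Lemma~\ref{lem:baseIndneq}, which is a polynomial reduction that keeps $H$, $S$ and $k$ unchanged and is hence also an FPT-reduction, it suffices to treat $\Gamma=\{(\neg x\lor\neg y)\}$. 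The relation is non-constant, so Lemma~\ref{lemma:weak_def} gives $\mathsf{t}\in\closneq{\Gamma}$, and we may use $\mathsf{t}$-constraints as well.

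\emph{Step 2: the construction.} Let $(\varphi,w)$ be an instance of $\wsat$ with variables $x_1,\dots,x_n$; we ask for a model of weight $\ge w$. We take $\KB=\varphi\wedge\bigwedge_i \mathsf{t}(x_i)$, $H=\{x_1,\dots,x_n\}$ and $M=\emptyset$. Then $\AllExpl(I)$ is exactly the family of sets $E\subseteq H$ with $E\wedge\varphi$ satisfiable. Because $\varphi$ is negative, this holds iff the set $E$, viewed as an assignment, is a model of $\varphi$, i.e.\ an independent set in the conflict graph. We set $S=\{\emptyset\}$, which is valid since $\emptyset$ is always an explanation here, and $k=w-1$.

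\emph{Step 3: correctness.} An explanation $E$ is not $k$-represented by $S$ iff $d(\emptyset,E)=|E|\ge w$. Hence the instance is a no-instance of $\RepABD$ iff $\varphi$ has a model of weight $\ge w$. The new parameter is $w-1$, so this is an FPT-reduction from $\wsat$ to the complement problem, and $\pRepABD(\Gamma,k)$ is coW[1]-hard.

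The main obstacle is Step 1, namely making sure the clause really is efpp-definable and that the $\mathsf{t}$-atoms (or any replacement) do not change the set of explanations. The $\mathsf{t}$-atoms are in fact unnecessary if every $x_i$ occurs in $\varphi$; any variable not occurring in $\varphi$ can be removed from $H$. We do this by adding it directly to the solution as a free unit, or equivalently by lowering $w$ by one for each such variable, since it can always be set to true.
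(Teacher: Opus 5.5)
Your proposal is correct and follows the paper's proof essentially step for step. The reduction is the same one from \wsat{} with $\KB=\varphi$, $H$ the variables, $M=\emptyset$ and $S=\{\emptyset\}$, lifted to $\Gamma$ via primality of $(\neg x\lor\neg y)$, Lemma~\ref{lemma:can_define} and Lemma~\ref{lem:baseIndneq}.

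Your threshold $k=w-1$ is the right one; the paper writes $k'=k+1$, which contradicts its own observation that $S$ represents exactly the explanations of weight below the bound. Your extra $\mathsf{t}$-atoms are a harmless safeguard ensuring $H\subseteq\var(\KB)$. The only detail to add is that the trivial cases $w\le 1$, where $k=w-1$ would violate $k\geq 1$, should be mapped to fixed instances.
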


\begin{proof}
    We begin by giving a reduction from the W[1]-complete problem $\wsat$ to the complement of $\pRepABD(\Delta, k)$, where $\Delta = \{(\neg x_1 \lor \neg x_2)\}$ contains a single negative 2-clause. In the end, we show why this also gives coW[1]-hardness for $\pRepABD(\Gamma,k)$. 
    
    Hence, let $(\varphi, k)$ be an instance of $\wsat$, that is, $\varphi$ is a negative $2\text{-}CNF$ formula over $n$ variables, $x_1, \dots, x_n$,  and the question is whether there is a model of weight $\geq k$. We map $(\varphi, k)$ to the instance $(\KB, H, M, S, k')$ of $\pRepABD(\Delta, k)$, where
        $\KB = \varphi$, $H  = \{x_1, \dots, x_n\}$, $M = \emptyset$, $S = \{\emptyset\}$, $k' = k + 1$.
    Note that $\KB$ only uses the constraint $(\neg x \lor \neg y)\in\IS{2}{1}$.
    
    To prove correctness we use the following observation.
    \begin{equation}\label{obsRep}
    \bigcup_{E \in S} S_{k'}(E) = S_{k'}(\emptyset) =\text{all explanations of weight} \leq k' < k
    \end{equation}
    In other words, $S$ represents precisely all explanations of weight less than $k$.

    Assume $(\varphi,k)$ is a positive instance of $\wsat$. That is, there is a model $\sigma$ of $\varphi$ of weight at least $k$. We define $E = \{x \in H \mid \sigma(x) = 1\}$ and note that $|E| \geq k$. By construction, $\KB \wedge E$ is consistent and entails $M = \emptyset$. Therefore, $E$ is an explanation. Since $|E| \geq k$, $E$ is not represented by $S$ (confer observation~\ref{obsRep}). Therefore, $(\KB, H, M, S, k')$ is a negative instance.

    Conversely, assume $(\KB, H, M, S, k')$ is a negative instance. That is, there is an explanation $E$ that is not represented. By observation~\ref{obsRep} we conclude that $|E| \geq k$. Since $E$ is an explanation, $\KB \wedge E$ must be consistent. We conclude that $\varphi$ must admit a model of weight $\geq k$, thus $(\varphi, k)$ is a positive instance of  $\wsat$.

    Last, we observe that $\Delta$ contains a single prime relation (no fictitious or redundant arguments). Hence, if $\Delta \subseteq \clos{\Gamma}$ then $\Delta \subseteq \closneq{\Gamma}$ (Lemma~\ref{lemma:can_define}), and we then apply the reduction in Lemma~\ref{lem:baseIndneq} together with the observation that this reduction does not affect the parameter $k$ at all.
\end{proof}

We prove an analogous bound for any language $\Gamma$ that can express implication.

\begin{lemma} ($\star$)
$\pRepABD(\Gamma,k)$ is coW[1]-hard for any $\Gamma$ such that $\IM \subseteq \clos{\Gamma}$.
\end{lemma}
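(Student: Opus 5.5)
The plan is to reduce from \textsc{Independent Set} (parameterized by the solution size) to the complement of $\pRepABD(\Delta,k)$, where $\Delta=\{(x_1\rightarrow x_2)\}$. Afterwards I transfer the result to every $\Gamma$ with $\IM\subseteq\clos{\Gamma}$, exactly as in the preceding lemma.

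The choice $S=\{\emptyset\}$ used for $\IS{2}{1}$ does not work here. With implications only, every knowledge base is $1$-valid, so large explanations always exist and are never hard to find. I will instead take $S=\{H\}$, so that distance is measured by the number of hypotheses an explanation \emph{omits}.

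\textbf{Construction.} Given a graph $G=(V,E_G)$ and an integer $k\geq 2$:
\begin{itemize}
\item For each vertex $v$, introduce a hypothesis variable $x_v$; let $H=\{x_v\mid v\in V\}$.
\item For each edge $e=\{u,v\}$, introduce a manifestation variable $m_e$ and add the constraints $(x_u\rightarrow m_e)$ and $(x_v\rightarrow m_e)$ to $\KB$.
\item Let $M=\{m_e\mid e\in E_G\}$, $S=\{H\}$ and $k'=k-1$.
\end{itemize}

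\textbf{Structure of the explanations.}
\begin{itemize}
\item Since $\KB$ is $0$- and $1$-valid, $\KB\land E$ is consistent for every $E\subseteq H$. Hence $E\subseteq H$ is an explanation iff every $m_e$ is forced, i.e.\ iff every edge has at least one endpoint $v$ with $x_v\in E$.
\item In particular $H$ itself is an explanation, so $S\subseteq\AllExpl(I)$ and the instance is well formed.
\item For any explanation $E$ we have $d(E,H)=|H\setminus E|$.
\item The explanations are exactly the sets $H\setminus D$ with $D$ an independent set of $G$. Indeed, $m_e$ fails to be entailed exactly when both endpoints of $e$ are removed.
\end{itemize}

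\textbf{Correctness.} By the last point, some explanation is not $k'$-represented by $S$ iff there is an independent set $D$ with $|D|>k'$, i.e.\ $|D|\geq k$. So $(G,k)$ is a yes-instance of \textsc{Independent Set} iff $(\KB,H,M,S,k')$ is a no-instance of $\RepABD(\Delta)$. The new parameter is $k'=k-1$, and the reduction is computable in polynomial time.

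\textbf{Transfer to $\Gamma$.} The relation $(x_1\rightarrow x_2)$ has no fictitious argument (each coordinate takes both values and matters) and no redundant argument (the tuple $(0,1)$ separates the two coordinates), so it is prime. Therefore $\IM\subseteq\clos{\Gamma}$ gives $\Delta\subseteq\closneq{\Gamma}$ by Lemma~\ref{lemma:can_define}. Lemma~\ref{lem:baseIndneq} then yields a reduction that leaves $S$ and $k$ unchanged, since fresh existential variables are kept outside $H$. This reduction is therefore also parameter-preserving, which gives coW[1]-hardness of $\pRepABD(\Gamma,k)$.

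\textbf{Main obstacle.} The step needing care is choosing the right $S$. With $S=\{\emptyset\}$ the problem becomes trivial for $1$-valid $\KB$. Switching to $S=\{H\}$ turns ``far from $S$'' into ``omits many hypotheses while still covering $M$,'' which is precisely the independent-set condition.
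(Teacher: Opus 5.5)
Your proposal is correct and is essentially the paper's proof. The paper uses the same construction: one manifestation per edge (equivalently, per negative 2-clause) with two implications into it, $S=\{H\}$ and $k'=k-1$. The paper reduces from $\wsat$ where you reduce from Independent Set, but these are the same problem. The only other difference is the transfer step: you use primality of $(x_1\rightarrow x_2)$ together with Lemma~\ref{lemma:can_define}, whereas the paper invokes Lemma~\ref{lem:nessp-nessn-equality}. Both give $(x\rightarrow y)\in\closneq{\Gamma}$, after which Lemma~\ref{lem:baseIndneq} preserves $k$.
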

\iflong
\begin{proof}
    We give a reduction from the W[1]-complete problem $\wsat$ to the complement of $\pRepABD$. Let  $(\varphi, k)$ be an instance of $\wsat$, that is, $\varphi$ is a negative $2\text{-}CNF$ formula over $n$ variables $x_1, \dots, x_n$. 
    Let $\varphi = \bigwedge_{i=1}^m C_i$ where each clause is of the form $C_i = (\neg x_i \lor \neg y_i)$. We introduce a fresh variable $c_i$ for each clause $C_i$ and map $(\varphi, k)$ to the instance $(\KB, H, M, S, k')$ as follows.
    \begin{align*}
        \KB &= \bigwedge_{i=1}^m (x_i \rightarrow c_i \wedge y_i \rightarrow c_i)\\
        H  &= \{x_1, \dots, x_n\}\\
        M &= \{c_1, \dots, c_m\} \\
        S &= \{H\}\\
        k' &= k - 1
    \end{align*}
    Note that $\KB$ only uses the constraint $(x \rightarrow y) \in \IM$ and we therefore prove hardness with respect to the language $\{(x \rightarrow y)\}$. However, the results carry over to $\pRepABD(\Gamma, k)$ by combining 
Lemma~\ref{lem:baseIndneq} and Lemma~\ref{lem:nessp-nessn-equality} (again, note that the reduction does not affect $k$).
    
    To prove correctness we use the following observation.
    \begin{equation}\label{obsRep2}
    \bigcup_{E \in S} S_{k'}(E) = S_{k'}(H) =\text{all explanations of weight} \geq |H| - k' > |H| - k
    \end{equation}

In other words, $S$ represents precisely all explanations of weight greater than $|H| - k$.

    Assume $(\varphi,k)$ is a positive instance of $\wsat$. That is, there is a model $\sigma$ of $\varphi$ of weight at least $k$. We define $E = \{x \in H \mid \sigma(x) = 0\}$ and note that $|E| \leq |H| - k$.
    Since $\sigma$ satisfies each clause $C_i = (\neg x_i \lor \neg y_i)$, it assigns 0 to either $x_i$ or $y_i$ or both. We conclude that $E$ contains for each $C_i$ either $x_i$ or $y_i$ or both. By construction it holds therefore that $\KB \wedge E$ entails each $c_i$, and therefore $M$. Furthermore, $\KB \wedge E$ is consistent, therefore $E$ is an explanation. Since $|E| \leq |H| - k$, we conclude that $E$ is not represented by $S$ (confer observation~\ref{obsRep2}). Therefore, $(\KB, H, M, S, k')$ is a negative instance.

    Conversely, assume $(\KB, H, M, S, k')$ is a negative instance. That is, there is an explanation $E$ that is not represented. By observation~\ref{obsRep2} we conclude that $|E| \leq |H| - k$. Since $E$ is an explanation, $\KB \wedge E$ entails $M$ and thus each $c_i$.
    By construction of $\KB$ this implies that $E$ contains for each $i$ either $x_i$ or $y_i$ or both. Define $\sigma(x) = 0$ if $x \in E$  and $\sigma(x) = 1$ otherwise. We observe that $\sigma$ is an assignment of weight $\geq k$ and, for each $i$, assigns 0 to either $x_i$ or $y_i$ or both. Thus, $\sigma$ is a model of $\varphi$, of weight $\geq k$. In conclusion, $(\varphi, k)$ is a positive instance of $\wsat$.
\end{proof}
\fi

The three main classes that we are missing for a complete classification are linear equations ($\Gamma \subseteq \IL_2$), complementive languages ($\Gamma \subseteq \IN_2$), and any language below $\IS{}{02}$ (essentially positive). We do not fully manage to describe these cases but can for the latter at least prove that its parameterized complexity essentially coincides with the parameterized complexity of the covering radius problem (with parameter $r$). We prove a slightly stronger result and prove that we only need to consider strictly essentially positive clauses, i.e., we do not need the equality relation in the reduction.

\begin{lemma}  ($\star$)\label{lemma:covering_radius_equivalence}
The covering radius problem (with parameter $r$) is FPT-equivalent with the complement of $\pRepABD(\mathrm{EP}^-, k)$.
\end{lemma}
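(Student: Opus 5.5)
The plan is to give two parameter-preserving polynomial-time reductions, one in each direction, each keeping $k = r$. Both directions rest on the fact that, over $\mathrm{EP}^-$, the set of explanations is always a subcube of $2^H$, and a subcube is exactly the ambient space $\{0,1\}^n$ of a covering radius instance.

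\emph{From covering radius to $\overline{\pRepABD(\mathrm{EP}^-,k)}$.} Follow the proof of Lemma~\ref{lem:P1ABDcoNPhard} with $D = \mathsf{t}$. First note that $\mathsf{t} \in \closneq{\mathrm{EP}^-}$, via the efpp-definition $\mathsf{t}(x) := \exists y\,.\,(x \lor y)$. Concretely, given $(C,r)$ with $C \subseteq \{0,1\}^n$, build $\KB = \bigwedge_{i=1}^n (x_i \lor y_i)$ with fresh $y_i \notin H$. Put $H = \{x_1,\dots,x_n\}$, $M = \emptyset$, $S = \{v_{set} \mid v \in C\}$ and $k = r$. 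Every $E \subseteq H$ is consistent with $\KB$ (set all $y_i$ true) and trivially entails $M = \emptyset$, so $\AllExpl(I) = 2^H$. As in Lemma~\ref{lem:P1ABDcoNPhard}, the covering radius of $C$ exceeds $r$ iff some $E$ is not $k$-represented by $S$. The parameter is unchanged.

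\emph{From $\overline{\pRepABD(\mathrm{EP}^-,k)}$ to covering radius.} This is the main step: characterise $\AllExpl(I)$ in polynomial time. Let $(\KB,H,M,S,k)$ be an instance whose atoms are positive clauses, including positive units, and negative units $(\neg x)$.
\begin{enumerate}
\item Let $F$ be the set of variables occurring in negative units. Positive clauses never force a variable to be false, so $F$ is exactly the set of variables forced false.
\item $\KB$ is satisfiable iff no positive clause has all of its variables in $F$. If it is, then $\KB \land E$ is satisfiable iff $E \cap F = \emptyset$, since setting every variable outside $F$ to true is a model.
\item Let $T$ be the set of variables $m$ for which some positive clause has all of its variables in $F \cup \{m\}$; this includes positive units. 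For $E \cap F = \emptyset$, I claim $\KB \land E \models m$ iff $m \in E$ or $m \in T$. To check this, set everything outside $F \cup \{m\}$ to true and $m$ to false; this assignment is a model unless some clause lies inside $F \cup \{m\}$.
\end{enumerate}
Hence, with $A = M \setminus T$, either there is no explanation (when $\KB$ is unsatisfiable, or $A \not\subseteq H \setminus F$), or
\[
\AllExpl(I) = \{E \mid A \subseteq E \subseteq H \setminus F\}.
\]

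In the degenerate cases the answer is computable in polynomial time, and we output a fixed yes- or no-instance of covering radius accordingly. Otherwise let $H' = H \setminus (F \cup A)$. Every $s \in S$ is itself an explanation, so $s$ contains $A$ and avoids $F$. Therefore $d(E,s) = d(E \cap H', s \cap H')$ for every explanation $E$. We output the code $C = \{(s \cap H')_{vec} \mid s \in S\} \subseteq \{0,1\}^{|H'|}$ with $r = k$. Some explanation is not $k$-represented iff some vector in $\{0,1\}^{H'}$ is at distance greater than $k$ from all of $C$. This holds also when $S = \emptyset$, where any vector is far from the empty code, provided $H'$ yields at least one vector; that is true even for $|H'| = 0$.

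The main obstacle I expect is the entailment characterisation in step 3. The case where positive clauses combined with negative units force a variable true must be handled carefully. I will rely on the fact that no clause can force anything false, so that a single "all remaining true" witness assignment suffices for both the consistency check and the non-entailment check.
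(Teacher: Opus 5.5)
Your proposal is correct and follows the same route as the paper. For the forward direction, the paper uses the Lemma~\ref{lem:P1ABDcoNPhard} construction with $\mathsf{t}$ expressed over $\mathrm{EP}^-$, which you inline via $\exists y\,.\,(x\lor y)$. For the backward direction, you likewise characterise $\AllExpl(I)$ as a subcube, strip the fixed coordinates, and output $(C, r=k)$.

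Your version is in fact more careful than the paper's. The paper asserts $\AllExpl(I)=\{E\subseteq H \mid E_{min}\subseteq E\}$, which overlooks that hypotheses occurring in negative unit clauses (your set $F$) can never belong to a consistent explanation. Excluding $F$ from $H'$, as you do, is genuinely needed for the reduction to be correct.
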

\iflong
\begin{proof}
For the first direction we use Lemma~\ref{lem:P1ABDcoNPhard} (observe that the parameter is unchanged) and the observation that $\sf{t} \in \closneq{\mathrm{EN}^-}$ (this follows from Lemma~\ref{lemma:weak_def}).

For the other direction let $I = (\KB,H,M,S,k)$ an arbitrary instance of $\RepABD$.
Note that $\KB$ contains unit clauses (positive and negative) and positive clauses of size $\geq 2$. With this one easily observes that given an $E \subseteq H$ and an $m \in M$ it holds $\KB \land E \models m$ if and only if $\KB \models m$ or $E \models m$. We conclude that, provided there are explanations, there is a unique subset-minimal explanation, namely
 $$E_{min} = \{m \in M \mid \KB \not \models m\},$$
and that $\AllExpl(I) = \{E \subseteq H \mid E_{min} \subseteq E\}$ in this case.
It can happen that there are no explanations because either 1) $\KB \land E_{min}$ is unsatisfiable, or because 2) $E_{min} \not \subseteq H$. In this case $\AllExpl(I)=\emptyset$ and also the given $S \subseteq \AllExpl(I)$ is empty, and therefore $k$-represents all explanations by definition. We map in this case to a fixed negative instance of the covering radius problem.

Suppose now that $\AllExpl(I)\neq\emptyset$. We define $n = |H\setminus E_{min}|$, rename the variables in $H \setminus E_{min}$ as $x_1, \dots, x_n$ and define $S' = \{E\setminus E_{min} \mid E \in S \}$.
We map the instance $I$ to $(C,r)$ as follows. Recall that $E_{vec}$ denotes the characteristic vector of the set $E \subseteq \{x_1, \dots, x_n\}$ (as in Lemma~\ref{lem:P1ABDcoNPhard}).
\begin{align*}
    r &= k \\
    C &= \{E_{vec} \mid E \in S'\}
\end{align*}

%
%
For correctness, suppose first that $I = (\KB,H,M,S,k)$ is a positive instance. That means any $E \in \AllExpl(I) = \{E \subseteq H \mid E_{min} \subseteq E\}$ is $k$-represented by $S$, or in other words, for any $E \in \AllExpl(I)$ there is an $E' \in S$ such that $d(E, E') \leq k$.  Observe that $E_{min}$ has no influence on the distance $d(E, E')$ for any $E,E' \in \AllExpl(I)$, since all explanations contain $E_{min}$. Therefore, also the following holds true: for any $E \subseteq H\setminus E_{min} = \{x_1, \dots, x_n\}$ there is an $E' \in S' =  \{E\setminus E_{min} \mid E \in S \}$ such that $d(E, E') \leq k$.
By definition of $r$ and $C$ this implies immediately that for any vector $v\in \{0,1\}^n$ there is a vector $u \in C$ within hamming distance $r = k$. In other words, the covering radius of $C$ is less than $r$, hence $(C, r)$ is a negative instance.

Suppose now $(C, r)$ is a negative instance of the covering radius problem.
That is, for any vector $v\in \{0,1\}^n$ there is a vector $u \in C$ within hamming distance $r$. 
By definition of $r$ and $C$ this implies that for any $E \subseteq H\setminus E_{min} = \{x_1, \dots, x_n\}$ there is an $E' \in S' =  \{E\setminus E_{min} \mid E \in S \}$ such that $d(E, E') \leq k$. If we add the set $E_{min}$ to all considered sets, and reverse the renaming of variables, we obtain that for any $E \in \AllExpl(I)$ there is an $E' \in S$ such that $d(E, E') \leq k$. That is, 
$I = (\KB,H,M,S,k)$ is a positive instance.
\end{proof}
\fi

\subsection{Parameter $|H|$}

Next, we consider the size of the hypothesis, $|H|$, as parameter, where we obtain a straightforward parameterized dichotomy. The FPT case can be proven as follows.

\begin{lemma}
$\pRepABD(\Gamma,|H|)\in\FPT$ if $\Gamma$ is Schaefer.
\end{lemma}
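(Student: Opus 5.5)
The approach is brute force over all candidate explanations, using that $|H|$ bounds their number by $2^{|H|}$. Let $(\KB, H, M, S, k)$ be an instance of $\RepABD(\Gamma)$ with $\Gamma$ Schaefer. Since $S \subseteq \AllExpl(I) \subseteq 2^H$, we may assume $|S| \leq 2^{|H|}$; otherwise the input is malformed and can be rejected. We then enumerate all $2^{|H|}$ subsets $E \subseteq H$. For each $E$ we decide whether $E \in \AllExpl(I)$, and if so whether some $E' \in S$ satisfies $d(E,E') \leq k$. We answer yes if and only if every explanation passes this test.

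The key step is deciding in polynomial time whether a given $E$ is an explanation. Both conditions reduce to satisfiability of $\Gamma^+$-formulas.
\begin{enumerate}
\item Consistency: $\KB \land E$ is satisfiable iff the formula $\KB \land \bigwedge_{x \in E} (x)$ is satisfiable.
\item Entailment: $\KB \land E \models M$ holds iff, for every $m \in M$, the formula $\KB \land \bigwedge_{x \in E}(x) \land (\neg m)$ is unsatisfiable.
\end{enumerate}
Entailment of a conjunction of variables is checked variable by variable, so no non-Schaefer clause such as $\bigvee_{m \in M} \neg m$ is ever needed. This takes $|M|+1$ calls to a $\SAT(\Gamma^+)$ solver.

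The main point to verify is that $\SAT(\Gamma^+) \in \Ptime$ whenever $\Gamma$ is Schaefer. Horn, dualHorn, affine and $2\text{-}CNF$ languages are each closed under adding the unit clauses $(x)$ and $(\neg x)$. Hence $\Gamma^+$ is still Schaefer, and Schaefer's theorem~\cite{Schaefer78} gives polynomial time, as already used in the proof of the classical classification. Note that we must not rely on $0$- or $1$-validity for this step, since adding constants destroys those properties. This is exactly why the hypothesis is Schaefer rather than merely tractable satisfiability.

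Computing $d(E,E')$ for all $E' \in S$ costs $O(|S|\cdot|H|)$ per candidate. The total running time is therefore $2^{|H|}\cdot \mathrm{poly}(\CCard{I})$, with $|S|$ either polynomial in the input size or bounded by $2^{|H|}$. This places $\pRepABD(\Gamma,|H|)$ in $\FPT$.
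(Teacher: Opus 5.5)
Your proposal is correct and follows the paper's own proof: brute-force over the $2^{|H|}$ candidate sets $E \subseteq H$, a polynomial-time explanation check for each candidate using that $\SAT(\Gamma^+)$ is polynomial-time solvable for Schaefer $\Gamma$, and then a direct distance check against every member of $S$. Your extra details make explicit what the paper leaves implicit: checking entailment one manifestation at a time, and noting that adding the unit clauses keeps $\Gamma^+$ Schaefer.
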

\begin{proof}
    We can loop over all explanation candidates $E \subseteq H$ in brute force time $2^{|H|}$. Since the parameter is $|H|$, this is $\FPT$-time. Now we only need to check for each such candidate $E$ 1) whether it is an explanation, and 2) if so, whether it is $k$-represented by $S$ or not. Step 1) amounts to checking whether $\KB \land E$ is satisfiable and whether $\KB \land E \models M$. Both checks can be achieved in polynomial time since for Schaefer languages $\SAT(\Gamma^+)\in \Ptime$. Step 2) amounts to looping over the elements of $S$ and computing for each $E'\in S$ whether $d(E,E') \leq k$ or not. This can be done in polynomial time, since $S$ is part of the input.
\end{proof}

\iflong
\begin{lemma}
$\pRepABD(\Gamma,|H|)$ is $\para\text{-}\co\DP$-hard for any $\Gamma$ such that $\IN_2 \subseteq \clos{\Gamma}$ or $\II_0 \subseteq \clos{\Gamma}$.
\end{lemma}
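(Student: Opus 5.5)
To show para-coDP-hardness it suffices to prove that the slice of $\pRepABD(\Gamma,|H|)$ with a constant value of the parameter is already $\co\DP$-hard. We do this by a polynomial-time reduction from a $\co\DP$-complete problem whose instances map to $\RepABD(\Gamma)$ instances with $|H| \le c$ for a fixed constant $c$.

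We use the characterization of $\ABD(\Gamma)$ for these co-clones from \cite{NordhZ08}. There, $\Sigma^P_2$-hardness of $\ABD$ for $\IN_2 \subseteq \clos{\Gamma}$ or $\II_0 \subseteq \clos{\Gamma}$ arises from the quantified structure. However, once $|H|$ is bounded by a constant, explanation existence amounts to checking constantly many candidates $E$. For each candidate we need $\KB \land E$ satisfiable (an NP condition) and $\KB \land E \models M$ (a coNP condition), so the bounded-$|H|$ version of $\ABD(\Gamma)$ is naturally $\DP$-hard. The plan is therefore:
\begin{enumerate}
\item Reduce from SAT-UNSAT: given a pair $(\phi,\psi)$, ask whether $\phi$ is satisfiable and $\psi$ is unsatisfiable.
\item Build, over $\Gamma$, an $\ABD(\Gamma)$ instance with a constant number of hypotheses (ideally $|H| \le 1$, e.g.\ $H = \{h\}$) whose only candidate explanation $\{h\}$ is consistent if and only if $\phi$ is satisfiable, and entails $M$ if and only if $\psi$ is unsatisfiable.
\item Compose with Lemma~\ref{lem:ABD-to-RepABD}: the map $(\KB,H,M) \mapsto (\KB,H,M,\emptyset,1)$ does not change $H$. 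This yields a reduction from the complement of SAT-UNSAT, which is $\co\DP$-complete, to $\RepABD(\Gamma)$ with constant $|H|$, giving para-coDP-hardness.
\end{enumerate}

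For the construction in step 2, take variable-disjoint copies of $\phi$ and $\psi$ and a fresh manifestation variable $m$. The knowledge base should express roughly $\phi \land (\psi \lor m)$, guarded by $h$ so that setting $h$ true activates it; for instance, $\KB \land h \equiv \phi \land (\neg\psi \to m)$. Then $\KB \land \{h\} \models m$ holds exactly when every model of $\phi$ (which share no variables with $\psi$) forces $m$, i.e.\ when $\psi$ is unsatisfiable, while consistency holds exactly when $\phi$ is satisfiable. We must also make sure that $E = \emptyset$ is not an explanation, for instance because $\KB$ alone does not entail $m$. For $\II_0 \subseteq \clos{\Gamma}$ we have all 0-valid relations, and for $\IN_2$ all complementive relations, so we first carry out this construction in arbitrary CNF, with $\phi,\psi$ ranging over 3-CNF.

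The main obstacle is realizing this within the restricted co-clones and without equality, i.e.\ in $\closneq{\Gamma}$ rather than $\clos{\Gamma}$. For $\II_0$, the non-0-valid parts ($\phi$'s clauses with all-positive literals) must be simulated: we would introduce a fresh variable $f$ playing the role of ``false'' and put it into the manifestation, as in the standard 0-valid hardness proofs of \cite{NordhZ08}. For $\IN_2$, we would use the usual trick of adding a global variable and reading the formula relative to it, exploiting complementation symmetry. Finally, each relation used is prime (or equality-free pp-definitions exist by Lemma~\ref{lem:nessp-nessn-equality}, since these languages are neither $\closneq{\mathrm{EP}^-}$ nor $\closneq{\mathrm{EN}^-}$), so Lemma~\ref{lem:baseIndneq} transfers the reduction to $\Gamma$. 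That lemma keeps $H$ unchanged, which preserves the constant parameter. Should the direct DP-gadget fail, the fallback is to reuse the constructions of \cite{NordhZ08} and check that their hypotheses can be bounded once the quantifier alternation is removed.
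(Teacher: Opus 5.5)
Your skeleton is the paper's: take a constant slice of the parameter and compose with Lemma~\ref{lem:ABD-to-RepABD}, whose map $(\KB,H,M)\mapsto(\KB,H,M,\emptyset,1)$ leaves $H$ unchanged. The difference is step~2. The paper constructs nothing there; it cites \cite{MahmoodEtAl21}, which shows $\pABD(\Gamma,|H|)$ is $\para\text{-}\DP$-hard for exactly these co-clones. You instead try to reprove that result with a SAT--UNSAT gadget. That would make the argument self-contained, already with $|H|=1$, and your gadget is correct for unrestricted CNF. However, realizing it inside the two co-clones, which you leave as the ``main obstacle'', is the entire content of the cited result, and your sketch of it is incomplete.

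For $\II_0\subseteq\clos{\Gamma}$ no extra device is needed. Write the guard literally as $\KB=\bigwedge_{C\in\phi}(\neg h\lor C)\land\bigwedge_{C\in\psi}(\neg h\lor C\lor m)$. Every clause contains $\neg h$, so $\KB$ is already 0-valid, $\KB\land h\equiv\phi\land(\psi\lor m)$, and the assignment with $h=m=0$ shows $\KB\not\models m$. Your auxiliary variable ``playing false'' placed in $M$ is confused: $M$ can only demand that variables be entailed true, and the true reference you need is $h$ itself.

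For $\IN_2\subseteq\clos{\Gamma}$ the complementation trick, as sketched, fails. Suppose $f$ is a fresh non-hypothesis variable and each clause $D$ of $\KB$ is replaced by its complementive version (which is $D$ when $f=0$ and $D$ on complemented arguments when $f=1$). Then the models with $f=1$ are complements of \emph{arbitrary} models of $\KB$. Complementing a model with $h=0$, $m=1$ gives a model with $h=1$, $m=0$, so $\{h\}$ never entails $m$. The missing idea is to tie the complementation reference to the hypothesis, e.g.\ by adding the complementive constraint $h\oplus f$. Then:
\begin{itemize}
\item the models with $h=1$ are exactly the models of $\KB\land h$ (with $f=0$);
\item the models with $h=0$ are complements of models of $\KB\land h$.
\end{itemize}
Hence $\{h\}$ is an explanation iff $\phi$ is satisfiable and $\psi$ is unsatisfiable. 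Also $\emptyset$ is not an explanation: either the knowledge base is inconsistent, or complementing a model of $\KB\land h$ with $m=1$ gives one with $m=0$.

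With this repair, your remaining steps finish the proof: Lemma~\ref{lem:nessp-nessn-equality} gives $\clos{\Gamma}=\closneq{\Gamma}$, and Lemma~\ref{lem:baseIndneq} does not change $H$. Citing \cite{MahmoodEtAl21}, as the paper does, makes all of this unnecessary.
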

\begin{proof}
    According to \cite{MahmoodEtAl21} the problem $\pABD(\Gamma,|H|)$ is $\para\text{-}\DP$-hard for any $\Gamma$ such that $\IN_2 \subseteq \clos{\Gamma}$ or $\II_0 \subseteq \clos{\Gamma}$. The result follows now with the reduction of Lemma~\ref{lem:ABD-to-RepABD}.
\end{proof}

\begin{lemma}
$\pRepABD(\Gamma,|H|)$ is $\para\text{-}\NP$-hard for any $\Gamma$ such that $\IN \subseteq \clos{\Gamma}$.
\end{lemma}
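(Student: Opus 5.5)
Following the pattern of the preceding lemma, we obtain the claim from a known para-hardness result for abduction with bounded hypotheses, transferred via Lemma~\ref{lem:ABD-to-RepABD}. By~\cite{MahmoodEtAl21} (which classifies $\pABD(\Gamma,|H|)$ along Post's lattice), $\pABD(\Gamma,|H|)$ is $\para\text{-}\co\NP$-hard whenever $\IN \subseteq \clos{\Gamma}$. For $|H| = 0$ the only candidate explanation is $\emptyset$, so the question reduces to whether $\KB$ is satisfiable and $\KB \models M$. With complementive, 0- and 1-valid relations, satisfiability is trivial while entailment stays coNP-hard. We therefore get $\co\NP$-hardness already for a constant value of the parameter, which is exactly $\para\text{-}\co\NP$-hardness.

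Then we compose with the reduction of Lemma~\ref{lem:ABD-to-RepABD}, which maps $(\KB,H,M)$ to $(\KB,H,M,\emptyset,1)$ and maps $\overline{\ABD(\Gamma)}$ to $\RepABD(\Gamma)$. This reduction leaves $H$ untouched, so $|H|$ is preserved and it is an FPT-reduction (indeed a polynomial-time reduction preserving the parameter). Complementing $\para\text{-}\co\NP$-hardness of $\pABD(\Gamma,|H|)$ yields $\para\text{-}\NP$-hardness of $\overline{\pABD(\Gamma,|H|)}$, and hence of $\pRepABD(\Gamma,|H|)$. Finally, the instance must stay within $\Gamma$: Lemma~\ref{lem:ABD-to-RepABD} uses the same $\KB$, so this is automatic, and the $\Gamma$-dependence is handled inside the cited $\ABD$ result.

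The main obstacle is confirming that the cited result really gives $\para\text{-}\co\NP$-hardness for $\IN \subseteq \clos{\Gamma}$ at constant $|H|$. If that cannot be extracted, I would prove it directly. I would reduce from $\overline{\SAT(\Gamma')}$ for an NP-hard complementive language such as $\{\mathrm{DUP}\}$ together with a fresh variable $m$ as follows. Take $\KB = \varphi'$, where $\varphi'$ is obtained from $\varphi$ by adding $m$ so that $\KB$ is still satisfiable, for instance via the 0-/1-validity of the relations. Set $H = \emptyset$ and $M = \{m\}$, arranged so that $\KB \models m$ iff $\varphi$ is unsatisfiable. Equality-free expressibility would then be handled through Lemma~\ref{lem:nessp-nessn-equality} and Lemma~\ref{lem:baseIndneq}; these do not change $H$, so the parameter stays $0$.
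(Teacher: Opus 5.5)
Your top-level argument is the paper's proof. You cite the $\para\text{-}\co\NP$-hardness of $\pABD(\Gamma,|H|)$ for $\IN \subseteq \clos{\Gamma}$ from \cite{MahmoodEtAl21}, then push it through the $H$-preserving reduction $(\KB,H,M)\mapsto(\KB,H,M,\emptyset,1)$ of Lemma~\ref{lem:ABD-to-RepABD}, complementing along the way. As long as you rely on the citation, that is complete.

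The problem is the justification you give for the cited fact, and your fallback; both are wrong. Relations in $\IN$ are $0$-valid. So when $\clos{\Gamma} = \IN$ (e.g.\ $\Gamma = \{\mathrm{DUP}\}$), the all-zero assignment is a model of every $\KB$. Hence with $H = \emptyset$, the set $\emptyset$ is an explanation iff $M = \emptyset$, and the slice $|H| = 0$ is polynomial-time decidable, not $\co\NP$-hard. For the same reason, your direct construction with $H = \emptyset$ and $M = \{m\}$ can never achieve $\KB \models m$. Moreover, $\{\mathrm{DUP}\}$ is not an $\NP$-hard $\SAT$ language, since it contains $000$ and $111$. $\mathrm{NAE}$ is $\NP$-hard, but it does not lie in $\IN$.

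The hardness has to come from the slice $|H| = 1$: use the hypothesis as the constant $1$ and the manifestation as the constant $0$. Take a 3-CNF $\psi$, and for each clause relation $R$ let
$R^{*} = \{(t,0,1), (\bar t,1,0) \mid t \in R\} \cup \{(0,0,0,0,0),(1,1,1,1,1)\}$.
This relation is complementive, $0$- and $1$-valid, and prime, so $R^{*} \in \closneq{\Gamma}$ by Lemma~\ref{lemma:can_define}. Put $\KB = \bigwedge_{C} R^{*}_C(\vec{y}_C, m, h)$, expanded via efpp-definitions with fresh non-hypothesis variables, and set $H = \{h\}$, $M = \{m\}$. Then:
$\emptyset$ is never an explanation, because of the all-zero model;
$\{h\}$ is always consistent, because of the all-one model; and
$\KB \wedge h \models m$ holds iff $\psi$ is unsatisfiable.
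This gives $\co\NP$-hardness of $\ABD(\Gamma)$ already at $|H| = 1$. Your composition with Lemma~\ref{lem:ABD-to-RepABD} then yields $\NP$-hardness of $\RepABD(\Gamma)$ at $|H| = 1$, i.e.\ the claimed $\para\text{-}\NP$-hardness.
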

\begin{proof}
    According to \cite{MahmoodEtAl21} the problem $\pABD(\Gamma,|H|)$ is $\para\text{-}\co\NP$-hard for any $\Gamma$ such that $\IN \subseteq \clos{\Gamma}$. The result follows now with the reduction of Lemma~\ref{lem:ABD-to-RepABD}.
\end{proof}

This leads to the following classification (note, however, that the two hardness cases are not fully exhaustive since we do not have inclusion, only hardness).

\fi

\ifshort
Using two hardness results from~\cite{MahmoodEtAl21} this leads to the following dichotomy (see the full version for details).
\fi

\begin{theorem}
Let $\Gamma$ be a constraint language. Then $\pRepABD(\Gamma,|H|)$ is
\begin{enumerate}
    \item $\para\text{-}\co\DP$-hard if $\IN_2 \subseteq \clos{\Gamma}$ or $\II_0 \subseteq \clos{\Gamma}$.
    \item $\para\text{-}\NP$-hard if $\IN \subseteq \clos{\Gamma}$.
    \item $\in \FPT$ otherwise (that is, $\Gamma \subseteq D$ for $D \in \{\IE_2, \IL_2, \IV_2, \ID_2\}$).
\end{enumerate}
\end{theorem}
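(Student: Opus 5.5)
The theorem combines three facts: two hardness results lifted from classical abduction, and the FPT lemma just proved. The plan is to verify each item separately and then check that the three cases cover every constraint language.

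For item~1, assume $\IN_2 \subseteq \clos{\Gamma}$ or $\II_0 \subseteq \clos{\Gamma}$. I would invoke the result of \cite{MahmoodEtAl21} that $\pABD(\Gamma,|H|)$ is $\para\text{-}\DP$-hard in exactly this case. The reduction of Lemma~\ref{lem:ABD-to-RepABD} maps $(\KB,H,M)$ to $(\KB,H,M,\emptyset,1)$ and reduces $\overline{\ABD(\Gamma)}$ to $\RepABD(\Gamma)$. It keeps $H$ unchanged, so it is a parameter-preserving FPT reduction for parameter $|H|$. Complementing a $\para\text{-}\DP$-hard problem gives $\para\text{-}\co\DP$-hardness, and the reduction transfers this to $\pRepABD(\Gamma,|H|)$.

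Item~2 goes the same way. For $\IN \subseteq \clos{\Gamma}$, \cite{MahmoodEtAl21} gives $\para\text{-}\co\NP$-hardness of $\pABD(\Gamma,|H|)$. Its complement is then $\para\text{-}\NP$-hard, and Lemma~\ref{lem:ABD-to-RepABD} again carries this over with $|H|$ unchanged.

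For item~3 I first need the case analysis on Post's lattice. If neither $\IN \subseteq \clos{\Gamma}$ nor $\II_0 \subseteq \clos{\Gamma}$ holds (and hence also not $\IN_2 \subseteq \clos{\Gamma}$, since $\IN \subseteq \IN_2$), then $\clos{\Gamma}$ avoids $\IN$ and $\II_0$. By Post's lattice, and as in the complexity map of \cite{NordhZ08}, every such co-clone lies inside one of the Schaefer co-clones $\IE_2$, $\IV_2$, $\IL_2$, $\ID_2$. This is the one step that needs care. The co-clones not contained in a Schaefer co-clone are exactly those containing $\IN$ or $\II_0$, or their duals $\II_1$. Since $\IN \subseteq \II_1$, the dual case is already covered by item~2. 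I would state this explicitly when writing it up.

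Once $\Gamma$ is Schaefer, $\Gamma^+$ is Schaefer as well: the Schaefer classes are closed under adding the unit clauses, and this is the fact the preceding FPT lemma relies on. That lemma then gives $\pRepABD(\Gamma,|H|) \in \FPT$, completing the classification.
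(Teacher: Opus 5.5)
Your proposal is correct and follows the paper's own route. Items 1 and 2 transfer the $\para\text{-}\DP$- and $\para\text{-}\co\NP$-hardness of $\pABD(\Gamma,|H|)$ from \cite{MahmoodEtAl21} through the parameter-preserving reduction of Lemma~\ref{lem:ABD-to-RepABD}, and item 3 is the brute-force FPT lemma over the $2^{|H|}$ candidates, using $\SAT(\Gamma^+)\in\Ptime$ for Schaefer $\Gamma$. Your explicit Post-lattice check (every non-Schaefer co-clone contains $\IN$, so the three cases are exhaustive) is a useful addition that the paper leaves implicit.
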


\subsection{Parameter $|M|$}

For the size of the manifestation ($|M|$) we currently lack FPT cases, but, on the other hand, can show para-coNP-hardness for the majority of constraint languages.

\begin{lemma}
        $\pRepABD(\Gamma,|M|)$ is para-coNP-hard for any $\Gamma$ such that $\mathsf{t} \in \closneq{\Gamma}$ or $T \in \closneq{\Gamma}$.
\end{lemma}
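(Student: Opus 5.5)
The plan is to reuse the covering radius reduction from Lemma~\ref{lem:P1ABDcoNPhard} and observe that the parameter $|M|$ is constant in the produced instances. Para-coNP-hardness of a parameterized problem follows once we show that its restriction to some fixed value of the parameter is already coNP-hard (equivalently, that some finite slice is coNP-hard). So it suffices to exhibit a polynomial-time reduction from a coNP-hard problem to $\RepABD(\Gamma)$ in which every output instance has $|M| = 0$, or more generally $|M| \le c$ for a constant $c$.

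First, recall the construction of Lemma~\ref{lem:P1ABDcoNPhard}. An instance $(C,r)$ of the covering radius problem is mapped to $\KB = \bigwedge_{i=1}^n D(x_i)$ with $D \in \{\mathsf{t}, T\}$, $H = \{x_1,\dots,x_n\}$, $M = \emptyset$, $S = \{v_{set} \mid v \in C\}$, and $k = r$. The output is a yes-instance of $\RepABD$ exactly when the covering radius of $C$ is at most $r$. This gives a polynomial-time reduction from the complement of the NP-complete covering radius problem, and every produced instance has $|M| = 0$. Hence the slice of $\pRepABD(\,\cdot\,,|M|)$ with parameter value $0$ is coNP-hard over the language $\{D\}$.

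Second, transfer the result to $\Gamma$. Since $D \in \closneq{\Gamma}$, Lemma~\ref{lem:baseIndneq} replaces each $D$-atom by its efpp-definition over $\Gamma$. Any existentially quantified variables become fresh variables that lie outside $H$ and outside $M$. This leaves $M$ untouched, so the parameter is preserved, as is the set of explanations.

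The one point that needs checking, and the only place I expect any friction, is the case $D = T$. There $T(x_i)$ forces $x_i = 1$, so one might worry that $\AllExpl = 2^H$ fails. It still holds: $\KB \land E$ is satisfiable for every $E \subseteq H$, and $M = \emptyset$ is entailed trivially, as already argued in Lemma~\ref{lem:P1ABDcoNPhard}. Combining the coNP-hard slice at $|M|=0$ with the definition of $\para\text{-}\co\NP$-hardness then yields the claim.
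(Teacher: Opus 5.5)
Your proposal is correct and follows the paper's proof, which reuses the covering radius construction of Lemma~\ref{lem:P1ABDcoNPhard} and notes that $|M|$ is constant there (indeed $M=\emptyset$). Your remarks on transferring to $\Gamma$ via Lemma~\ref{lem:baseIndneq} without changing $M$, and on why the case $D = T$ still gives $\AllExpl = 2^H$, spell out what the paper leaves implicit.
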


\begin{proof}
    The construction is the same as in Lemma~\ref{lem:P1ABDcoNPhard}. We note that in that construction $|M|$ is constant.
\end{proof}

\subsection{Parameter $|S|$}

The last parameter that we consider is the number of sets $|S|$. We first observe a hard case by using Lemma~\ref{lem:ABD-to-RepABD}.

\begin{lemma} $(\star)$
$\pRepABD(\Gamma, |S|)$ is para-coNP-hard for any $\Gamma$ such that $\IS{2}{11} \subseteq \clos{\Gamma}$.
\end{lemma}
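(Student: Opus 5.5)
The plan is to instantiate the reduction of Lemma~\ref{lem:ABD-to-RepABD}. It maps an $\ABD(\Gamma)$ instance $(\KB, H, M)$ to $(\KB, H, M, \emptyset, 1)$, so the produced instance always has $|S| = 0$, a constant parameter. Therefore, whenever $\ABD(\Gamma)$ is NP-hard, $\overline{\ABD(\Gamma)}$ is coNP-hard, and this reduction (which is a polynomial-time many-one reduction producing constant parameter value) shows that $\pRepABD(\Gamma, |S|)$ is coNP-hard already for the slice $|S| = 0$. That is exactly para-coNP-hardness.

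The remaining task is to establish that $\ABD(\Gamma)$ is NP-hard whenever $\IS{2}{11} \subseteq \clos{\Gamma}$. Here I will invoke the classification of \cite{NordhZ08}, in which positive abduction is NP-complete for all co-clones containing $\IS{2}{11}$ (the co-clone generated by implications together with negative binary clauses) that lie below the Schaefer co-clones, and harder above.

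If a self-contained argument is preferred, I will reduce from Vertex Cover, or equivalently Independent Set, into $\ABD(\{(x \rightarrow y), (\neg x \lor \neg y)\})$ in the spirit of the coW[1]-hardness constructions above. For each edge $e = \{u,v\}$ I add a fresh manifestation $c_e$ with implications $u \rightarrow c_e$ and $v \rightarrow c_e$. Negative clauses then enforce the cardinality bound through a standard gadget (e.g.\ copy variables $u \rightarrow u_j$ with pairwise exclusions $\neg u_j \lor \neg w_j$ forcing at most $k$ choices). An explanation then corresponds to a vertex cover of size at most $k$.

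Finally, I transfer hardness from $\{(x\rightarrow y), (\neg x \lor \neg y)\}$ to arbitrary $\Gamma$ with $\IS{2}{11} \subseteq \clos{\Gamma}$. Both relations are prime, so Lemma~\ref{lemma:can_define} places them in $\closneq{\Gamma}$, and Lemma~\ref{lem:baseIndneq} gives the reduction. That reduction leaves $S$ untouched, so the parameter stays $0$.

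The main obstacle is the cardinality gadget if one does not simply cite \cite{NordhZ08}. I would therefore rely on the literature result for NP-hardness of $\ABD$ over $\IS{2}{11}$ and keep the argument to the parameter-preservation observation.
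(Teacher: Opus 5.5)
Your proposal is correct and matches the paper's proof: you apply the reduction of Lemma~\ref{lem:ABD-to-RepABD}, which always outputs $S = \emptyset$, and combine it with NP-hardness of $\ABD(\Gamma)$ for $\IS{2}{11} \subseteq \clos{\Gamma}$ from \cite{NordhZ08}, which makes the slice $|S| = 0$ coNP-hard. Your optional self-contained gadget would not work as written, because $u \rightarrow u_j$ for every $j$ together with $\neg u_j \lor \neg w_j$ makes any two chosen vertices clash in every slot, so it admits only one vertex rather than $k$ (slot hypotheses $h_{v,j} \rightarrow v$ with $\neg h_{v,j} \lor \neg h_{w,j}$ would fix this); since you rely on the citation instead, the argument stands.
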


\iflong
\begin{proof}
We can use once more the reduction from Lemma~\ref{lem:ABD-to-RepABD}: it shows para-coNP-hardness for the slice $|S| = 0$, for any language such that $\IS{2}{11} \subseteq \clos{\Gamma}$ (since here $\ABD(\Gamma)$ is NP-hard).
\end{proof}
\fi



\begin{figure*}[htp]
    \centering
    \scalebox{0.5}{%
        \includegraphics{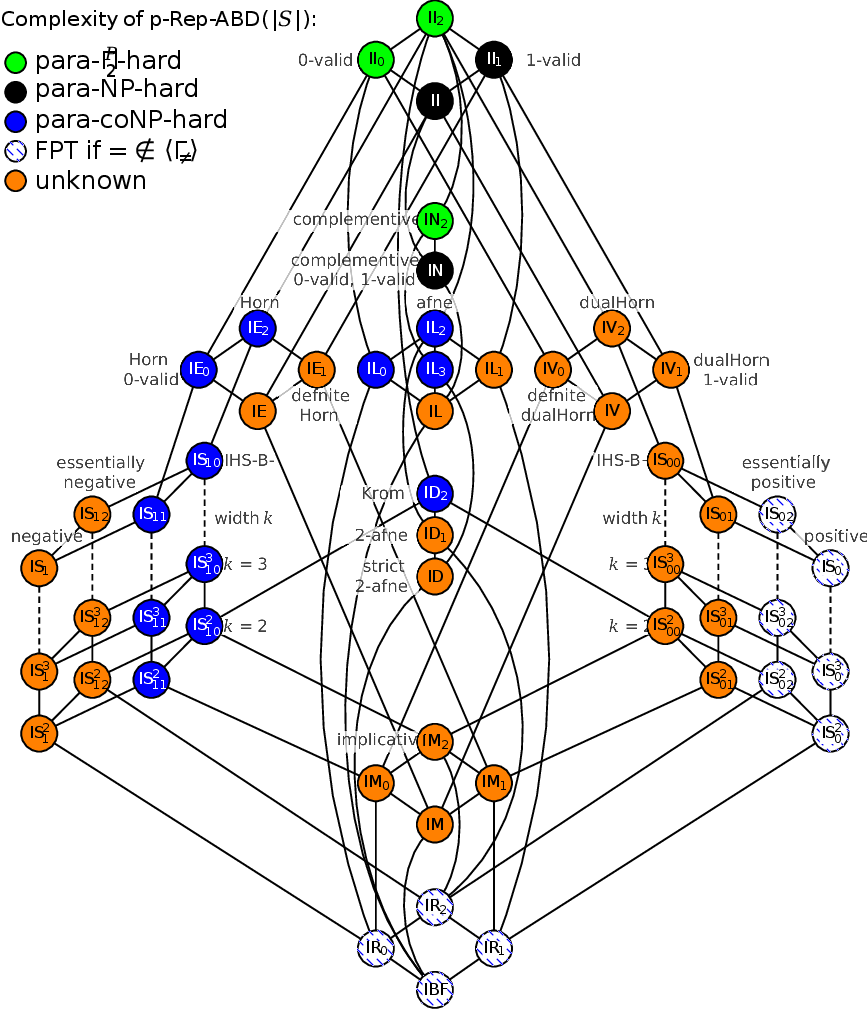}%
    }
    \caption{The complexity of $\pRepABD(\Gamma, |S|)$ illustrated via Post's lattice.}
    \label{fig:post-latticepABD-S}
\end{figure*}

\begin{lemma}
$\pRepABD(\mathrm{EP}^-, |S|)$ is in FPT.
\end{lemma}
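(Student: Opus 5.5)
We reduce, in polynomial time, to a covering-radius question in which the code has only $|S|$ codewords, and then decide that question with a closest-string algorithm parameterized by the number of strings.

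\emph{Step 1: structure of the explanations.} As in the proof of Lemma~\ref{lemma:covering_radius_equivalence}, every atom of $\KB$ over $\mathrm{EP}^-$ is a positive unit clause, a negative unit clause, or a positive clause of size at least two. Hence for $E \subseteq H$ and $m \in M$ we have $\KB \land E \models m$ if and only if $\KB \models m$ or $m \in E$. So, provided explanations exist at all, we get
\[
E_{min} = \{m \in M \mid \KB \not\models m\}
\quad\text{and}\quad
\AllExpl(I) = \{E \subseteq H \mid E_{min} \subseteq E\}.
\]
All of this, including checking whether $E_{min} \subseteq H$ and whether $\KB \land E_{min}$ is consistent, can be done in polynomial time. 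The reason is that $\KB$ with unit propagation is trivial to test: only negative units can falsify a positive clause.

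If there are no explanations, then $S = \emptyset$ and we answer yes. Otherwise we set $n = |H \setminus E_{min}|$. The question becomes whether the code $C = \{(E\setminus E_{min})_{vec} \mid E \in S\} \subseteq \{0,1\}^n$, which has at most $|S|$ words, has covering radius at most $k$. Equivalently, the instance is negative if and only if some $v \in \{0,1\}^n$ has $d(v,c) > k$, i.e.\ $d(v,\bar c) < n-k$, for every $c \in C$. Here $\bar c$ is the complement of $c$, since $d(v,c) + d(v,\bar c) = n$.

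\emph{Step 2: reduction to closest string.} Put $\bar C = \{\bar c \mid c \in C\}$. The instance is negative if and only if there is a string $v$ with $d(v,\bar c) \le n-k-1$ for all $\bar c \in \bar C$. This is exactly the closest string problem with $|S|$ input strings of length $n$ and radius $d = n-k-1$. If $n-k-1 < 0$ we answer yes immediately.

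\emph{Step 3: solve closest string in FPT time in $|S|$.} Closest string parameterized by the number of strings is known to be FPT, via the ILP formulation of Gramm, Niedermeier and Rossmanith~\cite{GrammNR03}. Columns of the $|S| \times n$ matrix fall into at most $2^{|S|}$ types. For each type $t$ we introduce variables $x_{t}$ counting how many columns of that type get value $1$ in $v$, constrained by $0 \le x_t \le \#t$. Each distance constraint $d(v,\bar c) \le d$ is then linear in these variables. The number of variables is bounded by $2^{|S|}$, so Lenstra's algorithm gives FPT running time. We invoke this result and conclude.

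The main obstacle is the correctness of Step 1, especially the claim that positive clauses of size at least two never help entail a manifestation, and the careful handling of unsatisfiable cases. The rest is a direct transfer to the known FPT result; the only thing to check is that complementing the strings correctly turns ``far from all codewords'' into ``close to all''.
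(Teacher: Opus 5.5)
Your overall route is the paper's. You use the structure of $\mathrm{EP}^-$ explanations to reduce to covering radius with at most $|S|$ codewords, complement the codewords, and apply the closest-string algorithm of Gramm, Niedermeier and Rossmanith parameterized by the number of strings. Your radius $n-k-1$ is the correct one, since $d(v,c)>k$ iff $d(v,\bar c)\le n-k-1$; the paper's $r'=n-r$ is off by one.

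\textbf{The gap: Step 1 as written is false.}

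\emph{Why it fails.} The equivalence ``$\KB\land E\models m$ iff $\KB\models m$ or $m\in E$'' only holds when $\KB\land E$ is consistent. Supersets of $E_{min}$ need not be consistent: a hypothesis $h$ with $(\neg h)\in\KB$ can never occur in an explanation. Let $N=\{x\mid(\neg x)\in\KB\}$. When explanations exist, the correct description is $\AllExpl(I)=\{E\mid E_{min}\subseteq E\subseteq H\setminus N\}$, not $\{E\subseteq H\mid E_{min}\subseteq E\}$.

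\emph{Counterexample.} Your choice $n=|H\setminus E_{min}|$ then gives wrong answers. Take $\KB=(\neg h_1)\land(\neg h_2)$, $H=\{h_1,h_2\}$, $M=\emptyset$, $S=\{\emptyset\}$, $k=1$. The only explanation is $\emptyset$, so the instance is positive. Yet you build $\bar C=\{11\}$ with radius $0$, which is a yes-instance of closest string, so you answer ``no''.

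\emph{Repair.} Use the free coordinates $F=H\setminus(E_{min}\cup N)$ and set $n=|F|$.
Every $E\in S$ is an explanation, so $E_{min}\subseteq E\subseteq H\setminus N$, and projecting to $F$ preserves all distances. Computing $E_{min}$, $N$ and the consistency test all remain polynomial. For example, $\KB\models m$ iff $\KB\land(\neg m)$ is unsatisfiable, which is still an $\mathrm{EP}^-$-formula. The paper's proof imports the same identification of $\AllExpl(I)$ from Lemma~\ref{lemma:covering_radius_equivalence}, so it needs the same fix.

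\textbf{A smaller slip.} Your shortcut ``if $n-k-1<0$ answer yes'' is only valid when $S\neq\emptyset$. If explanations exist and $S=\emptyset$, the answer is no for every $k$.
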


\begin{proof}
We first observe that $\pRepABD(\mathrm{EP}^-, |S|)$ is FPT-equivalent to the covering radius problem by the same reduction used in Lemma~\ref{lemma:covering_radius_equivalence}. 

Next, we show FPT for this problem by giving an FPT-reduction to the \emph{closest string problem} with alphabet $\Sigma = \{0,1\}$. Here, we are given a set of strings $C$, a radius $r$ and want to know if there exists $x \in \{0,1\}^n$ such that $d(x,c_i) \leq r$ for every $c_i \in C$. This problem is known to be FPT with respect to $|C|$~\cite{GrammNR03}. Given an instance $C \subseteq \{0,1\}^n$ and $r \geq 1$ of the covering radius problem we map it to an instance $(C', r')$ of the closest string problem as follows.  
\begin{itemize}
    \item $C' = \{\overline{c_i} \mid c_i \in C\}$ where $\overline{c_i}$ is the bit-wise complement of $c_i$,
    \item $r' = n-r$.
\end{itemize}

The correctness is as follows: assume $\exists x \in\{0,1\}^*: \forall c_i \in C: d(x,c_i) > r \leftrightarrow |x \cap c_i| > r \leftrightarrow |x \cap \overline{c_i}| \leq n-r$. Moreover, the reduction can clearly be carried out in FPT time (indeed, even polynomial time) and preserves the parameter since $|C'| = |C|$. This concludes the proof.
\end{proof}

This result extends to $\pRepABD(\Gamma, |S|)$ for any $\Gamma \subseteq \closneq{\mathrm{EP}^-}$ since the reduction in Lemma~\ref{lem:baseIndneq} does not affect the parameter $|S|$.




\iflong

\begin{lemma}
   If $\IN \subseteq \clos{\Gamma}$, then $\RepABD(\Gamma)$ and $\RepABDsub(\Gamma)$ are both $\co\DP$-hard.
\end{lemma}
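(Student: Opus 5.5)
We plan to reduce from a canonical $\co\DP$-complete problem: pairs $(x,y)$ with $x\in L_1$ \emph{or} $y\in L_2$, where $L_1$ is $\NP$-complete and $L_2$ is $\co\NP$-complete. We use two sources already available. For the $\NP$ side, $\overline{\ABD(\Gamma)}$ is $\NP$-hard because $\ABD(\Gamma)$ is $\co\NP$-hard for $\IN \subseteq \clos{\Gamma}$. Since every relation in $\IN$ is $1$-valid, this is really non-entailment of $M_1$ from $\KB_1\land H_1$. For the $\co\NP$ side, we use the covering radius reduction of Lemma~\ref{lem:P1ABDcoNPhard}. From an abduction instance $I_1=(\KB_1,H_1,M_1)$ and a covering radius instance $(C,r)$ with $C\subseteq\{0,1\}^n$, we will build one $\RepABD(\Gamma)$ instance. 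It should be positive iff $I_1$ has no explanation or the covering radius of $C$ is at most $r$. Since $\IN \not\subseteq \closneq{\mathrm{EP}^-}$ and $\IN\not\subseteq\closneq{\mathrm{EN}^-}$, Lemma~\ref{lem:nessp-nessn-equality} gives $R_=\in\closneq{\Gamma}$ and $\mathsf{t}\in\closneq{\Gamma}$. By Lemma~\ref{lem:baseIndneq} we may therefore freely use $\Gamma$, equality and $\mathsf{t}$ in the knowledge base.

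The main difficulty is the promise $S\subseteq\AllExpl(I)$: we cannot write down an explanation of $I_1$. We therefore add a trivial escape route and use distances to separate it from genuine explanations of $I_1$. The construction has three parts.
\begin{itemize}
\item A block of fresh hypotheses $h_1,\dots,h_N$, all equal to each other and to every $m\in M_1$ via equality constraints.
\item Covering coordinates $x_1,\dots,x_n$, each replicated $w$ times by equal copies $x_i^1,\dots,x_i^w$, together with $\mathsf{t}(x_i^j)$. These are all hypotheses, and $M$ contains nothing from this part.
\item The sets $H=H_1\cup\{h_j\}\cup\{x_i^j\}$ and $M=M_1$.
\end{itemize}
For $S$ we take one set $\{h_j\}\cup c^{(w)}\cup H_1$ for each $c\in C$, where $c^{(w)}$ is the $w$-fold copy of $c$. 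By $1$-validity and the equalities, each of these is an explanation, so the promise $S\subseteq\AllExpl(I)$ holds. We set $k'=wr+|H_1|$, choose $w>|H_1|$, and choose $N>wn+|H_1|+k'$.

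Correctness splits according to whether an explanation contains the $h$-block. In any explanation the copies of a variable must agree in value, but they are hypotheses, so an explanation may include only some copies. We must handle this, either by making the copies non-hypotheses linked to one representative hypothesis, or by arguing that partial copies only increase distance. We intend the first option: only $h_1$ and one copy $x_i^1$ per coordinate are hypotheses. The remaining copies are non-hypothesis variables tied by equality, and they contribute to distance only through how $S$ is defined. If that spoils the weighting, we fall back to proving the monotonicity claim.

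Consider first explanations $E\supseteq\{h\}$. They are all of the form $\{h\}\cup v\cup A$ with $A\subseteq H_1$ and $v$ arbitrary. Their distance to the element of $S$ for $c$ is $w\,d(v,c)+|H_1\setminus A|$. Because $w>|H_1|$, this is at most $k'$ iff $d(v,c)\le r$. So all of them are represented iff the covering radius of $C$ is at most $r$.

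Next, explanations without the $h$-block have distance at least $N>k'$ from every set in $S$, so none of them is represented. They correspond to explanations of $I_1$ extended by arbitrary covering vectors. Such explanations exist iff $I_1$ has an explanation: entailment of $M_1$ then cannot come from $h$, and consistency is automatic. Hence the instance is positive iff ($I_1\notin\ABD$ and $C$ has covering radius at most $r$). Combining this with Lemma~\ref{lem:ABD-to-RepABD} gives exactly the $\co\DP$ form, namely the complement of an intersection of an $\NP$ and a $\co\NP$ condition, provided we orient $L_1,L_2$ correctly.

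For $\RepABDsub(\Gamma)$, the $h$-block would become the unique minimal explanation on its side and would kill the covering part. There we replace the covering gadget by the paired-copy gadget of Lemma~\ref{lem:RepABDsub-coNP-hard}, which forces exactly one of two copies per coordinate through manifestations. We expect the same padding argument to go through.
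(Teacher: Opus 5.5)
There is a genuine gap, and it concerns polarity. By your own analysis the constructed instance is positive iff $I_1\notin\ABD$ \emph{and} the covering radius of $C$ is at most $r$. Since $\ABD$ over $\IN$ is $\co\NP$-complete, the first condition is $\NP$-complete and the second is $\co\NP$-complete. This conjunction therefore shows that $\RepABD(\Gamma)$ is $\DP$-hard, i.e.\ its complement is $\co\DP$-hard. That is not the claimed $\co\DP$-hardness of $\RepABD(\Gamma)$; the two notions coincide only if $\DP=\co\DP$. No ``orientation'' of $L_1,L_2$ and no appeal to Lemma~\ref{lem:ABD-to-RepABD} turns this $\wedge$ into the $\vee$ you announced at the start. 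The cause is structural. Your escape route makes every $h$-free explanation unrepresented, so any explanation of $I_1$ makes the instance negative regardless of $C$. For $\co\DP$-hardness you need the opposite coupling: an unrepresented explanation exists iff $I_1$ has an explanation \emph{and} the covering radius of $C$ exceeds $r$. In other words, the covering vectors must become explanations only \emph{through} $I_1$. This is what the paper does. It sets $\KB=\varphi\wedge\bigwedge_{i}\mathsf{t}(x_i)$, $H=\{x_1,\dots,x_n\}$, $M=\{q\}$, $S=\{E\mid E_{vec}\in C\}$ and $k=r$. Explanations then exist iff $\varphi\models q$, in which case they are all subsets of $H$, so the complement holds iff $\varphi\models q$ and some vector is at distance $>r$ from $C$. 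Your concern about the promise $S\subseteq\AllExpl(I)$ is legitimate: that reduction yields a nonempty $S$ with $\AllExpl(I)=\emptyset$ when $\varphi\not\models q$, and the paper does not address this. But your remedy decouples the two parts and thereby flips the logic.

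Independently, the gadgets fail, because $d$ counts only hypotheses.
If all of $h_1,\dots,h_N$ are hypotheses, then $\{h_1\}$ alone is an explanation: the equalities force $M_1$, and consistency follows from $1$-validity. It lies at distance at least $N-1>k'$ from every member of $S$, so the instance is always negative. If only $h_1$ is a hypothesis, the weight $N$ disappears. Then, whenever $I_1$ has an explanation, $H_1\cup c$ with $c\in C$ is an $h$-free explanation at distance $1$ from $S$.
The copies $x_i^j$ face the same dilemma. Non-hypothesis copies contribute nothing to $d$. Hypothesis copies admit explanations containing about half of the copies of each coordinate; these lie at distance about $nw/2$ from every codeword, so they are unrepresented when $r$ is small relative to $n$, even for $C=\{0,1\}^n$. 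Thus ``partial copies only increase distance'' is exactly what breaks correctness. The replication is unnecessary anyway: with $k'=r+|H_1|$, the worst case $A=\emptyset$ already gives the exact covering condition.
Finally, equating $h$ with every $m\in M_1$ identifies all manifestations and changes $I_1$ unless $|M_1|=1$. The $\RepABDsub$ case is only an expectation, not an argument; the paper's proof does not spell that case out either.
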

\todo[inline]{JS:  to be moved to correct section}
\begin{proof}
We give a reduction from a $\DP$-complete problem to the complement of $\RepABD$. We obtain the $\DP$-complete problem by combining an instance $(\varphi, \psi)$ of the $\co\NP$-complete implication problem and an instance $(C, r)$ of the $\NP$-complete covering radius problem. Given two propositional formulas $(\varphi, \psi)$ the implication problem asks whether $\varphi \models \psi$. For our purposes we use the variant where $\varphi$ is given as a $\Gamma$-formula, and $\psi$ is given by a single positive literal $q$. This variant is still $\co\NP$-complete when $\IN \subseteq \clos{\Gamma}$ \cite{SchnoorSchnoor2008}.

Be now $(\varphi, q)$ an instance of said implication problem and be $(C, r)$ an instance of the covering radius problem.
We map $(\varphi, q, C, r)$ to $(\KB, H, M, S, k)$ as follows.

\begin{align*}
\KB &= \varphi \land \bigwedge_{i=1}^n \mathsf{t}(x_i)\\
H &= \{x_1, \dots, x_n\} \\
M &= q \\
S &= \{E\subseteq H \mid E_{vec} \in C \} \\
k &= r \\
\end{align*}

\todo[inline]{JS: correctness todo}
   
\end{proof}

\fi

\section{Conclusion} \label{sec:conclusions}
\iflong
We introduced the problem of representative sets in propositional abduction. 
We illustrated that this problem gives a better understanding of the space of possible explanations, for example, if we return to the example in Section~\ref{sec:intro}, in a medical situation, representative abduction can tell doctors if their set of hypothesis has covered all major possibilities to explain the symptoms or if they might have missed something crucial.
complexity-wise, we have established an almost complete classification of the problem for constraint language restrictions (Post's lattice). We show that the representative abduction problem is still in the second level of the polynomial hierarchy, making it harder than abduction only in the lower parts of the lattice. Although the problem admits almost no tractable cases in its normal form, the subset minimal variant and the parametrized complexity approach give us several additional tractable cases. Let us now discuss some potential future research directions. 
\fi

\ifshort
We introduced representative sets in propositional abduction
and illustrated how it  gives a better understanding of the space of possible explanations; for example, (recall Section~\ref{sec:intro}), in a medical situation, representative abduction can tell doctors if their set of hypothesis has covered all major possibilities to explain the symptoms or if they might have missed something crucial.
Complexity-wise, we have established an almost complete classification of the problem for constraint language restrictions (Post's lattice). 
Although the problem admits almost no tractable cases in its normal form, the subset minimal variant and the parametrized complexity approach give us several additional tractable cases. Let us now discuss some potential future research directions. 
\fi 



\iflong
\paragraph{Subset minimal representative abduction}
We have made an almost complete classification of the subset-minimal version of the representative abduction problem. Although it has a very similar classification in the lattice with the non-subset minimal variant, it is surprisingly easier for $\epos^-$ and $\eneg^-$. A possible future research direction is to consider different parameters for a parametrized complexity approach for this variant of the problem.
\fi

\paragraph{Completing the classification}
We have three open cases shared by representative abduction and its subset minimal variant. These are the 1-valid languages. We have already established that they are both NP- and coNP-hard. It would be interesting to investigate their exact complexity class, and a good candidate for these languages is the class $\DP$. The techniques that would be used to prove it would most likely be novel for abduction-related problems.

\paragraph{Parametrized complexity}
We have investigated the parametrized complexity of representative abduction with different parameters, namely $k$ (the Hamming distance), $|H|$, $|M|$, and $|S|$ (size of the representative set). For parameter $|H|$ we have established a complete classification, obtained $\FPT$-results for a large portion of languages (Schaefer languages), and proven hardness for the rest. We have established a para-coNP-hardness result for the majority of languages for parameter $|M|$. In our opinion, the parameters $k$ and $|S|$ are perhaps the most interesting ones. For parameter $|S|$ we have established both hardness for a large portion of the lattice, but also non-trivial $\FPT$ results for $\epos^-$-languages, but which still leaves some interesting open cases for complementive and affine languages. Finally, for parameter $k$ we have established non-trivial coW[1]-hardness results for languages $\IM$ and $\eneg^-$, leaving what are perhaps the most interesting open cases for future research. The parametrized complexity approach seems to us to be extremely interesting. It establishes possibly useful FPT results, as well as interesting reductions for proving hardness, and it links our problem to know problems from coding theory.

\paragraph{Relation to coding theory}
We have established a strong connection between our representative abduction problem and the covering radius problem from coding theory, that as far as we know has not been established before. 
A lot of classical and parametrized complexity results from coding theory  were instrumental to our lattice classifications, such as NP-completeness of the covering radius problem~\cite{frances1997covering} and the FPT result for the closest string problem with parameter $|C|$~\cite{GrammNR03}. Future research on parametrized complexity with parameter $k$ for representative abduction and parameter $r$ (radius) for the covering radius problem, will greatly benefit both fields.





\iflong

\appendix
\section{Omitted Definitions}

See Table~\ref{tab:bases} for a comprehensive list of all Boolean co-clones.

\begin{table*}
  \centering
  \rowcolors{2}{gray!25}{white}
  \resizebox{\linewidth}{!}{%
    \begin{tabular}{llll}\toprule
      co-clone        & base                                     & clauses/equation                                                                                            & name/indication                             \\\midrule
      $\BR$ ($\II_2$) & 1-IN-3 = $\{001, 010, 100\}$             & all clauses                                                                                            & all Boolean relations                       \\
      $\II_1$         & $x \lor (y \oplus z)$                    & at least one positive literal per clause                                                               & 1-valid                                     \\
      $\II_0$         & DUP, $x \rightarrow y$                   & at least one negative literal per clause                                                               & 0-valid                                     \\
      $\II$           & EVEN$^4$, $x \rightarrow y$              & at least one negative and one positive literal per clause                                              & 1- and 0-valid                              \\
      $\IN_2$         & NAE = $\{0,1\}^3 \setminus \{000,111\}$  & cf. previous column                                                                                    & complementive                               \\ 
      $\IN$           & DUP = $\{0,1\}^3 \setminus \{101, 010\}$ & cf. previous column                                                                                    & complementive and 1- and 0-valid            \\
      $\IE_2$         & $x \land y \rightarrow z, x, \neg x$     & clauses with at most one positive literal                                                              & Horn                                        \\
      $\IE_1$         & $x \land y \rightarrow z, x$             & clauses with exactly one positive literal                                                              & definite Horn                               \\
      $\IE_0$         & $x \land y \rightarrow z, \neg x$        & $(x_1 \lor \neg x_2 \lor \dots \lor \neg x_n), n\geq 2, (\neg x_1 \lor \dots \lor \neg x_n), n \geq 1$ & Horn and 0-valid                            \\
      $\IE$           & $x \land y \rightarrow z$                & $(x_1 \lor \neg x_2 \lor \dots \lor \neg x_n), n\geq 2$                                                & Horn and 1- and 0-valid                     \\
      $\IV_2$         & $x \lor y \lor \neg z, x, \neg x$        & clauses with at most one negative literal                                                              & dualHorn                                    \\
      $\IV_1$         & $x \lor y \lor \neg z, x$                & $(\neg x_1 \lor x_2 \lor \dots \lor x_n), n\geq 2, (x_1 \lor \dots \lor x_n), n \geq 1$                & dualHorn and 1-valid                        \\
      $\IV_0$         & $x \lor y \lor \neg z, \neg x$           & clauses with exactly one negative literal                                                              & definite dualHorn                           \\
      $\IV$           & $x \lor y \lor \neg z$                   & $(\neg x_1 \lor x_2 \lor \dots \lor x_n), n\geq 2$                                                     & dualHorn and 1- and 0-valid                 \\
$\IL_2$               & EVEN$^4$, $x$, $\neg x$                  & all affine clauses (all linear equations)                                                              & affine                                      \\
$\IL_1$               & EVEN$^4$, $x$                            & $(x_1 \oplus \dots \oplus x_n = a)$, $n\geq 0, a = n$ (mod 2)                                          & affine and 1-valid                          \\
$\IL_0$               & EVEN$^4$, $\neg x$                       & $(x_1 \oplus \dots \oplus x_n = 0)$, $n\geq 0$                                                         & affine and 0-valid                          \\
$\IL_3$               & EVEN$^4$, $x \oplus y$                   & $(x_1 \oplus \dots \oplus x_n = a)$, $n$ even, $a \in \{0,1\}$                                         & -                                           \\
$\IL$                 & EVEN$^4$                                 & $(x_1 \oplus \dots \oplus x_n = 0)$, $n$ even                                                          & affine and 1- and 0-valid                   \\
$\ID_2$               & $x \oplus y, x \rightarrow y$            & clauses of size 1 or 2                                                                                 & Krom, bijunctive, $2\text{-}CNF$                      \\
$\ID_1$               & $x \oplus y, x, \neg x$                  & affine clauses of size 1 or 2                                                                          & 2-affine                                    \\
$\ID$                 & $x \oplus y$                             & affine clauses of size 2                                                                               & strict 2-affine                             \\
$\IM_2$               & $x \rightarrow y,  x, \neg x$            & $(x_1 \rightarrow x_2), (x_1), (\neg x_1)$                                                             & implicative                                 \\
$\IM_1$               & $x \rightarrow y, x$                     & $(x_1 \rightarrow x_2), (x_1)$                                                                         & implicative and 1-valid                     \\
$\IM_0$               & $x \rightarrow y, \neg x$                & $(x_1 \rightarrow x_2), (\neg x_1)$                                                                    & implicative and 0-valid                     \\
$\IM$                 & $x \rightarrow y$                        & $(x_1 \rightarrow x_2)$                                                                                & implicative and 1- and 0-valid              \\
$\IS{}{10}$           & cf. next column                          & $(x_1), (x_1 \rightarrow x_2), (\neg x_1 \lor \dots \lor \neg x_n), n \geq 0$                          & IHS-B-                                      \\
$\IS{k}{10}$          & cf. next column                          & $(x_1), (x_1 \rightarrow x_2), (\neg x_1 \lor \dots \lor \neg x_n), k \geq n \geq 0$                   & IHS-B- of width $k$                         \\
$\IS{}{12}$           & cf. next column                          & $(x_1), (\neg x_1 \lor \dots \lor \neg x_n), n \geq 0, (x_1 = x_2)$                                    & essentially negative ($\eneg$)              \\
$\IS{k}{12}$          & cf. next column                          & $(x_1), (\neg x_1 \lor \dots \lor \neg x_n), k \geq n \geq 0, (x_1 = x_2)$                             & essentially negative ($\eneg$) of width $k$ \\
$\IS{}{11}$           & cf. next column                          & $(x_1 \rightarrow x_2), (\neg x_1 \lor \dots \lor \neg x_n), n \geq 0$                                 & -                                           \\
$\IS{k}{11}$          & cf. next column                          & $(x_1 \rightarrow x_2), (\neg x_1 \lor \dots \lor \neg x_n), k \geq n \geq 0$                          & -                                           \\
$\IS{}{1}$            & cf. next column                          & $(\neg x_1 \lor \dots \lor \neg x_n), n \geq 0, (x_1 = x_2)$                                           & negative ($\negg$)                          \\
$\IS{k}{1}$           & cf. next column                          & $(\neg x_1 \lor \dots \lor \neg x_n), k \geq n \geq 0, (x_1 = x_2)$                                    & negative ($\negg$) of width $k$             \\

$\IS{}{00}$           & cf. next column                          & $(\neg x_1), (x_1 \rightarrow x_2), (x_1 \lor \dots \lor x_n), n \geq 0$                               & IHS-B+                                      \\
$\IS{k}{00}$          & cf. next column                          & $(\neg x_1), (x_1 \rightarrow x_2), (x_1 \lor \dots \lor x_n), k \geq n \geq 0$                        & IHS-B+ of width $k$                         \\
$\IS{}{02}$           & cf. next column                          & $(\neg x_1), (x_1 \lor \dots \lor x_n), n \geq 0, (x_1 = x_2)$                                         & essentially positive ($\epos$)              \\
$\IS{k}{02}$          & cf. next column                          & $(\neg x_1), (x_1 \lor \dots \lor x_n), k \geq n \geq 0, (x_1 = x_2)$                                  & essentially positive ($\epos$) of width $k$ \\
$\IS{}{01}$           & cf. next column                          & $(x_1 \rightarrow x_2), (x_1 \lor \dots \lor x_n), n \geq 0$                                           & -                                           \\
$\IS{k}{01}$          & cf. next column                          & $(x_1 \rightarrow x_2), (x_1 \lor \dots \lor x_n), k \geq n \geq 0$                                    & -                                           \\
$\IS{}{0}$            & cf. next column                          & $(x_1 \lor \dots \lor x_n), n \geq 0, (x_1 = x_2)$                                                     & positive ($\pos$)                           \\
$\IS{k}{0}$           & cf. next column                          & $(x_1 \lor \dots \lor x_n), k \geq n \geq 0, (x_1 = x_2)$                                              & positive ($\pos$) of width $k$              \\

$\IR_2$               & $x_1,  \neg x_2$                         & $(x_1), (\neg x_1), (x_1 = x_2)$                                                                       & -                                           \\
$\IR_1$               & $x_1$                                    & $(x_1), (x_1 = x_2)$                                                                                   & -                                           \\
$\IR_0$               & $\neg x_1$                               & $(\neg x_1), (x_1 = x_2)$                                                                              & -                                           \\
$\IR$ ($\IBF$)        & $\emptyset$                              & $(x_1 = x_2)$                                                                                          & -                                           \\\bottomrule
\end{tabular}}	
\caption{Overview of bases \cite{BohlerRSV05} and clause descriptions \cite{NordhZ08} for co-clones, where \textrm{EVEN}$^4$ = $x_1 \oplus x_2 \oplus x_3 \oplus x_4 \oplus 1$.}%
\label{tab:bases}
\end{table*}
\fi


\bibliographystyle{eptcs}
\bibliography{ref}

\end{document}